\documentclass{birkau}
\usepackage{amsmath}
\usepackage{amssymb,url}
\usepackage{amsthm}
\usepackage{latexsym}
\usepackage{epsfig}
\usepackage{latexsym}
\usepackage{verbatim}
\usepackage{enumitem}
\numberwithin{equation}{section}
\begin{document}
%%%%%%%%%%%%%%%%%%%%%%%%%%%%%%%%%
% Enviroments
\theoremstyle{plain}
\newtheorem{theorem}{Theorem}[section]
\newtheorem{corollary}[theorem]{Corollary}
\newtheorem{prop}[theorem]{Proposition}
\newtheorem{problem}[theorem]{Problem}
\newtheorem{lemma}[theorem]{Lemma}
\newtheorem{remark}[theorem]{Remark}
\newtheorem{observation}[theorem]{Observation}
\newtheorem{defin}{Definition}
\newtheorem{example}[theorem]{Example}
\newtheorem{conj}{Conjecture}
\newcommand{\PR}{\noindent {\bf Proof:\ }} % beginning of a
\def\EPR{\hfill $\Box$\linebreak\vskip.5mm} % end of a proof
%%%%%%%%%%%%%%%%%%%%%%%%%%%%%%%%%%%
% Operators
\def\Pol{{\sf Pol}}
\def\mPol{{\sf MPol}}
\def\Polo{{\sf Pol}_1}
\def\PPol{{\sf pPol\;}}
\def\Inv{{\sf Inv}}
\def\mInv{{\sf MInv}}
\def\Clo{{\sf Clo}\;}
\def\Con{{\sf Con}}
\def\concom{{\sf Concom}\;}
\def\End{{\sf End}\;}
\def\Sub{{\sf Sub}}
\def\Im{{\sf Im}}
\def\Ker{{\sf Ker}\;}
\def\H{{\sf H}}
\def\S{{\sf S}}
\def\D{{\sf P}}
\def\I{{\sf I}}
\def\Var{{\sf var}}
\def\PVar{{\sf pvar}}
\def\fin#1{{#1}_{\rm fin}}
\def\P{{\sf P}}
\def\Pfin{{\sf P_{\rm fin}}}
\def\Id{{\sf Id}}
\def\R{{\rm R}}
\def\F{{\rm F}}
\def\Term{{\sf Term}}
\def\var#1{{\sf var}(#1)}
\def\Sg#1{{\sf Sg}(#1)}
\def\Sgo#1{\Sgg\zA{#1}}
\def\Sgn#1{\Sgg{\zA'}{#1}}
\def\Sgg#1#2{{\sf Sg}_{#1}(#2)}
\def\Cg#1{{\sf Cg}(#1)}
\def\Cgg#1#2{{\sf Cg}_{#1}(#2)}
\def\tol{{\sf tol}}
\def\lnk{{\sf lk}}
\def\rbcomp#1{{\sf rbcomp}(#1)}
%%%%%%%%%%%%%%%%%%%%%%%%%%%%%%%%%%
% Operations
\let\cd=\cdot
\let\eq=\equiv
\let\op=\oplus
\let\omn=\ominus
\let\meet=\wedge
\let\join=\vee
\let\tm=\times
\def\ldiv{\mathbin{\backslash}}
\def\rdiv{\mathbin/}
%%%%%%%%%%%%%%%%%%%%%%%%%%%%%%%%%%
% Tame congruence
\def\typ{{\sf typ}}
\def\zz{{\un 0}}
\def\zo{{\un 1}}
\def\one{{\bf1}}
\def\two{{\bf2}}
\def\three{{\bf3}}
\def\four{{\bf4}}
\def\five{{\bf5}}
\def\pq#1{(\zz_{#1},\mu_{#1})}
%%%%%%%%%%%%%%%%%%%%%%%%%%%%%%%%%%%
% Accents and so
\let\wh=\widehat
\def\ox{\ov x}
\def\oy{\ov y}
\def\oz{\ov z}
\def\of{\ov f}
\def\oa{\ov a}
\def\ob{\ov b}
\def\oc{\ov c}
\def\od{\ov d}
\def\oob{\ov{\ov b}}
\def\rx{{\rm x}}
\def\rf{{\rm f}}
\def\rrm{{\rm m}}
\let\un=\underline
\let\ov=\overline
\let\cc=\circ
\let\rb=\diamond
\def\ta{{\tilde a}}
\def\tz{{\tilde z}}
%%%%%%%%%%%%%%%%%%%%%%%%%%%%%%%%%%%%%%%%%%
% Abbreviations for algebras and clones
\def\zZ{{\mathbb Z}}
\def\B{{\mathcal B}}
\def\P{{\mathcal P}}
\def\zA{{\mathbb A}}
\def\zB{{\mathbb B}}
\def\zC{{\mathbb C}}
\def\zD{{\mathbb D}}
\def\zE{{\mathbb E}}
\def\zF{{\mathbb F}}
\def\zG{{\mathbb G}}
\def\zK{{\mathbb K}}
\def\zL{{\mathbb L}}
\def\zM{{\mathbb M}}
\def\zN{{\mathbb N}}
\def\zQ{{\mathbb Q}}
\def\zR{{\mathbb R}}
\def\zS{{\mathbb S}}
\def\zT{{\mathbb T}}
\def\zW{{\mathbb W}}
\def\bK{{\bf K}}
\def\C{{\bf C}}
\def\M{{\bf M}}
\def\E{{\bf E}}
\def\N{{\bf N}}
\def\O{{\bf O}}
\def\bN{{\bf N}}
\def\bX{{\bf X}}
\def\GF{{\rm GF}}
\def\cC{{\mathcal C}}
\def\cA{{\mathcal A}}
\def\cB{{\mathcal B}}
\def\cD{{\mathcal D}}
\def\cE{{\mathcal E}}
\def\cF{{\mathcal F}}
\def\cG{{\mathcal G}}
\def\cH{{\mathcal H}}
\def\cI{{\mathcal I}}
\def\cK{{\mathcal K}}
\def\cL{{\mathcal L}}
\def\cP{{\mathcal P}} 
\def\cR{{\mathcal R}}
\def\cRY{{\mathcal RY}}
\def\cS{{\mathcal S}}
\def\cT{{\mathcal T}}
\def\oB{{\ov B}}
\def\oC{{\ov C}}
\def\ooB{{\ov{\ov B}}}
\def\ozB{{\ov{\zB}}}
\def\ozD{{\ov{\zD}}}
\def\ozG{{\ov{\zG}}}
\def\tcA{{\widetilde\cA}}
\def\tcC{{\widetilde\cC}}
\def\tcF{{\widetilde\cF}}
\def\tcI{{\widetilde\cI}}
\def\tB{{\widetilde B}}
\def\tC{{\widetilde C}}
\def\tD{{\widetilde D}}
\def\ttB{{\widetilde{\widetilde B}}}
\def\ttC{{\widetilde{\widetilde C}}}
\def\tba{{\tilde\ba}}
\def\ttba{{\tilde{\tilde\ba}}}
\def\tbb{{\tilde\bb}}
\def\ttbb{{\tilde{\tilde\bb}}}
\def\tbc{{\tilde\bc}}
\def\tbd{{\tilde\bd}}
\def\tbe{{\tilde\be}}
\def\tbt{{\tilde\bt}}
\def\tbu{{\tilde\bu}}
\def\tbv{{\tilde\bv}}
\def\tbw{{\tilde\bw}}
\def\tdl{{\tilde\dl}}
\def\ocP{{\ov\cP}}
\def\tzA{{\widetilde\zA}}
\def\tzC{{\widetilde\zC}}
\def\new{{\mbox{\footnotesize new}}}
\def\old{{\mbox{\footnotesize old}}}
\def\prev{{\mbox{\footnotesize prev}}}
\def\oo{{\mbox{\sf\footnotesize o}}}
\def\pp{{\mbox{\sf\footnotesize p}}}
\def\nn{{\mbox{\sf\footnotesize n}}}
\def\oR{{\ov R}}
\def\bA{\mathbf{R}}
%%%%%%%%%%%%%%%%%%%%%%%%%%%%%%%%%%%%%%%
% Abbreviations for varieties
\def\gA{{\mathfrak A}}
\def\gV{{\mathfrak V}}
\def\gS{{\mathfrak S}}
\def\gK{{\mathfrak K}}
\def\gH{{\mathfrak H}}
%%%%%%%%%%%%%%%%%%%%%%%%%%%%%%%%%%%%%%%%
% Vectors
\def\ba{{\bf a}}
\def\bb{{\bf b}}
\def\bc{{\bf c}}
\def\bd{{\bf d}}
\def\be{{\bf e}}
\def\bbf{{\bf f}}
\def\bg{{\bf g}}
\def\bh{{\bf h}}
\def\bi{{\bf i}}
\def\bm{{\bf m}}
\def\bo{{\bf o}}
\def\bp{{\bf p}}
\def\bs{{\bf s}}
\def\bu{{\bf u}}
\def\bt{{\bf t}}
\def\bv{{\bf v}}
\def\bx{{\bf x}}
\def\by{{\bf y}}
\def\bw{{\bf w}}
\def\bz{{\bf z}}
\def\ga{{\mathfrak a}}
\def\oal{{\ov\al}}
\def\obeta{{\ov\beta}}
\def\ogm{{\ov\gm}}
\def\oep{{\ov\varepsilon}}
\def\oeta{{\ov\eta}}
\def\oth{{\ov\th}}
\def\ovm{{\ov\mu}}
\def\ozero{{\ov0}}
%%%%%%%%%%%%%%%%%%%%%%%%%%%%%%%%%%%%%%%%
% Constraint satisfaction Problem
\def\CCSP{\hbox{\rm c-CSP}}
\def\CSP{{\rm CSP}}
\def\NCSP{{\rm \#CSP}}
\def\mCSP{{\rm MCSP}}
\def\FP{{\rm FP}}
\def\PTIME{{\bf PTIME}}
\def\GS{\hbox{($*$)}}
\def\ry{\hbox{\rm r+y}}
\def\rb{\hbox{\rm r+b}}
\def\Gr#1{{\mathrm{Gr}(#1)}}
\def\Grp#1{{\mathrm{Gr'}(#1)}}
\def\Grpr#1{{\mathrm{Gr''}(#1)}}
\def\Scc#1{{\mathrm{Scc}(#1)}}
\def\rel{R}
\def\relo{Q}
\def\rela{S}
\def\dep{\mathsf{dep}}
\def\Filt{\mathrm{Ft}}
\def\Filts{\mathrm{Fts}}
\def\Agr{$\mathbb{A}$}
\def\Al{\mathrm{Alg}}
\def\Sig{\mathrm{Sig}}
\def\strat{\mathsf{strat}}
\def\relmax{\mathsf{relmax}}
\def\srelmax{\mathsf{srelmax}}
\def\Meet{\mathsf{Meet}}
\def\amax{\mathsf{amax}}
\def\umax{\mathsf{umax}}
\def\as{\mathsf{as}}
\def\star{\hbox{$(*)$}}
\def\bmal{{\mathbf m}}
\def\Af{\mathsf{Af}}
\let\sqq=\sqsubseteq

%%%%%%%%%%%%%%%%%%%%%%%%%%%%%%%%%%%%%%%%
% Mathematical abbreviations
\let\sse=\subseteq
\def\ang#1{\langle #1 \rangle}
\def\angg#1{\left\langle #1 \right\rangle}
\def\dang#1{\ang{\ang{#1}}}
\def\vc#1#2{#1 _1\zd #1 _{#2}}
\def\tms#1#2{#1 _1\tm\dots\tm #1 _{#2}}
\def\zd{,\ldots,}
\let\bks=\backslash
\def\red#1{\vrule height7pt depth3pt width.4pt
\lower3pt\hbox{$\scriptstyle #1$}}
\def\fac#1{/\lower2pt\hbox{$\scriptstyle #1$}}
\def\me{\stackrel{\mu}{\eq}}
\def\nme{\stackrel{\mu}{\not\eq}}
\def\eqc#1{\stackrel{#1}{\eq}}
\def\cl#1#2{\arraycolsep0pt
\left(\begin{array}{c} #1\\ #2 \end{array}\right)}
\def\cll#1#2#3{\arraycolsep0pt \left(\begin{array}{c} #1\\ #2\\
#3 \end{array}\right)}
\def\clll#1#2#3#4{\arraycolsep0pt
\left(\begin{array}{c} #1\\ #2\\ #3\\ #4 \end{array}\right)}
\def\cllll#1#2#3#4#5#6{ \left(\begin{array}{c} #1\\ #2\\ #3\\
#4\\ #5\\ #6 \end{array}\right)}
\def\pr{{\rm pr}}
\let\upr=\uparrow
\def\ua#1{\hskip-1.7mm\uparrow^{#1}}
\def\sua#1{\hskip-0.2mm\scriptsize\uparrow^{#1}}
\def\lcm{{\rm lcm}}
\def\perm#1#2#3{\left(\begin{array}{ccc} 1&2&3\\ #1 
\end{array}\right)}
\def\w{$\wedge$}
\let\ex=\exists
\def\NS{{\sc (No-G-Set)}}
\def\lev{{\sf lev}}
\let\rle=\sqsubseteq
\def\ryle{\le_{ry}}
\def\ryprec{\le_{ry}}
\def\os{\mbox{[}}
\def\zs{\mbox{]}}
\def\link{{\sf link}}
\def\solv{\stackrel{s}{\sim}}
\def\mal{\mathbf{m}}
\def\precs{\prec_{as}}
\def\maj{\mathrm{maj}}
\let\dg=\dagger

%%%%%%%%%%%%%%%%%%%%%%%%%%%%%%%%%%%%%%%%%%%
% Other abbreviations
\def\lb{$\linebreak$}
%%%%%%%%%%%%%%%%%%%%%%%%%%%%%%%%%%%%%%%%%%%
% Functions
\def\ar{\hbox{ar}}
\def\Im{{\sf Im}\;}
\def\deg{{\sf deg}}
\def\id{{\rm id}}
%%%%%%%%%%%%%%%%%%%%%%%%%%%%%%%%%%%%%%%%%%
% Greek symbols
\let\al=\alpha
\let\gm=\gamma
\let\dl=\delta
\let\ve=\varepsilon
\let\ld=\lambda
\let\om=\omega
\let\vf=\varphi
\let\vr=\varrho
\let\th=\theta
\let\sg=\sigma
\let\Gm=\Gamma
\let\Dl=\Delta
%%%%%%%%%%%%%%%%%%%%%%%%%%%%%%%%%%%%%%%%%%%
% Fonts and special symbols
\font\tengoth=eufm10 scaled 1200
\font\sixgoth=eufm6
\def\goth{\fam12}
\textfont12=\tengoth
\scriptfont12=\sixgoth
\scriptscriptfont12=\sixgoth
\font\tenbur=msbm10
\font\eightbur=msbm8
\def\bur{\fam13}
\textfont11=\tenbur
\scriptfont11=\eightbur
\scriptscriptfont11=\eightbur
\font\twelvebur=msbm10 scaled 1200
\textfont13=\twelvebur
\scriptfont13=\tenbur
\scriptscriptfont13=\eightbur
\mathchardef\nat="0B4E
\mathchardef\eps="0D3F
%%%%%%%%%%%%%%%%%%%%%%%%%%%%%%%%%%%%%%%%%%%%%
%%%%%%%%%%%%%%%%%%%%%%%%%%%%%%%%%%%%%%%%%%%%%
\title[Graphs of finite algebras I]{Graphs of finite algebras: edges, and connectivity}
\corrauthor[A. A. Bulatov]{Andrei A.\ Bulatov}
\address{School of Computing Science, Simon Fraser University, Burnaby, Canada}
\urladdr{www.cs.sfu.ca/~abulatov}
\email{abulatov@cs.sfu.ca}
\thanks{This work was supported by an NSERC Discovery grant. }
\subjclass{08A05, 08A40, 08A70}
\keywords{finite algebras, local structure, graph of algebra, constraint satisfaction problem}

\begin{abstract}
We refine and advance the study of the local structure of idempotent finite 
algebras started in [A.Bulatov, \emph{The Graph of a Relational Structure 
and Constraint Satisfaction Problems}, LICS, 2004]. We introduce a
graph-like structure on an arbitrary finite idempotent algebra including those 
admitting type \one. We show that this graph is connected, its edges can 
be classified into 4 types corresponding to the local behavior (set, semilattice, 
majority, or affine) of certain term operations. We also show that if the variety 
generated by the algebra omits type \one, then the structure of the algebra 
can be `improved' without introducing type \one\ by choosing
an appropriate reduct of the original algebra. Taylor minimal idempotent 
algebras introduced recently are a special case of such reducts. Then we 
refine this structure demonstrating that the edges of the graph of an algebra 
omitting type \one\ can be made `thin', 
that is, there are term operations that behave very similar to semilattice, 
majority, or affine operations on 2-element subsets of the algebra. Finally, 
we prove certain connectivity properties of the refined structures.

This research is motivated by the study of the Constraint Satisfaction Problem, 
although the problem itself does not really show up in this paper.
\end{abstract}
\maketitle

%%%%%%%%%%%%%%%%%%%%%%%%%%%%%%%%%%%%
%%%%%%%%%%%%%%%%%%%%%%%%%%%%%%%%%%%%
\section{Introduction}\label{sec:introduction}

% about the 2004 paper
The study of the Constraint Satisfaction Problem (CSP) and especially the 
Dichotomy Conjecture triggered a wave of research in universal algebra, as it 
turns out that the algebraic approach to the CSP developed in 
\cite{Bulatov05:classifying,Jeavons97:closure}
is the most prolific one in this area. These developments have led to a number 
of strong results about the CSP, see, e.g., 
\cite{Barto11:conservative,Barto14:local,Barto12:near,Bulatov06:3-element,% 
Bulatov11:conservative,Bulatov14:conservative,Bulatov06:simple,%
Idziak10:few}, and ultimately the resolution of the Dichotomy Conjecture
in \cite{Bulatov17:dichotomy,Zhuk17:proof,Zhuk18:modification}. 
However, successful application of the algebraic approach also 
requires new results about the structure of finite algebras. Two ways to describe 
this structure have been proposed. One is based on absorption properties 
\cite{Barto15:constraint,Barto12:absorbing} and has led not only to new 
results on the CSP, but also to significant developments in universal algebra itself.

In this paper we refine and advance the alternative approach originally 
developed in \cite{Bulatov04:graph,Bulatov11:conjecture,Bulatov08:recent}, 
which is based on the local structure of finite algebras. This
approach identifies subalgebras or factors of an algebra having `good' term 
operations, that is, operations of one of the three types: semilattice, majority, 
or affine. It then explores the graph or hypergraph formed by such subalgebras, 
and exploits its connectivity properties. In a nutshell, this method stems from 
the early study of the CSP over so called conservative algebras 
\cite{Bulatov11:conservative}, and has led to a much simpler proof of the 
Dichotomy Conjecture for conservative algebras \cite{Bulatov16:conservative} 
and to a characterization of CSPs solvable by consistency algorithms 
\cite{Bulatov09:bounded}. In spite of these applications the original method 
suffer from a number of drawbacks that make its use difficult. 

In the present paper we refine many of the constructions, generalize them to 
include algebras admitting type \one, and fix the deficiencies of the original
method. Similar to \cite{Bulatov04:graph,Bulatov08:recent} an edge is a pair of 
elements $a,b$ such that there is a factor algebra of the subalgebra generated 
by $a,b$ that has an operation which is semilattice, majority, or affine on the 
blocks containing $a,b$. Here we extend the definition of an edge by allowing 
edges to have the unary type, when the corresponding factor is a set. These
operations, or lack thereof, determine the type of edge $ab$. In this paper 
we also allow edges to have more than one type if there are several factors 
witnessing different types. 

The main difference from the previous results is the 
introduction of oriented \emph{thin} majority and affine edges.
An edge $ab$ is said to be thin if there is a term operation that is semilattice 
on $\{a,b\}$, or there is a term operation that satisfies the identities of a 
majority or affine term on $\{a,b\}$, but only for some choices of values of 
the variables. In particular, for a thin majority edge we require the 
existence of a term operation $m$ such that $m(a,b,b)=m(b,a,b)=m(b,b,a)=b$,
and for a thin affine edge we require that there is a term operation $d$ 
with $d(b,a,a)=d(a,a,b)=b$, but $m,d$ do not have to satisfy similar conditions when $a,b$ are swapped. Oriented thin edges allow us to prove a stronger 
version of the connectivity of the graph related to an algebra.
This updated approach makes it possible to give a much simpler proof of the
result of \cite{Bulatov09:bounded} (see also \cite{Barto14:local}), and, 
eventually, proving the Dichotomy Conjecture 
\cite{Bulatov17:dichotomy,Bulatov17:dichotomy-full}, see also
\cite{Zhuk17:proof,Zhuk17:proof-arxiv,Zhuk18:modification}, however, 
this is a subject of subsequent papers.

%%%%%%%%%%%%%%%%%%%%%%%%%%%%%%%%%%%
%%%%%%%%%%%%%%%%%%%%%%%%%%%%%%%%%%%
\section{Preliminaries}\label{sec:preliminaries}

% use the standard books, TCT
In terminology and notation we mainly follow the standard texts on
universal algebra \cite{Burris81:universal,Mckenzie87:algebras}.
We also assume some familiarity with the basics of the tame
congruence theory \cite{Hobby88:structure}, although its use is
restricted to some proofs in the first two sections.
All algebras in this paper are assumed to be finite and idempotent.

% notation: tuples, relations, projections, subalgebras, operations
Algebras will be denoted by $\zA,\zB$, etc. The subalgebra of an
algebra $\zA$ generated by a set $B\sse\zA$ is denoted $\Sgg\zA B$,
or if $\zA$ is clear from the context simply by $\Sg B$. The set
of term operations of algebra $\zA$ is denoted by $\Term(\zA)$.
Subalgebras of direct products are often considered as relations.
An element (a tuple) from $\tms\zA n$ is denoted in boldface, say,
$\ba$, and its $i$th component is referred to as
$\ba[i]$, that is, $\ba=(\ba[1]\zd\ba[n])$. The set $\{1\zd n\}$ will be
denoted by $[n]$. For $I\sse[n]$, say, $I=\{\vc ik\}$, $i_1<\dots<i_k$,
by $\pr_I\ba$ we denote the $k$-tuple $(\ba[i_1]\zd\ba[i_k])$, and for
$\rel\sse\tms\zA n$ by $\pr_I\rel$ we denote the set
$\{\pr_I\ba\mid\ba\in\rel\}$. If $I=\{i\}$ or $I=\{i,j\}$ we write
$\pr_i,\pr_{ij}$ rather than $\pr_I$. The tuple $\pr_I\ba$ and
relation $\pr_I\rel$ are called the \emph{projections} of, respectively,
$\ba$ and $\rel$ on $I$.
A subalgebra (a relation) $\rel$ of $\tms\zA n$ is said to be a
\emph{subdirect product} of $\vc\zA n$ if $\pr_i\rel=\zA_i$
for every $i\in[n]$. For a congruence $\al$ of $\zA$, $a\in\zA$, and term
operation $f$ of $\zA$, by $a\fac\al$ we denote the $\al$-block containing
$a$, by $\zA\fac\al$ the factor algebra modulo $\al$, and by $f\fac\al$
the factor operation of $f$, that is, the operation on $\zA\fac\al$ given
by $f\fac\al(a_1\fac\al\zd a_n\fac\al)=f(\vc an)\fac\al$. For $B\sse\zA^2$,
the congruence generated by $B$ will be denoted by $\Cgg\zA B$
or just $\Cg B$. By $\zz_\zA,\zo_\zA$ we denote the least (i.e., the
equality relation), and the greatest (i.e., the total relation)
congruences of $\zA$, respectively. Again, we often simplify this
notation to $\zz,\zo$.

An algebra is called a \emph{set} if all of its term operations are
projections.

Recall that a {\em tolerance} of $\zA$ is a binary reflexive and
symmetric relation compatible with $\zA$. The transitive closure
of a tolerance is a congruence of $\zA$. Let $\al,\beta$ be
congruences of $\zA$ such that $\al\prec\beta$, that is,
$\al<\beta$ and $(\al,\beta)$ is a prime interval. Then a
\emph{$(\al,\beta)$-minimal set} $U$ is a subset of $\zA$, minimal
with respect to inclusion among subsets of the form $U=f(\zA)$, for
some unary polynomial $f$ of $\zA$ such that $f(\beta)\not\sse\al$.
If this is the case then $f$ can also be chosen
to be idempotent. The intersection of $U$ with a $\beta$-block
is a \emph{trace} if it overlaps with more than one $\al$-block.
Any two $(\al,\beta)$-minimal sets $U,V$ are \emph{polynomially
isomorphic}, that is, there are unary polynomials $g,h$ such that
$g(U)=V$, $h(V)=U$, and $g^{-1}\circ h, h^{-1}\circ g$ are identity
transforations on $V$ and $U$, respectively. The prime factor 
$(\al,\beta)$ has type
\one,\two,\three,\four, or \five\ if every $(\al,\beta)$-trace
$U\fac\al$ is polynomially equivalent to an algebra with only unary operations,
1-dimensional vector space, 2-element Boolean algebra, lattice,
or semilattice, respectively. The type of $(\al,\beta)$ is denoted
by $\typ(\al,\beta)$.

If $\al$ is a congruence
of an algebra $\zA$ and $\rel$ is a compatible binary relation,
then the {\em $\al$-closure} of $\rel$ is
defined to be $\al\circ\rel\circ\al$. A relation equal to its
$\al$-closure is said to be {\em $\al$-closed}. If $(\al,\beta)$ is a prime
factor of $\zA$, then the {\em basic tolerance for} $(\al,\beta)$ (see
\cite{Hobby88:structure}, Chapter~5) is the $\al$-closure of the
relation $\al\cup\bigcup\{N^2\mid N$ is an $(\al,\beta)$-trace$\}$
if $\typ(\al,\beta)\in\{\two,\three\}$, and it is the
$\al$-closure of the compatible relation generated by
$\al\cup\bigcup\{N^2\mid N$ is an $(\al,\beta)$-trace$\}$ if
$\typ(\al,\beta)\in\{\four,\five\}$. The basic tolerance is the
smallest $\al$-closed tolerance $\tau$ of $\zA$ such that
$\al\ne\tau\sse\beta$.

Let $(\al,\beta)$ be a prime factor of $\zA$. An {\em
$(\al,\beta)$-quasi-order} is a compatible reflexive and
transitive relation $\rel$ such that $\rel\cap\rel^{-1}=\al$, and the
transitive closure of $\rel\cup\rel^{-1}$ is $\beta$. The quotient
$(\al,\beta)$ is said to be {\em orderable} if there exists an
$(\al,\beta)$-quasi-order. By Theorem~5.26 of
\cite{Hobby88:structure}, if $\typ(\al,\beta)\ne\one$, then $(\al,\beta)$ is
orderable if and only if $\typ(\al,\beta)\in\{\four,\five\}$.

%%%%%%%%%%%%%%%%%%%%%%%%%%%%%%%%%%%%
%%%%%%%%%%%%%%%%%%%%%%%%%%%%%%%%%%%%
\section{Graph: Thick edges}\label{sec:thick}

After recalling several basic facts about idempotent algebras, we start 
with introducing `thick' edges, one of the main constructions of this paper.

%%%%%%%%%%%%%%%%%%%%%%%%%%%%%%%%
\subsection{Basic facts about idempotent algebras}

An algebra $\zA=(A;F)$ is \emph{idempotent} if $f(x\zd x)=x$
for every $f\in F$. The algebra $\zA'=(A,F')$ where $F'$ is the set 
of all idempotent operations from $\Term(\zA)$ is said to be the 
\emph{full idempotent reduct} of $\zA$. Let $\tau$ be a tolerance. A set
$B\sse\zA$ maximal with respect to inclusion and such that $B^2\sse
\tau$ is said to be a {\em class} of $\tau$. We will need the following
simple observation.
\begin{lemma}\label{lem:tol-class}
Every class of a tolerance of an idempotent algebra is a subalgebra.
\end{lemma}

\begin{proof}
Let $\tau$ be a tolerance of an idempotent algebra $\zA$ and $B$ its
class. Then, for any $a\not\in B$, there is $b\in B$ such that
$(a,b)\not\in\tau$. If $B$ is not a subuniverse, then, for a certain
term operation $f$ of $A$ and $\vc bn\in B$, we have
$a=f(\vc bn)\not\in B$. Take $b\in B$ such that $(a,b)\not\in\tau$.
Since $(b_1,b)\zd(b_n,b)\in\tau$, we get
$(f(\vc bn),f(b\zd b))=(a,b)\in\tau$, a contradiction.
\end{proof}

Let $G=(V,E)$ be a hypergraph. A {\em path} in $G$
is a sequence $\vc Hk$ of hyperedges such that
$H_i\cap H_{i+1}\ne\eps$, for $1\le i< k$. The hypergraph $G$
is said to be {\em connected} if, for any $a,b\in V$, there is a
path $\vc Hk$ such that $a\in H_1$, $b\in H_k$.

The universe of an algebra $\zA$ along with the family of all
of its proper subalgebras form a hypergraph denoted by
$\cH(\zA)$. Lemma~\ref{lem:tol-class} implies that, for a simple
idempotent algebra $\zA$, the hypergraph $\cH(\zA)$ is connected
unless $\zA$ is tolerance free. In the latter case it can be disconnected.

Recall that an element $a$ of an algebra $\zA$
is said to be {\em absorbing} if whenever $t(x,\vc yn)$ is an
$(n+1)$-ary term operation of $\zA$ such that $t$ depends on $x$ and
$\vc bn\in\zA$, then $t(a,\vc bn)=a$. A congruence $\th$ of
$\zA^2$ is said to be {\em skew} if it not a product congruence of the form $\th_1\tm\th_2=\{((a_1,a_2),(b_1,b_2))\mid (a_1,b_1)\in\th_1, (a_2,b_2)\in\th_2\}$ for $\th_1,\th_2\in\Con(\zA)$. In particular, if $\zA$ is simple it happens if it is the kernel of no projection
mapping of $\zA^2$ onto its factors. The following theorem due to
Kearnes \cite{Kearnes96:idempotent} provides some information
about the structure of simple idempotent algebras.

\begin{theorem}[\cite{Kearnes96:idempotent}]\label{the:simple}
Let $\zA$ be a simple idempotent algebra. Then exactly
one of the following holds:
\begin{itemize}
\item[(a)]
$\zA$ is abelian and is term equivalent to a subalgebra of a reduct of
a module;
\item[(b)]
$\zA$ has an absorbing element;
\item[(c)]
$\zA^2$ has no skew congruence.
\end{itemize}
\end{theorem}

Since $\zA$ is idempotent, in most cases in item (a) rather than working 
with a module $\zM$ we use its full idempotent reduct. Slightly abusing the 
terminology we will call such a reduct simply by a module.

Option (a) of Theorem~\ref{the:simple} can be made more precise. In
\cite{Valeriote90:finite} Valeriote proved that every finite simple abelian
idempotent algebra is strictly simple, that is, does not have proper 
subalgebras containing more than one element. It then follows from
the result of Szendrei \cite{Szendrei87:simple} that every such algebra
is either a set or a module. 

\begin{prop}\label{pro:simple-abelian}
Let $\zA$ be a simple finite idempotent Abelian algebra. Then $\zA$
is either a set, or a module.
\end{prop}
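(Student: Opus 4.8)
The plan is to deduce Proposition~\ref{pro:simple-abelian} from the structural results just cited, namely Theorem~\ref{the:simple}(a), Valeriote's strict simplicity result, and Szendrei's classification of strictly simple idempotent algebras. First I would invoke Theorem~\ref{the:simple}: since $\zA$ is simple, idempotent and Abelian, case (b) is excluded (an absorbing element in an idempotent algebra with at least two elements easily yields a nontrivial congruence, contradicting simplicity, or more directly an algebra with an absorbing element is not Abelian unless trivial), and likewise the Abelian hypothesis rules out case (c) for nontrivial $\zA$; so we land in case (a), and $\zA$ is term equivalent to a subalgebra of (the idempotent reduct of) a module $\zM$ from the variety generated by $\zA$, with $\zA$ generating the abelian group of $\zM$.

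Next I would apply Valeriote's theorem \cite{Valeriote90:finite}: every finite simple Abelian idempotent algebra is \emph{strictly simple}, i.e.\ has no proper subalgebra with more than one element. Combined with idempotence, this says that for any $a,b\in\zA$ the subalgebra $\Sg{\{a,b\}}$ is either $\{a,b\}$-trivial (forcing $a=b$) or all of $\zA$; in particular $\Sg{\{a,b\}}=\zA$ for any two distinct $a,b$, so $\zA$ is generated by any two of its elements.

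Then I would quote Szendrei's classification \cite{Szendrei87:simple} of strictly simple idempotent algebras (equivalently, of finite simple idempotent Abelian algebras that are strictly simple): each such algebra is term equivalent either to a $2$-element set or, more generally, to a set, or to a module (the affine algebra over a simple module / $\zZ_p$-vector space structure). The Abelian hypothesis eliminates the non-Abelian entries in Szendrei's list (such as those based on a majority or a $2$-element semilattice), leaving exactly the two surviving possibilities: $\zA$ is a set or $\zA$ is a module. This yields the statement.

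The main obstacle is not really any single computation but making sure the invocation of Szendrei's and Valeriote's results is applied under exactly the right hypotheses: one must check that ``Abelian'' in the sense used here (via $\zA$ being a subreduct of a module, as delivered by Theorem~\ref{the:simple}(a)) matches the hypothesis under which strict simplicity and the classification were proved, and that the finiteness and idempotence assumptions are in force. Once the vocabulary is aligned, the proof is essentially a citation chain: Theorem~\ref{the:simple}(a) $\Rightarrow$ Valeriote $\Rightarrow$ strictly simple $\Rightarrow$ Szendrei $\Rightarrow$ set or module. I would therefore keep the written proof to a few lines, spelling out only why cases (b) and (c) of Theorem~\ref{the:simple} cannot occur for a nontrivial Abelian $\zA$ and why the non-module, non-set items of Szendrei's list are non-Abelian.
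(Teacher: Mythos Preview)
Your proposal is correct and matches the paper's argument, which is exactly the citation chain Valeriote \cite{Valeriote90:finite} $\Rightarrow$ strictly simple $\Rightarrow$ Szendrei \cite{Szendrei87:simple} $\Rightarrow$ set or module, given in the paragraph immediately preceding the proposition (the proposition itself carries no separate proof). Your initial detour through Theorem~\ref{the:simple} to rule out cases (b) and (c) is unnecessary---the paper applies Valeriote's and Szendrei's results directly to the hypothesis ``simple finite idempotent Abelian''---but it does no harm.
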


%%%%%%%%%%%%%%%%%%%%%%%%%%%%%%%%%%%
\subsection{The four types of edges}\label{sec:three-types}

Let $\zA$ be an algebra with universe $A$.
We introduce graph $\cG(\zA)$ as follows: The vertex set of $\cG(\zA)$
is the set $A$. A pair $ab$ of vertices is an \emph{edge} if and only if
there exists a congruence $\th$ of $\Sg{a,b}$ such that either
$\Sg{a,b}\fac\th$ is a set, or there is a term operation $f$ of $\zA$ such
that either $f\fac\th$ is an affine operation on $\Sg{a,b}\fac\th$, or
$f\fac\th$ is a semilattice operation on $\{a\fac\th,b\fac\th\}$, or $f\fac\th$
is a majority operation on $\{a\fac\th,b\fac\th\}$.

If there exists a congruence $\th$ and a term operation $f$ of $\zA$ such that
$f\fac\th$ is a semilattice operation on $\{a\fac\th,b\fac\th\}$ then $ab$ is
said to have the {\em semilattice type}. Edge $ab$ is of the
{\em majority type} if there are a congruence $\th$ and $f\in\Term(\zA)$
such that $f\fac\th$ is a majority operation on $\{a\fac\th,b\fac\th\}$,
but no operation is semilattice on $\{a\fac\th,b\fac\th\}$. Pair $ab$ has 
the {\em affine type}
if there are a congruence $\th$ and $f\in\Term(\zA)$
such that $\Sg{a,b}\fac\th$ is a module and $f\fac\th$ is its 
affine operation $x-y+z$. Finally, $ab$ is of the
\emph{unary type} if $\Sg{a,b}\fac\th$ is a set (in which case $\Sg{a,b}\fac\th=\{a\fac\th,b\fac\th\}$). In all cases we say that
congruence $\th$ and operation $f$ \emph{witness} the type of edge $ab$. 
As is easily seen, congruence $\th$ can always be chosen to be a maximal congruence of $\Sg{a,b}$ if $ab$ is a unary, semilattice, or affine edge. For majority edges the situation is more complicated as the following example shows. Let $\zA =
(\{a, b, 0\}; f,\meet)$. The operation $\meet$ is the semilattice operation which satisfies $a\meet b =
0$. The operation $f$ is the majority operation on $\{a,b\}$, but if $0\in\{x,y,z\}$, then
$f(x, y, z) = 0$. Clearly, $ab$ is a majority edge with respect to the identity
relation, but with respect to any of the two maximal congruences of $\zA=\Sg{a,b}$ (partitioning $A$ into $\{a,0\},\{b\}$ and $\{a\},\{b,0\}$, respectively), $ab$ is a semilattice edge. 

Note that, for every edge $ab$ of $\cG(\zA)$, there is an associated
pair $a\fac\th,b\fac\th$ from the factor subalgebra. We will need both of these
types of pairs and will sometimes call $a\fac\th,b\fac\th$ a {\em thick} edge
(see Fig.~\ref{fig:edges}). There may be many congruences certifying 
the type of the edge $ab$. In most cases it does not make much difference
which congruence to choose, so we will use just any 
congruence certifying the type of $ab$.
\begin{figure}[t]
\centerline{\includegraphics[scale=.3]{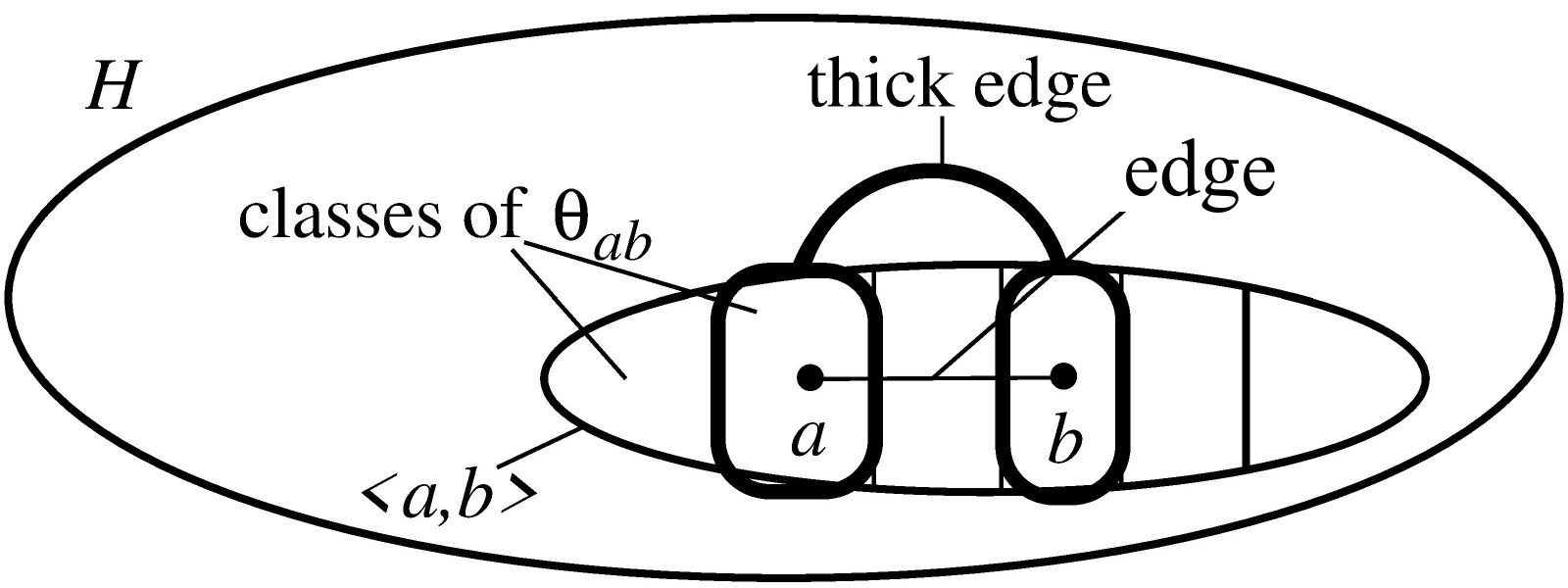}}
\caption{Edges and thick edges}\label{fig:edges}
\end{figure}
Note also that a pair $ab$ may have more than one type witnessed by
different congruences $\th$. 

\begin{example}\label{exa:no-edge}
Let $A=\{a,b,c\}$ and binary operations $f,g$ on $A$ defined 
as follows: $f(a,c)=f(c,a)=g(a,c)=g(c,a)=c$, $f(b,c)=f(c,b)=g(b,c)=g(c,b)=c$, 
$f(a,b)=g(b,a)=c$, and $f(b,a)=b$, $g(a,b)=a$.
Let $\zA=(A,f,g)$, that is, $\zA$ is an `almost' 3-element semilattice with 
respect to each of $f,g$, in which $a,b$ are incomparable, but unlike a
3-element semilattice $\zA$ is simple. As is easily seen, each of the pairs 
$\{a,c\},\{b,c\}$ generates a 2-element subalgebra term equivalent
to a 2-element semilattice, and therefore is an edge of $\zA$
of the semilattice type. On the other hand, $\{a,b\}$ generates $\zA$,
and it can be easily seen that there is no term operation of $\zA$
that is semilattice, majority, or affine on $\{a,b\}$. Therefore $ab$ is 
not an edge.
\end{example}

%%%%%%%%%%%%%%%%%%%%%%%%%%%%%%%%%%%
\subsection{General connectivity}\label{sec:general-connectivity}

The main result of this section is the connectedness of the graph
$\cG(\zA)$ for arbitrary idempotent algebras, and the connection
between omitting types \one\ and \two\ and avoiding edges of certain types.
Recall that for a class $\gA$ of algebras $\H(\gA)$ and $\S(\gA)$
denote the class of all homomorphic images, and the class of
all subalgebras of algebras from $\gA$, respectively. More precisely, we prove
the following

\begin{theorem}\label{the:connectedness}
Let $\zA$ be an idempotent algebra. Then
\begin{itemize}
\item[(1)]
$\cG(\zA)$ is connected;
\item[(2)]
$\H\S(\zA)$ omits type \one\ if and only if $\cG(\zA)$ contains no
edges of the unary type;
\item[(3)]
$\H\S(\zA)$ omits types \one\ and \two\ if and only if $\cG(\zA)$
contains no edges of the unary and affine types.
\end{itemize}
\end{theorem}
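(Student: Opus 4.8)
The plan is to prove the three parts in the order stated, using part (1) as the backbone and then leveraging tame congruence theory for (2) and (3). For part (1), the strategy is induction on $|\zA|$. If $\zA$ is a one-element algebra there is nothing to prove; so assume $|\zA| \geq 2$. First I would reduce to the case where $\zA = \Sg{a,b}$ for the two vertices $a,b$ we want to connect: if $\Sg{a,b} \subsetneq \zA$ then by induction $\cG(\Sg{a,b})$ is connected, and every edge of $\cG(\Sg{a,b})$ is an edge of $\cG(\zA)$ (a witnessing maximal congruence $\th$ of $\Sg{c,d}$ and witnessing term operation $f$ of $\Sg{a,b}$ extend to a witness over $\zA$, since term operations of $\Sg{a,b}$ are restrictions of term operations of $\zA$). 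So assume $\zA$ is generated by $a,b$, and let $\th$ be a maximal congruence of $\zA$. If $a\fac\th = b\fac\th$, then $a,b$ both lie in the proper subalgebra $a\fac\th$ (a $\th$-block, which is a subalgebra by idempotence), and we finish by induction again. So we may assume $a\fac\th \neq b\fac\th$ and that $\zA/\th$ is simple. Now apply Theorem~\ref{the:simple} (via Proposition~\ref{pro:simple-abelian} in the abelian case) to the simple idempotent algebra $\zA/\th$: in case (a) it is a set or a module, so $ab$ is directly an edge (of unary or affine type); in case (b) there is an absorbing element, and one should find a semilattice term on the relevant two-element subset; in case (c), $(\zA/\th)^2$ has no skew congruence, and here one invokes the structural dichotomy for such algebras to produce a majority (or semilattice) term. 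The details of extracting the good term operation in cases (b) and (c) will be the main technical obstacle, and I expect this is where the bulk of the argument lies --- one likely needs Lemma~\ref{lem:tol-class} and the connectivity of $\cH(\zA/\th)$ to walk between $a\fac\th$ and $b\fac\th$ through proper subalgebras when no single good term exists on $\{a\fac\th, b\fac\th\}$ itself, reducing to smaller algebras and appealing to induction.

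\textbf{Parts (2) and (3): the forward direction.} Suppose $\H\S(\zA)$ omits type \one. I claim $\cG(\zA)$ has no unary edge. Suppose to the contrary that $ab$ is a unary edge, witnessed by a maximal congruence $\th$ of $\zB := \Sg{a,b}$ with $\zB/\th$ a set having at least two elements. Then $\zB/\th \in \H\S(\zA)$ is a nontrivial set, hence a simple algebra all of whose term operations are projections; by tame congruence theory such an algebra has type \one\ (its $(\zz,\zo)$-minimal sets are two-element traces polynomially equivalent to a two-element set, since no nontrivial polynomial structure exists). This contradicts the omission of type \one. For part (3), if additionally $\H\S(\zA)$ omits type \two, and $ab$ were an affine edge witnessed by $\th$ with $\zB/\th$ a module, then $\zB/\th$ has a prime quotient of type \two\ (a nontrivial module has a subfactor polynomially equivalent to a one-dimensional vector space), again a contradiction. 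These directions are routine once one is careful about the tame-congruence bookkeeping: the key point is that the factor $\zB/\th$ itself, or a suitable subfactor, lies in $\H\S(\zA)$ and carries the forbidden type.

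\textbf{Parts (2) and (3): the converse direction.} This is the harder direction and I would prove its contrapositive: if $\H\S(\zA)$ \emph{admits} type \one, then $\cG(\zA)$ has a unary edge (and similarly, if it admits type \one\ or \two, then $\cG(\zA)$ has a unary or affine edge). So pick $\zC \in \S(\zA)$ and a surjective homomorphism onto an algebra $\zD$, with $\zD$ having a prime quotient $(\al,\beta)$ of type \one\ (respectively, type \one\ or \two). Using the structure of $(\al,\beta)$-minimal sets and traces from the Preliminaries, one locates a two-element trace, pulls the relevant pair of elements back through the homomorphism and into $\zA$, and shows the subalgebra they generate has a maximal congruence collapsing everything except a two-element set (for type \one) or a module structure (for type \two). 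The delicate part is passing from an arbitrary prime quotient of type \one\ deep inside $\zD$ to a \emph{maximal} congruence of a two-generated subalgebra of $\zA$ witnessing a unary (resp.\ affine) edge --- one may need to iterate: first get some edge of the graph, then use part (1) and an analysis of what types of edges can appear to force a unary or affine one. I expect this pullback-and-refinement step to be the main obstacle in the whole theorem; it is where the interplay between the local (tame congruence) picture and the global graph $\cG(\zA)$ must be made precise, and it likely relies on the fact that a semilattice, majority, or affine term on a two-element factor forces that factor to have type \five, \three\ (or \four), or \two\ respectively, so the absence of a unary edge pushes all local behaviour into types \two--\five.
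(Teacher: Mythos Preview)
Your plan for part~(1) and the forward direction of (2)--(3) is correct and matches the paper's approach essentially line for line: the paper packages the analysis of the simple quotient $\zA/\th$ into a standalone Proposition~\ref{pro:simple} (the four-way split you describe), and then runs the same minimal-counterexample/induction argument you outline. Your identification of case~(c) as the technical heart is accurate; the paper handles it by a further split according to $\typ(\zz,\zo)\in\{\four,\five\}$ (orderable, giving $\cH$-connectivity) versus $\typ(\zz,\zo)=\three$ (a delicate tolerance-free analysis using link tolerances and an automorphism dichotomy).

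Where you go astray is the converse of (2) and (3). You propose to work with $(\al,\beta)$-traces directly and worry that pulling back to a maximal congruence of a two-generated subalgebra is ``the main obstacle in the whole theorem,'' possibly requiring iteration through edges. In fact this direction is almost free once part~(1) is done, and the paper dispatches it in a few lines by \emph{reusing} Proposition~\ref{pro:simple} rather than invoking trace machinery. The argument is: if $\H\S(\zA)$ admits type~\one, take $\zB\in\H\S(\zA)$ with a prime quotient $(\al,\beta)$ of type~\one, factor by $\al$ and restrict to a nontrivial $\beta$-block (staying inside $\H\S(\zA)$, since it is closed under $\H$ and $\S$) to reduce to $\zB$ simple with $\typ(\zz_\zB,\zo_\zB)=\one$. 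Now Proposition~\ref{pro:simple} forces $\zB$ to be a $2$-element set (the other three options have type $\ge\two$). Since $\zB\in\H\S(\zA)$, pick preimages $a,b\in\zA$ of its two elements; the kernel of the induced map $\Sg{a,b}\to\zB$ is automatically maximal (the quotient is $2$-element, hence simple), so $ab$ is a unary edge. The affine case is identical with ``module'' in place of ``set.'' No trace analysis and no iteration through $\cG(\zA)$ is needed; the structural classification of simple idempotent algebras that you already proved for part~(1) does all the work.
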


%
% proposition on simple algebras
First we prove Theorem~\ref{the:connectedness} for simple algebras.
We will need the following easy observation.
\begin{lemma}\label{lem:binary-tol}
Let $\rel$ be an $n$-ary compatible relation on $\zA$ such that, for
any $i\in[n]$, $\pr_i\rel=\zA$. Then, for any $i\in[n]$,
the relation $\tol_i(\rel)=\{(a,b)\mid $ there are $a_1\zd a_{i-1},a_{i+1}\zd
a_n\in\zA$ such that $(a_1\zd a_{i-1},a,a_{i+1}\zd a_n),\lb(a_1\zd
a_{i-1},b,a_{i+1}\zd a_n)\in\rel\}$ is a tolerance of $\zA$.
\end{lemma}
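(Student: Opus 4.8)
The plan is to show that each $\tol_i$ is reflexive, symmetric, and compatible with $\zA$, and that it is a \emph{binary} relation on all of $\zA$; transitivity is not claimed, which is exactly why we get a tolerance rather than a congruence. Fix $i\in[n]$ and write $\tol=\tol_i$ for brevity.

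First I would check symmetry: it is immediate from the definition, since the witnessing tuples $a_1\zd a_{i-1},a_{i+1}\zd a_n$ for $(a,b)$ serve verbatim as witnesses for $(b,a)$. Next, reflexivity: given any $a\in\zA$, we need a tuple in $\rel$ whose $i$th coordinate is $a$; this is exactly the hypothesis $\pr_i\rel=\zA$, which hands us such a tuple $\ba$ with $\ba[i]=a$, and then the coordinates $\ba[1]\zd\ba[i-1],\ba[i+1]\zd\ba[n]$ witness $(a,a)\in\tol$. So $\tol$ is reflexive and symmetric, and it is defined on the whole of $\zA$ (again using $\pr_i\rel=\zA$ to ensure every element shows up).

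The substantive point is compatibility, i.e.\ that $\tol$ is a subalgebra of $\zA^2$. Suppose $(a^1,b^1)\zd(a^m,b^m)\in\tol$ and $f$ is an $m$-ary term operation of $\zA$; I must show $(f(a^1\zd a^m),f(b^1\zd b^m))\in\tol$. For each $j\in[m]$ pick witnessing elements $a^j_1\zd a^j_{i-1},a^j_{i+1}\zd a^j_n$, so that the tuples $\ba^j=(a^j_1\zd a^j_{i-1},a^j,a^j_{i+1}\zd a^j_n)$ and $\bb^j=(a^j_1\zd a^j_{i-1},b^j,a^j_{i+1}\zd a^j_n)$ both lie in $\rel$. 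Applying $f$ coordinatewise to $\ba^1\zd\ba^m$ and to $\bb^1\zd\bb^m$ produces two tuples in $\rel$ (since $\rel$ is a subalgebra of $\zA^n$); these two tuples agree in every coordinate $k\ne i$ — there the common value is $f(a^1_k\zd a^m_k)$ — and in coordinate $i$ they are $f(a^1\zd a^m)$ and $f(b^1\zd b^m)$ respectively. Hence the elements $f(a^1_k\zd a^m_k)$ for $k\ne i$ witness $(f(a^1\zd a^m),f(b^1\zd b^m))\in\tol$, as required.

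I don't anticipate a real obstacle here; the only thing to be careful about is bookkeeping the index $i$ and making sure the witnessing tuples for the different arguments of $f$ are chosen independently before applying $f$ coordinatewise. One could phrase the whole argument more slickly by noting that $\tol$ is the image of $\rel\times_{[n]\setminus\{i\}}\rel$ (the fibered product of two copies of $\rel$ over the projection onto coordinates $[n]\setminus\{i\}$) under the projection onto the two distinguished $i$th coordinates, and that such a construction — a subalgebra of $\zA^{2}$ obtained from subalgebras of powers of $\zA$ by pp-definitions — is automatically compatible; reflexivity and symmetry then follow as above. I would likely present the elementary version for transparency.
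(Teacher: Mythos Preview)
Your proof is correct; the paper itself states this lemma without proof, calling it an ``easy observation,'' so there is nothing to compare against. Your argument is exactly the expected one (reflexivity from the subdirectness hypothesis, symmetry by inspection, compatibility by applying a term operation to the witnessing tuples coordinatewise), and the alternative pp-definability phrasing you mention is also standard.
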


Tolerance of the form $\tol_i(\rel)$ will be called \emph{link tolerance}, or
\emph{$i$th link tolerance}. We will drop mention of $\rel$ whenever it does not cause confusion.

\begin{prop}\label{pro:simple}
Let $\zA$ be a simple idempotent algebra. Then one of the following
holds:
\begin{itemize}
\item[(1)]
$\zA$ is a 2-element set;
\item[(2)]
$\zA$ is term equivalent to a module;
\item[(3)]
$\cH(\zA)$ is connected;
\item[(4)]
$\zA=\Sg{a,b}$ for some $a,b\in\zA$, and for any such $a,b$ either $a,b$ are connected in $\cH(\zA)$, or there is a binary term operation $f$ such that $f$
is a semilattice operation on $\{a,b\}$, or there is a ternary term operation $g$ and an automorphism of $\zA$ such that $g$ is a majority operation
on $\{a,b\}$ and the automorphism  swaps $a$ and~$b$.
\end{itemize}
\end{prop}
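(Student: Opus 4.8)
The strategy is to apply the structural trichotomy of Theorem~\ref{the:simple} and dispose of the cases one at a time, showing that each one lands in one of the four conclusions. If $\zA$ is abelian, then by Proposition~\ref{pro:simple-abelian} it is either a set or a module; a set that is not $2$-element still has proper subalgebras only if it has size $\le 1$ (in the idempotent setting every subset is a subuniverse of a set), but the interesting dichotomy is exactly case~(1) versus case~(2), so I would just observe that an abelian simple idempotent algebra is a $2$-element set (case~(1)), a larger set (in which case $\cH(\zA)$ is trivially connected via singletons, case~(3)), or a module (case~(2)). If $\zA$ has an absorbing element $a_0$ (case~(b) of Theorem~\ref{the:simple}), I would show $\cH(\zA)$ is connected by noting that $a_0$ lies in every proper subalgebra that has more than one element — more precisely, for any $b \ne a_0$ the set $\{a_0,b\}$ generates a subalgebra, and since $a_0$ is absorbing the element $a_0$ is contained in $\Sg{a_0,b}$; if $\Sg{a_0,b}$ is proper we get a path, and if $\Sg{a_0,b}=\zA$ for some $b$ we are in the two-generated situation (4), where the absorbing element forces extra structure to be analysed (see below).

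The main case is~(c): $\zA^2$ has no skew congruence. The plan is to use link tolerances (Lemma~\ref{lem:binary-tol}) to manufacture a nontrivial tolerance whose classes, by Lemma~\ref{lem:tol-class}, are subalgebras. Concretely, take any binary compatible relation $\rel \le \zA^2$ that is subdirect; if $\rel$ is not a congruence then some link tolerance $\tol_i$ is a proper reflexive symmetric compatible relation, hence (as $\zA$ is simple and $\tol_i \ne \zz$) its transitive closure is $\zo_\zA$, so $\tol_i$ is a tolerance strictly between $\zz$ and $\zo$ unless $\tol_i = \zo$. If every such link tolerance equals $\zo$, the relation $\rel$ is ``full'' in a sense that, combined with no-skew-congruence, forces $\zA$ to be either small or two-generated. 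When a nontrivial proper tolerance $\tau$ exists, its classes are proper subalgebras (each of size $\ge 2$ since $\tau$ is reflexive and $\ne \zz$, and proper since $\tau \ne \zo$), and two classes meeting in a point give adjacency in $\cH(\zA)$; since $\tau \ne \zz$, the transitive closure of $\tau$ is $\zo$, so the classes cover $\zA$ and chain together to connect any two elements — giving case~(3).

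It remains to handle the residual situation where $\zA = \Sg{a,b}$ is two-generated and $\cH(\zA)$ does \emph{not} connect $a$ and $b$ (case~(4)). Here I would argue that the absence of connecting subalgebras, together with simplicity, forces a symmetry: the map $a \mapsto b$, $b \mapsto a$ must extend to an automorphism, because otherwise $\Sg{b,a}$-type constructions or the orbit structure would produce a proper subalgebra separating $a$ from $b$ or collapsing the generation. Given this automorphism $\sigma$, I would look at a minimal-sized subdirect subalgebra of $\zA^2$ other than the graph of $\id$ and the graph of $\sigma$; analysing its link tolerances as above, and using that $\zA$ is simple with $\zA^2$ possibly having skew congruences only in the non-abelian absorbing case, yields a binary term $f$ that is semilattice on $\{a,b\}$ or a ternary term $g$ that is majority on $\{a,b\}$. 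The semilattice/majority dichotomy comes from whether the relevant two-element trace behaves with a ``join''-like absorbing pattern or a symmetric ``vote'' pattern. The hard part will be this last case: extracting the explicit semilattice or majority term on $\{a,b\}$ and verifying the automorphism hypothesis, since it requires a careful hands-on analysis of two-generated simple algebras rather than an appeal to a cited black box — the earlier cases are essentially bookkeeping on top of Theorems~\ref{the:simple} and Lemmas~\ref{lem:tol-class},~\ref{lem:binary-tol}.
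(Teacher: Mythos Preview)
Your outline tracks the paper's opening moves (apply Theorem~\ref{the:simple} and dispose of the abelian and absorbing-element cases), but the treatment of case~(c) and of the residual two-generated situation diverges from the paper and contains a real gap.

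First, a correction in the abelian case: a simple idempotent set is automatically $2$-element (any partition of a set is a congruence), so there is no ``larger set'' branch to worry about. In the absorbing case the paper is sharper than your sketch: it shows directly that $\{a_0,b\}$ is a \emph{subalgebra} for every $b$ (idempotence plus absorption force $f(\vc bn)\in\{a_0,b\}$), so when $|\zA|\ge3$ these $2$-element subalgebras already connect everything, and only $|\zA|=2$ drops into case~(4). Your version, phrased in terms of whether $\Sg{a_0,b}$ is proper, misses this and leaves the $|\zA|\ge3$ subcase unresolved.

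The substantive gap is in the residual situation $\zA=\Sg{a,b}$ with $a,b$ not connected in $\cH(\zA)$. You assert that this forces the swap $a\leftrightarrow b$ to extend to an automorphism. That is not what happens, and the statement of the proposition does not claim it: the automorphism is only asserted in the \emph{majority} branch of~(4). The paper's proof is a genuine case split. When \emph{no} such automorphism exists, one studies the binary relation $\rel=\Sg{(a,b),(b,a)}$; the link tolerances $\tol_1,\tol_2$ are then nontrivial (this is exactly where the non-automorphism hypothesis is used), hence full by simplicity, and one finds $c$ with $(a,c),(b,c)\in\rel$; from this one either connects $a,b$ through $\Sg{a,c},\Sg{b,c}$ or extracts a binary term with $f(a,b)=f(b,a)=b$. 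When the automorphism \emph{does} exist, the paper instead analyses the ternary relation generated by $(a,a,b),(a,b,a),(b,a,a)$ and proves $(a,a,a)\in\rel$, which yields a term $f$ with $f(a,a,b)=f(a,b,a)=f(b,a,a)=a$; the automorphism then gives the mirrored identities, so $f$ is majority on $\{a,b\}$. Your plan to ``look at a minimal-sized subdirect subalgebra of $\zA^2$'' does not supply either of these constructions.

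A second structural difference: in case~(c) the paper does not proceed via generic link-tolerance arguments alone. It first uses tame congruence theory to split by $\typ(\zz_\zA,\zo_\zA)$. Types $\four,\five$ are handled by the orderability theorem (Theorem~5.26 of \cite{Hobby88:structure}), producing a compatible partial order whose covers give $2$-element subalgebras connecting $\cH(\zA)$. Only type~$\three$ requires the hands-on analysis above, and there one may further reduce to the tolerance-free case (else tolerance classes connect $\cH(\zA)$ by Lemma~\ref{lem:tol-class}); crucially, in a tolerance-free simple algebra of type~$\three$ every pair is a trace, and this is what powers the Claims in the ternary-relation argument. Your outline has no substitute for these type-specific ingredients, and the ``hard part'' you flag at the end is exactly where the paper invests most of its effort.
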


\begin{proof}
We consider the three options given in Theorem~\ref{the:simple}.\\[2mm]
(a) Suppose first that $\zA$ is abelian. Then by 
Proposition~\ref{pro:simple-abelian}
it is either a set or a module. As $\zA$
is simple, in the former case it is also 2-element, and we are in case (1) of 
Proposition~\ref{pro:simple}. In the latter case we of course obtain 
case (2) of the proposition. In the rest of the proof
we assume $\zA$ non-abelian, in particular the factor
$\zz_\zA\prec\zo_\zA$ has type \three,\four, or \five.\\[2mm]
(b) Next assume that $\zA$ has an absorbing element, and let $a$ be the 
absorbing element of $\zA$. Then for any
$b\in\zA$ the set $\{a,b\}$ is a subalgebra. Indeed, let
$f(\vc xk)$ be a term operation of $\zA$ (for simplicity assume
that it depends on all of its arguments). Then for $\vc ak\in\{a,b\}$
we have $f(\vc ak)=a$ if any of the $a_i$ equals $a$, and
$f(\vc ak)=b$ otherwise, as $f$ is idempotent.
Therefore if $|\zA|\ge3$ then $\cH(\zA)$ is connected through such
2-element subalgebras, and we are in case (3) of 
Proposition~\ref{pro:simple}. If $|\zA|=2$, say, $\zA=\{a,b\}$, then
clearly $\zA=\Sg{a,b}$. Moreover for any non-unary term operation
$f(\vc xk)$ of $\zA$ for $g(x,y)=f(x,y\zd y)$ we have
$g(a,a)=g(a,b)=g(b,a)=a$ and $g(b,b)=b$; that is, $g$ is a
semilattice operation on $\zA$. We obtain case (4) of 
Proposition~\ref{pro:simple}.\\[2mm]
(c) If $\typ(\zz_\zA,\zo_\zA)\in\{\four,\five\}$, then by Theorem~5.26
of \cite{Hobby88:structure}, there exists a
$(\zz_\zA,\zo_\zA)$-quasi-order $\le$ on $\zA$, which is, clearly, just a
compatible partial order. Let $a\le b\in A$ be such that $a\le c\le
b$ implies $c=a$ or $c=b$. We claim that $\{a,b\}$ is a subalgebra of
$\zA$. Indeed, for any term operation $f(\vc xn)$ of $\zA$ and any
$\vc an\in\{a,b\}$, we have $a=f(a\zd a)\le f(\vc an)\le f(b\zd
b)=b$. If $|\zA|\ge 3$, it follows from Lemma~5.24(3) and Theorem~5.26(2) of
\cite{Hobby88:structure} that $\le$ is connected. Therefore we have
case (3) of Proposition~\ref{pro:simple}. If $|\zA|=2$, then $\zA$ has a semilattice or majority term operation, as follows from the description of clones on a 2-element set, \cite{Post41}. Moreover, if the mapping $\vf$ swapping the elements of $\zA$ is not an automorphism of $\zA$, the algebra $\zA$ has a term operation $f$ that is not self-dual (meaning precisely that $\vf$ does not commute with $f$). If this is the case, again, the description from \cite{Post41} implies that a semilattice operation is a term operation of $\zA$.

Suppose that $\typ(\zz_\zA,\zo_\zA)=\three$. If $\Sg{a,b}\ne\zA$ for
all $a,b\in\zA$ then $\cH(\zA)$ is connected by subalgebras
generated by 2-element sets. So, suppose $\zA=\Sg{a,b}$ for some 
$a,b\in\zA$.
Note that as $\zA$ is simple, the transitive closure of every one of its tolerance
different from the equality relation is the total relation. If a tolerance
satisfies this condition then we say that it is {\em connected}. Since if a 
connected tolerance $\tau$ exists, $\cH(\zA)$
is connected by the classes of $\tau$ that are subalgebras 
of $\zA$ by Lemma~\ref{lem:tol-class}, we obtain case (3) of 
Proposition~\ref{pro:simple}. Thus, we can assume that $\zA$ is tolerance-free.

We consider two cases.
\medskip

\noindent
{\sc Case 1.} There is no automorphism $\vf$ of $\zA$ such that
$\vf(a)=b$ and $\vf(b)=a$.\\[2mm]
Consider the relation $\rel$ generated by $(a,b),(b,a)$. By the
assumption made, $\rel$ is not the graph of a bijective mapping. By
Lemma~\ref{lem:binary-tol}, $\tol_1(\rel),\tol_2(\rel)$ are tolerances of $\zA$
different from the equality relation. Thus, they both are equal to the
total relation. Therefore, there is $c\in\zA$ such that
$(a,c),(b,c)\in\rel$. If both $\Sg{a,c},\Sg{b,c}$ are proper
subalgebras of $\zA$, then $a,b$ are connected in
$\cH(\zA)$. Otherwise, let, say, $\Sg{a,c}=\zA$. Since
$(b,a),(b,c)\in\rel$ and $\zA$ is idempotent, $(b,d)\in\rel$ for any
$d\in \zA$. In particular, $(b,b)\in\rel$. This means that there is a
binary term operation $f$ such that $f(a,b)=f(b,a)=b$, that is,
$f$ is semilattice on $\{a,b\}$, as required. We are in case (4) of 
Proposition~\ref{pro:simple}.
\medskip

\noindent
{\sc Case 2.} There is an automorphism $\vf$ of $\zA$ such that
$\vf(a)=b$ and $\vf(b)=a$.\\[2mm]
Consider the ternary relation $\rel$ generated by
$(a,a,b),(a,b,a),(b,a,a)$. As in the previous case, if we show that
$(a,a,a)\in\rel$, then the result follows, as we explain in detail later. Let also
$\rela=\{(c,\vf(c))\mid c\in \zA\}$ denote the graph of an automorphism
$\vf$ with $\vf(a)=b$ and $\vf(b)=a$.
\medskip

\noindent
{\sc Claim 1.} $\pr_{12}\rel=\zA\tm\zA$.\\[2mm]
Let $\relo=\pr_{12}\rel$ and $\relo'=\{(c,\vf(d))\mid
(c,d)\in\relo\}$. As $\relo'(x,y)=\ex z (\relo(x,z)\meet\rela(z,y))$, this
relation is compatible. Clearly, $\relo=\zA\tm\zA$ if and only if
$\relo'=\zA\tm\zA$. Notice that $(a,a),(b,b),(a,b)\in\relo'$. Since
$\typ(\zA)=\three$ and $\zA$ is tolerance free, every pair $c,d\in\zA$
is a trace. Therefore, again as $\typ(\zA)=\three$, there is a polynomial 
operation $g(x)$ with
$g(a)=c, g(b)=d$ and, hence, there is a term operation $f(x,y,z)$ such
that $f(a,b,x)=g(x)$. For this operation we have
$$
f\left(\cl aa,\cl bb,\cl ab\right)=\cl cd\in\relo'.
$$
Since such an operation can be found for any pair $c,d\in\zA$, this
proves the claim.

\medskip

Next we show that $\tol_3(\rel)$ cannot be the equality relation. Suppose
for contradiction that it is. Then the relation
$\th=\{((c_1,d_1),(c_2,d_2))\mid $ there is $e\in\zA$ such that
$(c_1,d_1,e),(c_2,d_2,e)\in\rel\}$ is a congruence of $\relo=\zA^2$. It
cannot be a skew congruence, hence, it is the kernel of the projection of
$\zA^2$ onto one of its factors. Without loss of generality let
$\th=\{((c_1,d_1),(c_2,d_2))\mid c_1=c_2\}$. This means that, for any
$e\in\zA$ and any $(c_1,d_1,e),(c_2,d_2,e)\in\rel$, we have
$c_1=c_2$. However, $(a,b,a),(b,a,a)\in\rel$, a contradiction. The same
argument applies when $\th=\{((c_1,d_1),(c_2,d_2))\mid d_1=d_2\}$.

Thus, $\tol_3(\rel)$ is the total relation, and there is $(c,d)\in\relo$ such
that $(c,d,a),(c,d,b)\in\rel$ which implies $\{(c,d)\}\tm\zA\sse\rel$.
\medskip

\noindent
{\sc Claim 2.} For any $(c',d')\in\pr_{12}\rel$, it holds that
$(c',d',a)\in\rel$.\\[2mm]
Take a term operation $g(x,y,z)$ such that $g(a,b,c)=c'$ and
$g(a,b,\vf^{-1}(d))=\vf^{-1}(d')$. Such an operation exists provided that
$c\ne\vf^{-1}(d)$, because every pair of elements of $\zA$ is a trace.
Then
\begin{align*}
\lefteqn{g\left(\cll aba,\cll baa,\cll cda\right)=\cll{g(a,b,c)}%
{g(b,a,d)}a}\\
&=\cll{c'}{\vf(g(\vf^{-1}(b),\vf^{-1}(a),\vf^{-1}(d)))}a\\
&=\cll{c'}{\vf(g(a,b,\vf^{-1}(d)))}a=
\cll{c'}{\vf(\vf^{-1}(d'))}a=\cll{c'}{d'}a\in\rel.
\end{align*}
What is left is to show that there are $c',d'$ such that $(c',d',a)\in\rel$ 
and $c'\ne\vf^{-1}(d')$. Suppose $c=\vf^{-1}(d)$. If 
$\Sg{a,c},\Sg{c,b}\ne\zA$, then $a,b$ are connected in
$\cH(\zA)$. Let $\Sg{a,c}=\zA$, and $h$ such that $h(a,c)=b$. Since $\rel$ is
symmetric with respect to any permutation of coordinates,
$\{c\}\tm\zA\tm\{d\}\sse\rel$. In particular, $(c,c,d)\in\rel$. Then
$$
h\left(\cll aab,\cll ccd\right)=\cll bba,
$$
as $c=\vf^{-1}(d)$ and $a=\vf^{-1}(b)$. The tuple $(b,b,a)$ is as required and we can set $c'=d'=b$. 
The case when $\Sg{c, b} = \zA$ is dual by applying $\vf$, $c$ becomes $d$ 
and we get $\Sg{a, d} = \zA$. Then the same argument proceeds with 
$(c, d, d)\in R$ and $h(a, d) = b$.

Thus, $(a,a,a)\in\rel$ which means that there is a term operation
$f(x,y,z)$ such that $f(a,a,b)=f(a,b,a)=f(b,a,a)=a$. Since $\vf$ is an
automorphism, we also get $f(b,b,a)=f(b,a,b)=f(a,b,b)=b$, i.e.\ $f$ is
a majority operation on $\{a,b\}$.
\end{proof}

% general theorem
\begin{proof}[Proof of Theorem~\ref{the:connectedness}]
Suppose for contradiction that $\cG(\zA)$ is disconnected. Let
$\zB$ be a minimal subalgebra of $\zA$ such that $\cG(\zB)$ is
disconnected.
Let $\th$ be a maximal congruence of $\zB$.
As is easily seen, if $\cG(\zB\fac\th)$ is connected then $\cG(\zB)$ is
connected. Indeed, if $c\fac\th d\fac\th$ is an edge in $\zB\fac\th$ then $cd$ is an edge in $\zB$ (see also Lemma~\ref{lem:edge-factor} below). Also, $\th$-blocks are subalgebras of $\zB$ and therefore connected by the choice of $\zB$.
Take $c,d\in\zB$; let $c'=c\fac\th, d'=d\fac\th$. If $\Sg{c,d}\ne\zB$
then $c,d$ are connected in $\Sg{c,d}$ by the assumption made.
Otherwise $\Sg{c',d'}=\zB\fac\th$ and we are in the conditions of
Proposition~\ref{pro:simple}. If $\zB\fac\th$ is a set, $cd$ is an edge
of the unary type. If $\zB\fac\th$ is term equivalent to the full
idempotent reduct of a module, then $cd$ is an edge of the affine
type. If $c',d'$ are connected in $\cH(\zB\fac\th)$ as in items (3),(4)
of Proposition~\ref{pro:simple},
then $c,d$ are connected in $\cG(\zB)$ by the assumption of the
minimality of $\zB$. In the remaining cases $cd$ is a semilattice or
majority edge.

Finally, if $\cG(\zA)$ has an edge $ab$ of the unary type and $\th$ is a 
maximal congruence of $\Sg{a,b}$ certifying that, then
$\{a\fac\th,b\fac\th\}$ is a factor of $\zA$ term equivalent
to a set, and so $\H\S(\zA)$ admits type \one. Similarly, if $\cG(\zA)$
has an affine edge then $\H\S(\zA)$ admits type \two.
Conversely, suppose $\H\S(\zA)$ admits type \one. Then some algebra 
$\zB$ from $\H\S(\zA)$ has a prime interval $\al\prec\beta$ in
$\Con(\zB)$ of type \one. Taking the factor of $\zB$ modulo $\al$ and
restricting $\zB$ on a nontrivial $\beta$-block, $\zB$ can be assumed 
to be simple and $\zz_\zB\prec\zo_\zB$ has the unary type. Therefore, by 
Proposition~\ref{pro:simple-abelian} $\zB$ is a 2-element set. Let $a',b'$ be 
the elements of $\zB$. Then as $\zB\in\H\S(\zA)$ there are elements 
$a,b\in\zA$ such that $a\in a',b\in b'$. As is easily seen, $ab$ is an 
edge of the unary type. Similarly, if some factor of $\zA$ has a prime
congruence interval of type \two, we can find an edge of the affine type
in $\cG(\zA)$.
\end{proof}

We complete this section observing several simple properties of edges 
in connection with subalgebras and factor algebras. 

\begin{lemma}\label{lem:edge-factor}
Let $\zA$ be an idempotent algebra and $\al$ a congruence of $\zA$.
Then\\[2mm] 
(1) If $a\fac\al b\fac\al$, $a,b\in\zA$, is an edge in $\cG(\zA\fac\al)$ then 
$ab$ is also an edge in $\cG(\zA)$ of the same type.\\[2mm]
(2) If $\cH(\zA\fac\al)$ is connected then so is $\cH(\zA)$.
\end{lemma}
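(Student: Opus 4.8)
The plan is to prove both parts by lifting the relevant witnessing data along the quotient map $\zA\to\zA\fac\al$. For part~(1), suppose $a\fac\al\, b\fac\al$ is an edge of $\cG(\zA\fac\al)$, witnessed by a maximal congruence $\bar\th$ of $\Sgg{\zA\fac\al}{a\fac\al,b\fac\al}$ and (in the non-unary cases) a term operation $\bar f$ of $\zA\fac\al$. The first step is to observe that $\Sgg{\zA\fac\al}{a\fac\al,b\fac\al}=\Sgg\zA{a,b}\fac\al$, since quotient maps commute with generation. Let $\eta$ be the restriction of $\al$ to $\Sgg\zA{a,b}$; then $\Sgg\zA{a,b}\fac\eta=\Sgg\zA{a,b}\fac\al$, and $\bar\th$ pulls back to a congruence $\th$ of $\Sgg\zA{a,b}$ containing $\eta$ with $\th$ maximal (by the correspondence theorem, since $\bar\th$ is maximal). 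Now $\Sgg\zA{a,b}\fac\th\cong(\Sgg\zA{a,b}\fac\eta)\fac{\bar\th}=\Sgg{\zA\fac\al}{a\fac\al,b\fac\al}\fac{\bar\th}$, and under this isomorphism $a\fac\th\mapsto(a\fac\al)\fac{\bar\th}$ and $b\fac\th\mapsto(b\fac\al)\fac{\bar\th}$. Finally, any term operation $\bar f$ of $\zA\fac\al$ is the factor $f\fac\al$ of some term operation $f$ of $\zA$ (same term), so $\bar f\fac{\bar\th}$ corresponds to $f\fac\th$ under the isomorphism above. Hence whichever of the four conditions (set, affine, semilattice, majority) holds for $\bar f\fac{\bar\th}$ on the relevant pair in $\zA\fac\al$ holds verbatim for $f\fac\th$ on $\{a\fac\th,b\fac\th\}$ in $\zA$, so $ab$ is an edge of $\cG(\zA)$ of the same type. (The type is defined by which conditions hold, and since the factor algebras are isomorphic, the presence or absence of a semilattice operation is preserved, so the fine distinction between affine/majority/semilattice types goes through as well.)

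For part~(2), suppose $\cH(\zA\fac\al)$ is connected; I want to show $\cH(\zA)$ is connected. Take $a,b\in\zA$. By connectedness of $\cH(\zA\fac\al)$ there is a path $\bar H_1\zd\bar H_k$ of proper subalgebras of $\zA\fac\al$ with $a\fac\al\in\bar H_1$, $b\fac\al\in\bar H_k$, and $\bar H_i\cap\bar H_{i+1}\ne\eps$. For each $i$, let $H_i$ be the full preimage of $\bar H_i$ under the quotient map; then $H_i$ is a subalgebra of $\zA$, and it is proper because $\bar H_i$ is proper. Consecutive preimages still intersect: if $\bar c\in\bar H_i\cap\bar H_{i+1}$ then every element of the block $\bar c$ lies in $H_i\cap H_{i+1}$. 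Moreover $a\in H_1$ and $b\in H_k$ since $a\fac\al\in\bar H_1$ and $b\fac\al\in\bar H_k$. Thus $H_1\zd H_k$ is a path in $\cH(\zA)$ joining $a$ to $b$, so $\cH(\zA)$ is connected. (The case $k=1$ is automatic since then $a,b$ lie in a common proper subalgebra; and if $\zA\fac\al$ is trivial, i.e. $\al=\zo_\zA$, there are no proper subalgebras containing a given pair unless $\zA$ itself has a suitable proper subalgebra — but then connectedness of $\cH(\zA\fac\al)$ is vacuous/degenerate, so one should read the hypothesis as being over the nondegenerate case; alternatively when $\al=\zo$ the statement is trivially true or vacuous.)

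The only real subtlety is bookkeeping in part~(1): one must check that passing to the subalgebra $\Sgg\zA{a,b}$ and restricting $\al$ to it is harmless, and that the maximality of the witnessing congruence survives both the restriction and the correspondence between congruences of $\zA\fac\al$ above $\eta$ and congruences of $\zA$ above $\eta$. This is where I expect most of the (routine) care to be needed; everything else is a direct transport of structure along an isomorphism of factor algebras. Part~(2) is entirely straightforward by the preimage argument once one notes that full preimages of proper subalgebras are proper.
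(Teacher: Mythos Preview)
Your proof is correct and follows the same approach as the paper: for (1) you pull back the witnessing congruence along the quotient map and use the correspondence theorem to get an isomorphic factor, exactly as the paper does; for (2) you lift the path of proper subalgebras through the quotient. The paper's one-sentence proof of (2) phrases the lifting via the observation that $\al$-blocks are subalgebras, but this is the same argument up to presentation, and your treatment of the degenerate case $\al=\zo$ is a reasonable aside.
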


\begin{proof}
(1) Let $\th$ be a congruence of $\zB=\Sgg{\zA\fac\al}{a\fac\al,b\fac\al}$ 
witnessing the type of the edge $a\fac\al b\fac\al$. Let $\th'\ge\al$ be the congruence of $\zB'=\{a\in\zA\mid a\fac\al\in\zB\}$ such that $\th'\fac\al=\th$, and consider the congruence
$\eta$ which is the restriction of $\th'$ to the algebra $\Sgg\zA{a,b}$.
Clearly, $\Sgg\zA{a,b}\fac\eta$ is isomorphic to $\zB\fac\th$, thus
witnessing that $ab$ is an edge in $\cG(\zA)$ of the same type as 
$a\fac\al b\fac\al$.

(2) This statement follows from the fact that any $a,b\in\zA$ such that 
$(a,b)\in\al$ are connected by the subalgebra $a\fac\al$.
\end{proof}

The following example shows that the converse of 
Lemma~\ref{lem:edge-factor}(1) does not always hold.

\begin{example}\label{exa:no-edge-factor}
Let $\zA$ be an algebra containing a pair that is not an edge. 
For example, we may consider the algebra from 
Example~\ref{exa:no-edge}, that is, the universe of $\zA$ is
$A=\{a,b,c\}$ and the only basic operations are the binary 
operations $f,g$ acting as described in Example~\ref{exa:no-edge}. 
For a non-edge
consider the pair $ab$. Let $\zB=(B,m)$, where $B=\{0,1\}$ and 
$m$ is the majority operation on $\{0,1\}$. We define $f,g$ on $B$
to be the first projections and $m$ on $A$ to be the first projection as well.
Let $\zA'=(A,f,g,m)$, $\zB'=(B,f,g,m)$, and consider $\zC=\zA'\tm\zB'$.
Let also $\pi_1,\pi_2$ be the projection congruences of $\zC$.
It is easy to see that $(a,0)(b,1)$ is an edge of the majority type
as witnessed by the congruence $\pi_2$. Indeed, $(a,0),(b,1)$
generate $\zC$ and $\zC\fac{\pi_2}$ is isomorphic to $\zB'$,
which is an edge of the majority type. On the other hand,
$\zC\fac{\pi_1}$ is isomorphic to $\zA'$, where 
$(a,0)\fac{\pi_1}, (b,1)\fac{\pi_1}$ correspond to $a,b$, respectively.
Thus, $(a,0)\fac{\pi_1}(b,1)\fac{\pi_1}$ is not an edge.
\end{example}

The next statement follows straightforwardly from definitions.

\begin{lemma}\label{lem:edge-subalgebra}
Let $\zA$ be an idempotent algebra and $\zB$ its subalgebra. Then 
for any $a,b\in\zB$\\[2mm]
(1) the pair $ab$ is an edge in $\cG(\zB)$ if and only 
if it is an edge in $\cG(\zA)$, and it has the same type in both algebras;\\[2mm]
(2) if elements $a,b$ are connected in $\cH(\zB)$ then they are connected in 
$\cH(\zA)$.
\end{lemma}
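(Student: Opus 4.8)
The plan is to reduce both parts to the single observation that, because $\zB$ is a subalgebra of $\zA$, for any $a,b\in\zB$ the subalgebra $\Sgg\zA{a,b}$ generated inside $\zA$ coincides, \emph{as an algebra}, with $\Sgg\zB{a,b}$: the two have the same underlying set, namely the closure of $\{a,b\}$ under the basic operations, which never leaves $\zB$, and on that common universe the basic operations of $\zA$ and of $\zB$ act identically. Write $\zD$ for this common algebra.

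For part~(1) I would argue that the two notions of ``edge'' attached to the pair $ab$ --- one computed in $\cG(\zA)$, one in $\cG(\zB)$ --- are governed by exactly the same data: a maximal congruence $\th$ of $\zD$ together with the action on $\zD$ of a term operation of the ambient algebra. The only point requiring a word of care is the term-operation quantifier in the definition of an edge: it ranges over term operations of the ambient algebra ($\zA$, resp.\ $\zB$), not of $\zD$ itself. But the restriction to $\zD$ of any term operation of $\zA$ is a term operation of $\zD$, and conversely every term operation of $\zD$ is the restriction to $\zD$ of a term operation of $\zA$ (namely the corresponding term read in $\zA$); the same holds verbatim with $\zB$ in place of $\zA$. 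Hence the family of operations on $\zD$ realizable as restrictions of term operations of $\zA$ equals the family realizable as restrictions of term operations of $\zB$ --- both equal $\Term(\zD)$. Consequently the clause ``$ab$ is an edge'' and each of the clauses defining the semilattice, majority, affine, and unary types become literally the same statement whether read in $\zA$ or in $\zB$, which gives~(1).

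For part~(2) I would simply transport the path: if $H_1,\dots,H_k$ is a path in $\cH(\zB)$ with $a\in H_1$ and $b\in H_k$, then each $H_i$ is a subalgebra of $\zB$, hence of $\zA$, and it is proper in $\zB$, hence --- since $H_i\subsetneq\zB\sse\zA$ --- proper in $\zA$ as well; consecutive $H_i$ still intersect, so the same sequence is a path in $\cH(\zA)$ joining $a$ and $b$. I do not anticipate any genuine obstacle here: the statement is essentially bookkeeping, and the only step one must not gloss over is the identification, described above, of the term operations on $\Sgg\zA{a,b}$ arising from $\zA$ with those arising from $\zB$.
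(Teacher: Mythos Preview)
Your proposal is correct and is precisely the routine verification the paper has in mind: the paper itself offers no proof beyond the remark that the lemma ``follows straightforwardly from definitions,'' and your argument spells out exactly that straightforward check --- the identification $\Sgg\zA{a,b}=\Sgg\zB{a,b}$ as algebras for part~(1), and the observation that proper subalgebras of $\zB$ remain proper subalgebras of $\zA$ for part~(2).
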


\begin{lemma}\label{lem:many-edges}
Let $\zA$ be an idempotent algebra and $ab$ an edge of $\cG(\zA)$.
Let $\th$ be a congruence of $\Sg{a,b}$ witnessing that $ab$ is
an edge. Then for any $c\in a\fac\th$, $d\in b\fac\th$, the pair
$cd$ is also an edge of $\cG(\zA)$ of the same type.
\end{lemma}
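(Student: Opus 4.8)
The plan is to reduce the statement to the observation that $\Sg{a,b}=\Sg{c,d}$ whenever $c\in a\fac\th$ and $d\in b\fac\th$, and then transport the witnessing congruence $\th$ along this equality. First I would set $\zB=\Sg{a,b}$ and argue that $\Sg{c,d}=\zB$: since $\th$ is a congruence of $\zB$ and $c\in a\fac\th$, $d\in b\fac\th$, the subalgebra $\Sg{c,d}$ maps onto $\Sg{c\fac\th,d\fac\th}=\Sg{a\fac\th,b\fac\th}=\zB\fac\th$ under the quotient map (here $\th$ is maximal, so $\zB\fac\th$ is generated by the images of \emph{any} pair of elements lying in the two distinct $\th$-classes $a\fac\th$, $b\fac\th$, provided those are the only classes involved — which they are, because $a\fac\th,b\fac\th$ already generate the simple algebra $\zB\fac\th$). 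If $\Sg{c,d}$ were a proper subalgebra of $\zB$, its image in $\zB\fac\th$ would still be all of $\zB\fac\th$, but then by minimality considerations we would need a separate argument; the cleanest route is instead: the restriction of $\th$ to $\Sg{c,d}$ is a congruence whose quotient is $\zB\fac\th$ (again because $c\fac\th=a\fac\th$ and $d\fac\th=b\fac\th$ generate it), so $|\Sg{c,d}|\ge|\zB\fac\th|\cdot 1$; more importantly, the same operation $f$ witnessing the type of $ab$ is a term operation of $\zA$, hence of $\Sg{c,d}$ once we know $c,d\in\Sg{c,d}$, and $f\fac{\th'}$ behaves on $\{c\fac{\th'},d\fac{\th'}\}$ exactly as $f\fac\th$ behaves on $\{a\fac\th,b\fac\th\}$, where $\th'$ is the restriction of $\th$ to $\Sg{c,d}$.

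So the key steps, in order, are: (i) let $\th'=\th\cap\Sg{c,d}^2$, a congruence of $\Sg{c,d}$; (ii) check that the natural map $\Sg{c,d}\fac{\th'}\to\zB\fac\th$ is an isomorphism onto $\zB\fac\th$ — injectivity is immediate from the definition of $\th'$, and surjectivity holds because $c\fac\th=a\fac\th$ and $d\fac\th=b\fac\th$ generate $\zB\fac\th$; (iii) conclude that $\th'$ is a maximal congruence of $\Sg{c,d}$ (its quotient is simple) and that the same term operation $f$ of $\zA$ witnesses, via $\th'$, that $cd$ is an edge of $\cG(\zA)$, of the same type as $ab$, since $f\fac{\th'}$ is carried to $f\fac\th$ under the isomorphism of step (ii), and being a set, semilattice on the relevant two-element subset, majority on it, or an affine operation on the whole quotient are all properties preserved by isomorphism. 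In the unary case there is no $f$ and one simply notes $\Sg{c,d}\fac{\th'}\cong\zB\fac\th$ is a set.

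The one genuine subtlety — and the step I expect to require the most care — is the surjectivity in (ii), i.e.\ that $c\fac\th,d\fac\th$ still generate $\zB\fac\th$. This is where idempotence and the maximality of $\th$ matter: $\zB\fac\th$ is simple, $a\fac\th$ and $d\fac\th$... wait, more precisely $\{a\fac\th,b\fac\th\}$ generates $\zB\fac\th$ by definition of $\th$ being the witnessing congruence, and $c\fac\th=a\fac\th$, $d\fac\th=b\fac\th$ \emph{as elements of} $\zB\fac\th$, so $\Sg{c\fac\th,d\fac\th}=\Sg{a\fac\th,b\fac\th}=\zB\fac\th$ literally — there is nothing to prove once one unwinds that $c$ and $a$ represent the same class. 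Thus the proof is almost purely formal, and no hard computation is needed; the lemma is essentially the assertion that "edge" is a property of the pair of $\th$-classes together with the factor algebra, not of the particular chosen representatives.
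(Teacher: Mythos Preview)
Your proposal is correct and follows the same approach as the paper: the paper's proof is a terse two-liner observing that the edge property and its type depend only on the quotient $\Sg{a,b}\fac\th$ and the pair of classes $a\fac\th,b\fac\th$, and since $c\fac\th=a\fac\th$, $d\fac\th=b\fac\th$, the result is immediate. Your steps (i)--(iii) spell out precisely the details the paper leaves implicit (restricting $\th$ to $\Sg{c,d}$, checking the resulting quotient is isomorphic to $\Sg{a,b}\fac\th$, hence simple, hence $\th'$ is maximal), so your write-up is in fact more careful than the paper's own argument; the meandering about whether $\Sg{c,d}=\Sg{a,b}$ is unnecessary and can be dropped, since as you eventually note, only the isomorphism of quotients matters.
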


\begin{proof}
Consider the algebra $\zB=\Sg{a,b}\fac\th$. The fact that $ab$
is an edge and its type only depends on the term operations of $\zB$.
Since $c\in a\fac\th$, $d\in b\fac\th$, the result follows.
\end{proof}

%%%%%%%%%%%%%%%%%%%%%%%%%%%%%%%%%%
\subsection{Adding thick edges}\label{sec:adding-thick}

Generally, an edge, or even a thick edge is not a subalgebra. However, it
is always possible to find a reduct for which every (thick) edge is a 
subalgebra. For instance, one can throw away all the term operations of an 
algebra. Every subset of such a reduct is a subalgebra. More difficult is to 
find a reduct that keeps the types of edges in $\cG(\zA)$. In this section
we show that every idempotent algebra $\zA$ omitting type \one\ has
a reduct $\zA'$ such that every one of its (thick) edges of the semilattice or majority
type is a subalgebra of $\zA'$, and if $\zA$ omits certain types then so
does $\zA'$. More precisely, we prove the following

\begin{theorem}\label{the:adding}
Let $\zA$ be an idempotent algebra. There exists a reduct $\zA'$ of
$\zA$ such that every thick edge of the semilattice or majority type is a subuniverse of $\zA'$ and\\[2mm]
(1) if 
$\cG(\zA)$ does not contain edges of the unary type, then $\cG(\zA')$ 
does not contain edges of the unary type;\\[2mm]
(2) if $\cG(\zA)$ contains no edges of the unary and affine types, 
then $\cG(\zA')$ contains no edges of the unary and affine types.
\end{theorem}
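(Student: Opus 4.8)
The plan is to construct the desired reduct $\zA'$ by carefully choosing, for each (thick) edge $ab$ of semilattice or majority type, a single term operation witnessing that type, and then taking the reduct generated by all these chosen operations together with whatever term operations are needed to preserve the affine edges. The main difficulty is that throwing away term operations can destroy edges: a pair $cd$ that was an edge of $\cG(\zA)$ witnessed by some congruence $\th$ of $\Sgg\zA{c,d}$ may fail to be an edge of $\cG(\zA')$ because (a) the subalgebra generated by $c,d$ shrinks when we drop operations, so $\th$ need no longer be a congruence of $\Sgg{\zA'}{c,d}$, and (b) even if the factor survives, the witnessing operation $f$ may no longer be a term operation of $\zA'$. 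So the construction must be done in a way that controls both the subalgebras generated by pairs and the supply of witnessing operations simultaneously.

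First I would set up the following auxiliary notion: call a reduct $\zA''$ of $\zA$ \emph{safe} if it contains all (necessarily idempotent) term operations of $\zA$ that are injective on pairs in the sense needed to keep all $2$-element subalgebras and the relevant factor subalgebras intact — concretely, $\zA''$ should retain enough operations that for every pair $ab$ which is an edge of $\cG(\zA)$ there is a witnessing congruence $\th$ of $\Sgg{\zA''}{a,b}$ and a witnessing operation $f\in\Term(\zA'')$ of the same type. By Theorem~\ref{the:connectedness}(2),(3), $\cG(\zA)$ omitting type \one\ (resp.\ types \one\ and \two) is equivalent to $\H\S(\zA)$ omitting type \one\ (resp.\ \one\ and \two), and this property is inherited by reducts in one direction only, so the real content is to show a safe reduct exists that still omits the right types. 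I would obtain such a reduct by the standard compactness/greedy argument: since $\zA$ is finite, there are only finitely many edges in $\cG(\zA)$ and finitely many pairs $(\th,f)$ per edge; pick one witness $(\th_{ab},f_{ab})$ for each semilattice or majority edge $ab$, let $\zA'$ be the reduct whose basic operations are exactly $\{f_{ab}\}$ together with one affine-type witness $d_{ab}$ for each affine edge (or, if we are in case (2) and want to kill affine edges, we simply do not add those). Because each $f_{ab}$ is idempotent, $\zA'$ is idempotent, and each chosen $\{a\fac{\th_{ab}},b\fac{\th_{ab}}\}$ together with $f_{ab}$ still witnesses that $ab$ is an edge \emph{of $\cG(\zA')$} provided $\th_{ab}$ restricts correctly — which it does, since $\Sgg{\zA'}{a,b}\sse\Sgg{\zA}{a,b}$ and we may intersect $\th_{ab}$ with $(\Sgg{\zA'}{a,b})^2$ exactly as in the proof of Lemma~\ref{lem:edge-factor}(1).

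For part (1), suppose $\cG(\zA)$ has no edge of the unary type; I must show $\cG(\zA')$ has none either. Suppose, for contradiction, $cd$ is a unary edge of $\cG(\zA')$, witnessed by a maximal congruence $\eta$ of $\Sgg{\zA'}{c,d}$ with $\Sgg{\zA'}{c,d}\fac\eta$ a set. Then $\Sgg{\zA'}{c,d}\fac\eta\in\H\S(\zA')$ admits type \one, hence by Theorem~\ref{the:connectedness}(2) applied to $\zA'$ we would conclude $\cG(\zA')$ has a unary edge — which is circular; instead I argue directly that $\H\S(\zA)$ would then admit type \one: the two-element set $\Sgg{\zA'}{c,d}\fac\eta$ has underlying universe contained in $\zA$ (as the image of a subalgebra under a factor map), and because it is a \emph{reduct} of the corresponding factor of $\Sgg{\zA}{c,d}$, that factor is either still a set (giving a unary edge of $\cG(\zA)$, contradiction) or has a non-projection term operation $g$. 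In the latter case $g$ restricted to the $2$-element factor is a non-trivial idempotent binary (or higher) operation, so the factor is term equivalent to a semilattice, majority reduct, or the affine module on two elements; none of these has type \one, but then going back to $\zA$, since the congruence $\eta$ lifts (via the argument of Lemma~\ref{lem:edge-factor}(1)) to a congruence of $\Sgg\zA{c,d}$ whose factor contains this richer structure, we contradict either the no-unary-edge hypothesis on $\zA$ or produce the required edge; a clean way to package this is: a $2$-element idempotent algebra has type \one\ iff all its operations are projections iff it is a set, so the reduct being a set forces the original factor (which has \emph{more} operations) to be a set too only if those extra operations were also projections on this factor — and if some extra operation is not a projection on the factor, the original $cd$ was a semilattice, majority, or affine edge of $\cG(\zA)$, not unary, which is fine, but then I need that $cd$ is \emph{not} a unary edge of $\zA'$, contradiction. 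I expect this to require a short case analysis on the type of the $2$-element factor, using Proposition~\ref{pro:simple} to identify it.

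For part (2), assume $\cG(\zA)$ has no edge of the unary or affine type; I must show the same for $\zA'$. Part (1)'s argument already handles the unary case (the hypothesis "$\zA$ has no unary edge" is still available). For affine edges: since in case (2) I build $\zA'$ \emph{without} adding any affine witnesses $d_{ab}$ (there are none to add, as $\zA$ has no affine edge), I only need that no \emph{new} affine edge appears in $\cG(\zA')$. An affine edge $cd$ of $\cG(\zA')$ would give a $2$-element–factor–free module $\Sgg{\zA'}{c,d}\fac\eta$ which is a reduct of $\Sgg\zA{c,d}\fac{\eta'}$ for the lifted $\eta'$; that original factor is an idempotent algebra with a prime congruence interval of type \two\ (being abelian, simple after factoring, and non-set), hence $\H\S(\zA)$ admits type \two, contradicting the hypothesis via Theorem~\ref{the:connectedness}(3). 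Here I must be slightly careful that "module as a reduct" really forces type \two\ upstairs: by Proposition~\ref{pro:simple-abelian} a simple idempotent abelian algebra is a set or a module, and a reduct of an algebra that is a module (and not a set) is abelian, so the corresponding simple factor upstairs is abelian and non-set, hence a module, hence type \two.

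\medskip

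\textbf{The main obstacle.} The hard part is the bookkeeping in the converse direction: ensuring that dropping operations does not create a \emph{new} edge of a forbidden type (rather than just preserving old ones). The subtlety is that $\Sgg{\zA'}{c,d}$ can be strictly smaller than $\Sgg{\zA}{c,d}$, so its maximal congruences $\eta$ need not correspond to maximal congruences of the larger algebra, and the factor $\Sgg{\zA'}{c,d}\fac\eta$ is only a reduct of \emph{some} factor of $\Sgg\zA{c,d}$, not of the factor by a maximal congruence. Controlling this requires the lifting lemma (the mechanism of Lemma~\ref{lem:edge-factor}(1)) together with the structural classification of simple idempotent algebras (Proposition~\ref{pro:simple}) to pin down what the $2$-element or module factor can be, and then translating "admits type \one/\two\ somewhere in $\H\S$" back through Theorem~\ref{the:connectedness}. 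I expect the write-up to spend most of its length on exactly this translation, with the construction of $\zA'$ itself being short.
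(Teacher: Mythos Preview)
Your proposal has a genuine gap, and the approach differs substantially from the paper's.

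\medskip

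\textbf{The gap.} Your entire argument for why no \emph{new} unary or affine edge can appear in $\zA'$ rests on the claim that a maximal congruence $\eta$ of $\Sgg{\zA'}{c,d}$ ``lifts'' to some congruence $\eta'$ of $\Sgg{\zA}{c,d}$, so that $\Sgg{\zA'}{c,d}\fac\eta$ is a reduct of $\Sgg{\zA}{c,d}\fac{\eta'}$. This is false in general, and the direction is wrong: every congruence of $\zA$ is a congruence of any reduct $\zA'$, but congruences of a reduct need not be congruences of the original algebra (the reduct has fewer operations to respect, hence \emph{more} congruences). Moreover, $\Sgg{\zA'}{c,d}$ need not even be a subuniverse of $\zA$, so there is no algebra on the $\zA$-side whose congruence lattice you can compare $\eta$ against. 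Lemma~\ref{lem:edge-factor}(1), which you invoke, goes in the opposite direction (from a factor down to the algebra) and does not help here. Without this lifting, nothing prevents your very sparse reduct --- generated by one witness per edge --- from acquiring a $2$-element set quotient somewhere: dropping enough operations can easily create type~\one\ in $\H\S$, and your construction gives no control over this.

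\medskip

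\textbf{What the paper actually does.} First, note that as literally stated the theorem is trivial ($\zA'=\zA$ works); the intended content, clear from the surrounding text and Remark~\ref{rem:smooth}, is that $\zA'$ is \emph{smooth}: every thick semilattice or majority edge is a subalgebra of $\zA'$. Your construction does not address smoothness at all. The paper achieves smoothness by the opposite extreme from yours: for a single semilattice or majority edge $ab$ with witnessing congruence $\th$, it takes $\zA'=(A,F_{ab})$ where $F_{ab}$ is the set of \emph{all} term operations of $\zA$ preserving $\rel_{ab}=a\fac\th\cup b\fac\th$. This is the largest reduct making that one thick edge a subalgebra. The key invariant is not ``no unary/affine edges'' but \emph{$X$-connectedness} (every pair in every subalgebra is joined by a path using only edge types from $X$). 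Proposition~\ref{pro:adding} shows that passing from $\zA$ to $(A,F_{ab})$ preserves $X$-connectedness, via a detailed case analysis (Lemmas~\ref{lem:module-projection}, \ref{lem:subalgebra-old-new}, \ref{lem:old-new-edge}) comparing $\Sgg{\zA'}{c,d}$ with $\Sgg{\zA}{c,d}$ and exploiting the fact that $F_{ab}$ is large enough to contain suitably modified versions of the original operations. The short argument in the proof of Theorem~\ref{the:adding} then shows that $X$-connectedness rules out edges of types outside $X$ (any path from $a$ to $b$ inside $\Sg{a,b}$ must cross a $\th$-block boundary, producing an edge of the same bad type). Iterating over all semilattice/majority edges yields a smooth reduct with the desired omitting-types property.
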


\begin{remark}\label{rem:taylor-minimal}\rm
If we do not insist on a specific reduct in item (1) of Theorem~\ref{the:adding}, 
it allows for a  simpler proof
than we give here. It was suggested by Brady and works for
\emph{Taylor-minimal} algebras introduced in 
\cite{Brady17:taylor-minimal}.
A Taylor-minimal algebra is a finite Taylor algebra, that is, it has a
term operation satisfying some Taylor identities, and whose clone of term
operations is minimal among clones containing a Taylor operation. First,
it can be shown that if $\zA$ is an idempotent algebra such that
$\H\S(\zA)$ omits type one, there is a reduct of $\zA$ that is a Taylor-minimal
algebra. It suffices therefore to show that every thick edge of a Taylor-minimal
algebra is a subalgebra. Suppose that $\zA$ is a Taylor-minimal algebra,
then it has a cyclic term $f$, which can be assumed to be the only basic
operation of $\zA$. Let also $ab$ be a semilattice or majority edge
of $\cG(\zA)$ and congruence $\th$ witnesses that. Then the semilattice or
majority operation on $\{a\fac\th,b\fac\th\}$ is a cyclic term on this set, 
denote it $t$. The idea now is to compose $f$ and $t$ in such a way 
that the resulting term is still cyclic on $\zA$, but preserves 
$a\fac\th\cup b\fac\th$. Since $\zA$ is Taylor-minimal, this implies 
that $a\fac\th\cup b\fac\th$ is a subalgebra of $\zA$.

It is not clear, however, how to extend the argument above to item
(2). Also, item (1) can be proved for a wider class of algebras than 
Taylor-minimal ones. So, here we give a longer, but more general proof.
\end{remark}

\begin{remark}\label{rem:smooth}\rm
If an idempotent algebra satisfies the property that every one of its (thick) 
semilattice or majority edges is a subalgebra, we call it \emph{smooth}.
\end{remark}

The main auxiliary statement to prove Theorem~\ref{the:adding} is
Proposition~\ref{pro:adding}. We say that an algebra $\zA$ 
is \emph{$X$-connected}, where 
$X\sse\{$unary, semilattice, ma\-jo\-ri\-ty, affine$\}$, if for any 
subalgebra $\zB$ of $\zA$ and any $a,b\in\zB$ there is a path
in $\cG(\zB)$ connecting $a$ and $b$ and that only contains
edges of types from $X$. 
\begin{prop}\label{pro:adding}
Let $\zA$ be an idempotent algebra, let
$ab$ be an edge of $\cG(\zA)$ of semilattice or majority type and
$\th$ a congruence witnessing that, and let
$\rel_{ab}=a\fac\th\cup b\fac\th$.
Let also $F_{ab}$ denote the set of all term operations of $\zA$ preserving
$\rel_{ab}$ and $\zA'=(A,F_{ab})$. Then\\[2mm]
(1) if $\zA$ is $\{\text{semilattice, majority, affine}\}$-connected
then so is $\zA'$;\\[2mm]
(2) if $\zA$ is $\{\text{semilattice,majority}\}$-connected, then so is 
$\zA'$. 
\end{prop}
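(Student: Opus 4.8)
The plan is to prove all three parts at once by a single argument that shows connectivity paths in $\cG(\zB)$ (for $\zB$ a subalgebra of $\zA'$) can be ``lifted'' to paths witnessed by term operations that actually preserve $\rel_{ab}$. The key point is that $\zA'$ and $\zA$ have the same universe, and a subalgebra $\zB$ of $\zA'$ is the same subset as a subalgebra of $\zA$ (since $F_{ab}\sse\Term(\zA)$), so by hypothesis $a,b\in\zB$ are connected in $\cG(\zB_\zA)$ by a path using only edges of the allowed types. First I would reduce, using Lemma~\ref{lem:edge-subalgebra} and the definition of a path, to the case of a single edge $cd$ of $\cG(\zB)$ witnessed in $\zA$ by some congruence $\eta$ and term operation $t$: I need to show $cd$ is still an edge of $\cG(\zB)$ computed in $\zA'$, of the same (or an allowed) type. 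The obstacle is that the witnessing operation $t$ need not preserve $\rel_{ab}$, so $t\notin F_{ab}$ and the witness must be replaced.

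The main technical step is therefore a ``repair'' lemma: given a term operation $t$ of $\zA$ that is semilattice, majority, or affine on $\{c\fac\eta,d\fac\eta\}$, produce a term operation $t'\in F_{ab}$ that behaves the same way on that pair (possibly after replacing $\eta$ by a coarser congruence still below a maximal one). The idea is to use the edge $ab$ itself: since $ab$ is a semilattice or majority edge witnessed by $\th$, there is a term operation $s$ of $\zA$ with $s\fac\th$ semilattice or majority on $\{a\fac\th,b\fac\th\}$, and by Lemma~\ref{lem:tol-class}-type idempotency arguments the set $\rel_{ab}=a\fac\th\cup b\fac\th$ is closed under $s$ (after possibly iterating/symmetrizing $s$). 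One then composes $t$ with iterates of $s$ (in the style of the Brady argument sketched in Remark~\ref{rem:taylor-minimal}): for a semilattice edge $ab$ one forms a term that first ``collapses'' its arguments into $\rel_{ab}$ using $s$ and then applies $t$, arranging by idempotence that on $\{c\fac\eta,d\fac\eta\}$ the composite still witnesses the required type. The delicate case is the majority type for $t$, where one must preserve the three majority identities simultaneously; here I would iterate $s$ enough times (using that a finite algebra's operations stabilize) so that the ``closure into $\rel_{ab}$'' map is idempotent, and then check the majority identities for the composite coordinatewise. For part~(3), where $ab$ is semilattice and only semilattice edges are allowed, the same construction is cleaner: $s$ can be taken genuinely semilattice on $\{a\fac\th,b\fac\th\}$, so composing a semilattice operation $t$ on $\{c\fac\eta,d\fac\eta\}$ with $s$-closure keeps the result semilattice.

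A subtlety to handle carefully is congruences: the witness congruence $\eta$ for $cd$ in $\cG(\zB)$ was maximal in $\Sgg\zB{c,d}$ as a $\zB$-algebra, but after passing to $\zA'$ the subalgebra $\Sgg{\zA'}{c,d}$ may be strictly smaller than $\Sgg\zB{c,d}$, so its maximal congruences may differ. I would address this by the standard move (as in the proof of Lemma~\ref{lem:edge-factor}(1)): restrict $\eta$ to $\Sgg{\zA'}{c,d}$ and extend to a maximal congruence above it; the factor is still a set, semilattice, majority, or module on the relevant pair because these properties only depend on the term operations surviving in the factor, which the repaired operation $t'$ provides. The main obstacle I anticipate is making the ``$s$-closure is idempotent and respects $\rel_{ab}$'' claim precise enough to carry the majority case in part~(2) — essentially one needs that for a suitable iterate $s^{(k)}$, applying $s^{(k)}$ to any tuple over $\zB$ lands in $\rel_{ab}$-closed position while fixing elements already in $\{c\fac\eta,d\fac\eta\}$; this requires knowing that $c,d$ can be chosen inside $a\fac\th\cup b\fac\th$ or that $\eta$ and $\th$ interact suitably, and is where the bulk of the bookkeeping lies. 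Everything else is routine verification of identities after the composite term is written down.
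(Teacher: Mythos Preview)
Your very first reduction is wrong, and the whole plan rests on it. You assert that ``a subalgebra $\zB$ of $\zA'$ is the same subset as a subalgebra of $\zA$ (since $F_{ab}\sse\Term(\zA)$).'' The inclusion $F_{ab}\sse\Term(\zA)$ goes the other way: fewer operations means \emph{more} subalgebras, so every subalgebra of $\zA$ is a subalgebra of $\zA'$, not conversely. Indeed $\rel_{ab}$ itself is a subalgebra of $\zA'$ that is typically not a subalgebra of $\zA$ --- that is the entire point of the construction. Hence the hypothesis that $\zA$ is $X$-connected (a statement about subalgebras of $\zA$) cannot be applied to an arbitrary subalgebra $\zB$ of $\zA'$, and you have no path in $\cG(\zB)$ to repair. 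Concretely, for $c,d\in A$ one generally has $\Sgn{c,d}\subsetneq\Sgo{c,d}$, and a path connecting $c,d$ inside $\cG(\Sgo{c,d})$ may leave $\Sgn{c,d}$ entirely.

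This is not a bookkeeping issue but the core difficulty. The paper handles it by induction on $\Sub(\zA')$ and a substantial auxiliary result (Lemma~\ref{lem:subalgebra-old-new}): using the congruence generated by pairs such as $(f(c',d'),f(d',c'))$ (or the analogous majority version), it shows that either $\cH(\Sgn{c,d})$ is connected (so induction applies), or one already has semilattice edges to an intermediate $e$, or else $\Sgn{c,d}=\Sgo{c,d}$ as sets. Only in this last case can the $X$-connectedness of $\zA$ be invoked, and then Lemma~\ref{lem:old-new-edge} performs the ``repair'' you sketch. Even there your description is off: the trick is not to ``collapse arguments into $\rel_{ab}$'' (applying $s$ does not move general elements into $\rel_{ab}$), but to symmetrize, e.g.\ replace $g(x,y)$ by $g(f(x,y),f(y,x))$, which lands in $F_{ab}$ because $f$ is commutative modulo~$\th$ on $\{a\fac\th,b\fac\th\}$; and to keep the action on $\{c\fac\eta,d\fac\eta\}$ intact one first needs Lemma~\ref{lem:module-projection} to arrange that $f$ (or $m$) is a projection on the relevant quotient. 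Your proposal has neither the reduction to $\Sgn{c,d}=\Sgo{c,d}$ nor this projection step.
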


First, we show how Theorem~\ref{the:adding} is obtained from 
Proposition~\ref{pro:adding}.

\begin{proof}[Proof of Theorem~\ref{the:adding}]
Let $X$ be one of the sets used in Theorem~\ref{the:adding}, that is, $X$ is $\{$semilattice,majority$\}$, or $\{$semilattice,majority,affine$\}$. We
prove that if $ab$ is an edge of $\cG(\zB)$  of the unary or affine type for some algebra $\zB$, then there are $c,d\in\Sg{a,b}$ such that any path in $\cG(\Sg{c,d})$ from $c$ to $d$ contains an edge of the unary or affine type, respectively. Hence, if $\cG(\zA')$ contains an edge whose type is not in $X$, it is not $X$-connected.
By Proposition~\ref{pro:adding}, neither is $\zA$, which means that if $\zA'$ has a unary edge, then $\zA$ has a unary edge, while if $\zA'$ has no unary edge, but has an affine edge, the same property holds for $\zA$. Proceeding inductively, we reach a reduct in which all thick edges of the semilattice and majority type are subuniverses, which has a unary edge iff $\zA$ has a unary edge, and the same reduct has no unary edge nor an affine edge iff $\zA$ has no unary edge nor an affine edge.
 
Let $\th$ be a congruence of $\zB=\Sg{a,b}$ witnessing that $ab$ is an edge. Choose $c,d\in\zB$ such that $c\fac\th\ne d\fac\th$ and $\zC=\Sg{c,d}$ is minimal possible with this condition. Suppose also that $c=c_1,c_2\zd c_k=d$ is a 
path in $\cG(\zC)$ connecting $c$ and $d$. Then for some $i\in[k-1]$
it holds $c_i\fac\th\ne c_{i+1}\fac\th$. We show that $c_ic_{i+1}$
is an edge of $\cG(\zA)$ of the same type as $ab$.

By the choice of $c,d$ we have $\Sg{c_i,c_{i+1}}=\zC$. Let $\eta=\th\red\zC$. Clearly $\zC'=\zC\fac\eta$ is isomorphic to a subalgebra of $\zB'=\zB\fac\th$. As $\zB'$ 
is a set or a module depending on whether $ab$ has the unary or
affine type, so is $\zC'$. Therefore $\eta$ witnesses that $c_ic_{i+1}$
is an edge of the unary or affine type depending on whether $ab$ has 
the unary or affine type. It remains to show that $c_ic_{i+1}$ does not have any other type. Take a proper congruence $\chi$ of $\zC$, note that $\chi$ separates $c_i,c_{i+1}$ as they generate $\zC$. If $\chi\not\le\eta$ then some $\chi$-block $D$ is not contained in an $\eta$-block, a contradiction with the choice of $c,d$, as $D$ is a subuniverse and a proper subset of $\zC$. 
Finally, if $\chi\le\eta$, then a semilattice or majority operation on $\{c_i\fac\chi,c_{i+1}\fac\chi\}$ gives rise to a semilattice or majority operation on $\{c_i\fac\eta,c_{i+1}\fac\eta\}$ contradicting the assumption that $\zC$ is a module or a set.
\end{proof}

\begin{lemma}\label{lem:good-functions}
(1) Let $ab$ be a semilattice edge of an algebra $\zA$, let $\th$ be a
congruence of $\zA$ witnessing that, and $f$ a binary term operation
which is semilattice on $\{a\fac\th,b\fac\th\}$. Then $f$
can be chosen to satisfy (on $\zA$) any one of the two equations:
\[
f(x,f(x,y))=f(x,y), \qquad
f(f(x,y),f(y,x))=f(x,y).
\]
(2) Let $ab$ be a majority edge of an algebra $\zA$, let $\th$ be a
congruence of $\zA$ witnessing that, and $m$ a ternary term operation
which is majority on $\{a\fac\th,b\fac\th\}$. Then $m$
can be chosen to satisfy (on $\zA$) any one of the two equations:
\begin{align*}
m(x,m(x,y,y),m(x,y,y)) &= m(x,y,y), \\
m(m(x,y,z),m(y,z,x),m(z,x,y)) &= m(x,y,z).
\end{align*}
Moreover, if for some $c,d\in\zA$ and a congruence $\eta$ of $\Sgo{c,d}$ there is an operation that is semilattice or majority on $\{a\fac\th,b\fac\th\}$ and the first projection on $\{c\fac\eta,d\fac\eta\}$, then the operations $f,m$ above can be chosen such that they are the first projection on $\{c\fac\eta,d\fac\eta\}$.
\end{lemma}

\begin{proof}
(1) To show that $f$ can be chosen to satisfy the equation
$f(x,f(x,y))=f(x,y)$, for every $x,y\in A$, we consider the unary operation
$g_x(y)=f(x,y)$. There is a natural number $n_x$ such that
$g_x^{n_x}$ is an idempotent transformation of $A$. Let $n$ be
the least common multiple of the $n_x$, $x\in A$, and
$$
h(x,y)=f(\underbrace{x,f(x,\ldots f(x}_{\mbox{\footnotesize $n$
times}},y)\ldots)).
$$
Since $g^n_x(y)$ is idempotent for any $x\in A$, we have
$h(x,h(x,y))=g^n_x(g^n_x(y))=g^n_x(y)=h(x,y)$. Finally, as is easily
seen $h$ equals $f$ on $\{a\fac\th,b\fac\th\}$.

To show that $f$ can be chosen to satisfy the second equation,
consider the unary operation $g$ on $\zA^2$ given by
$(x,y)\mapsto(f(x,y),f(y,x))$. There is $n$ such that $g^n$ is
idempotent. Then
\[
h(x,y)=\underbrace{f(\dots f(f(f(}_{\mbox{\footnotesize $n$
times}}x,y),f(y,x)),f(f(y,x),f(x,y))\dots)
\]
satisfies the required equation and equals $f$ on
$\{a\fac\th,b\fac\th\}$.

\smallskip

(2) To show that $m$ can be chosen to satisfy the equation
$m(x,m(x,y,y),\lb m(x,y,y))=m(x,y,y)$, for every $x\in A$, we consider the
unary operation $g_x(y)=m(x,y,y)$. There is
a natural number $n_x$ such that $g_x^{n_x}$ is an idempotent
transformation of $A$. Let $n$ be the least common multiple of the
$n_x$, $x\in A$, and
\begin{align*}
h(x,y,z)&=g_x^{n-1}(m(x,y,z))=m(\underbrace{x,m(x,\ldots 
m(x}_{\mbox{\footnotesize $n$ times}},y,z),m(x,y,z)\ldots)),
\end{align*}
Observe that $h(x,y,y)=g_x^n(y)$.
Since $g^n_x(y)$ is idempotent for any $x\in A$, we have
$h(x,h(x,y,y),h(x,y,y))=g^n_x(g^n_x(y))=g^n_x(y)=h(x,y,y)$.
Finally, as is easily seen $h$ is a majority operation on
$\{a\fac\th,b\fac\th\}$.

For the second equation consider the unary operation $g$ on $\zA^3$
given by $(x,y,z)\mapsto(m(x,y,z),m(y,z,x),m(z,x,y))$. There is $n$ such
that $g^n$ is idempotent. Then
\begin{align*}
h(x,y,z) &= \underbrace{m(\dots m(m(m(}_{\mbox{\footnotesize $n$
times}}x,y,z),m(y,z,x),m(z,x,y)))\dots).
\end{align*}
satisfies the required equation and equals $m$ on
$\{a\fac\th,b\fac\th\}$.

\smallskip

To prove the last statement of the lemma it suffices to observe that if we start with $f$ or $m$ that is the first projection on $\{c\fac\eta,d\fac\eta\}$, the resulting operation also satisfies this property.
\end{proof}

In the rest of this section we assume the conditions and notation used in Proposition~\ref{pro:adding}.
Recall that the subalgebra of $\zA$ generated by a set
$B\sse A$ is denoted by $\Sgo B$, while the subalgebra of
$\zA'$ generated by the same set is denoted by $\Sgn B$. In
general, $\Sgn B\sse\Sgo B$. In what follows, let $f$
[respectively, $m$] be a term operation of $\zA$ that witnesses
that $ab$ is a semilattice [respectively, majority] edge, that is,
such that $f\fac\th$ [respectively, $m\fac\th$] is a
semilattice [respectively, majority] operation on
$B'=\{a\fac\th,b\fac\th\}$. 
We start with two lemmas.

\begin{lemma}\label{lem:module-projection}
Let $c,d\in\zA$ be such that there exists a congruence $\eta$ of $\zC=\Sgn{c,d}$ such that $\zC'=\Sgn{c,d}\fac\eta$ is a subalgebra of a reduct of a module. Then operation $f$ or $m$ can be chosen to be the first projection on $\zC'$.
\end{lemma}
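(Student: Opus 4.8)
The plan is to exploit that $\zC'$, being a set or a module, is abelian, so the operations it carries are very rigid. The first step is to observe that the witnessing operations actually lie in $F_{ab}$. Indeed, if $f\fac\th$ is semilattice, respectively $m\fac\th$ majority, on $B'=\{a\fac\th,b\fac\th\}$, then for any arguments taken from the two blocks $a\fac\th,b\fac\th$ the value of $f$, respectively $m$, again lands in one of those blocks, so $f$ and $m$ preserve $\rel_{ab}$. Hence $f,m$ are term operations of $\zA'=(A,F_{ab})$; they induce term operations of $\zC=\Sgn{c,d}$, and therefore of $\zC'=\zC\fac\eta$, which I denote $f\fac\eta$ and $m\fac\eta$. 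When $\zC'$ is a set these are projections; when $\zC'$ is a module, every idempotent term operation of $\zC'$ is an affine combination with coefficients summing to $1$, so $f\fac\eta(x,y)=\lambda x+\mu y$ with $\lambda+\mu=1$ and $m\fac\eta(x,y,z)=\alpha x+\beta y+\gamma z$ with $\alpha+\beta+\gamma=1$ in the ground ring. The aim is then to replace $f$, respectively $m$, by a composition of itself that is still semilattice, respectively majority, on $B'$, but whose induced operation on $\zC'$ is the first projection.

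For a semilattice edge I would first use Lemma~\ref{lem:good-functions}(1) to assume that $f(x,f(x,y))=f(x,y)$ holds throughout $\zA$. This makes the unary polynomial $y\mapsto f(c_0,y)$ idempotent for each $c_0$, and on $\zC'$ this forces $\mu(1-\mu)$ to act as $0$; so on $\zC'$ we may treat $\mu$ and $\lambda=1-\mu$ as orthogonal idempotents, that is $\mu^2=\mu$, $\lambda^2=\lambda$, $\lambda\mu=\mu\lambda=0$. Now put $h(x,y)=f(f(x,y),f(y,x))$. Every semilattice operation $s$ satisfies $s(s(x,y),s(y,x))=s(x,y)$, so $h$ is again semilattice on $B'$ and still witnesses the edge; and in the module
\[
h\fac\eta(x,y)=\lambda(\lambda x+\mu y)+\mu(\lambda y+\mu x)=(\lambda^2+\mu^2)x+(\lambda\mu+\mu\lambda)y=x,
\]
so $h\fac\eta$ is the first projection. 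In the set case $f\fac\eta$ is already one of the two projections and $h\fac\eta$ is the first projection in either case, so the same $h$ does the job.

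For a majority edge I would proceed analogously using Lemma~\ref{lem:good-functions}(2): the equation $m(x,m(x,y,y),m(x,y,y))=m(x,y,y)$ makes the unary maps $y\mapsto m(c_0,y,y)$ idempotent, which on $\zC'$ makes $\beta+\gamma$ an idempotent of the ring orthogonal to $\alpha=1-(\beta+\gamma)$. I would then, as in the semilattice case, replace $m$ by a symmetrization of itself --- a composition of $m$ with cyclic shifts of its variables, of a shape that every majority operation on a two-element set is fixed by --- chosen so that on the module $\zC'$ the coefficient triple $(\alpha,\beta,\gamma)$ is moved toward $(1,0,0)$; iterating this replacement finitely many times gives the first projection on $\zC'$. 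The set case is again immediate.

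The step I expect to be the main obstacle is this last one in the majority case. Here one must pin down a concrete composition of $m$ that is simultaneously fixed by majority operations on $B'$ and, in the ground ring, genuinely annihilates both off-diagonal coefficients $\beta$ and $\gamma$ --- not merely makes their sum idempotent. In the semilattice case the single symmetrization $f(f(x,y),f(y,x))$ already accomplishes everything once the unary maps have been made idempotent, but for a majority operation there are two coefficients to eliminate, so one has to interleave the idempotency reductions of Lemma~\ref{lem:good-functions}(2) with a carefully chosen symmetrization and verify, by an explicit computation with the ring elements $\alpha,\beta,\gamma$, that the resulting iteration converges to the first projection without ever leaving the class of operations that are majority on $B'$. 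Establishing the existence of such a composition is the technical heart of the lemma.
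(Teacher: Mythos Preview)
Your semilattice argument is correct and in fact a bit slicker than the paper's: rather than iterating $f(\cdot,y)$ until the leading coefficient becomes an idempotent of the ring, you use the identity $f(x,f(x,y))=f(x,y)$ from Lemma~\ref{lem:good-functions}(1) to force $\mu=1-\lambda$ idempotent immediately, and then the single symmetrization $f(f(x,y),f(y,x))$ kills the off-diagonal coefficient. Both routes land in the same place.

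The majority case, however, has a genuine gap, and it is precisely the one you flag. Your plan is to stay inside the clone generated by $m$ alone and hope that some iterated symmetrization drives $(\alpha,\beta,\gamma)$ to $(1,0,0)$ on $\zC'$ while remaining majority on $B'$. You never exhibit such a composition, and the identity you invoke from Lemma~\ref{lem:good-functions}(2) only makes $\beta+\gamma$ idempotent --- it gives no handle on $\beta$ and $\gamma$ separately, and the cyclic symmetrization $m(m(x,y,z),m(y,z,x),m(z,x,y))$ is a \emph{fixed point} of the coefficient map, not a contraction toward a projection. The paper does not try to push this through. Instead it uses a resource you ignore: since $\zC'$ is a module and a quotient of a subalgebra of $\zA'=(A,F_{ab})$, there is some $h\in F_{ab}$ acting as the Mal'tsev operation $x-y+z$ on $\zC'$. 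Because $ab$ is majority and not semilattice, on the $2$-element set $B'$ this $h$ must (by Post) be a projection, majority, a $2/3$-minority, or the minority operation; a short case analysis then combines $h$ with $m$ to build a binary term $p$ that is the first projection on $B'$ and the second projection on $\zC'$. Setting $m'(x,y,z)=p(m(x,y,z),y)$ then gives a majority operation on $B'$ which is a projection on $\zC'$, and a permutation of variables makes it the first. The missing idea is exactly this: exploit the Mal'tsev term of the module rather than trying to manufacture the projection out of $m$ alone.
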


\begin{proof}
As $\zC'$ is a  reduct of a module over some ring $\zK$, the operations $f$ or $m$ (note, they belong to $F_{ab}$) on $\zC'$ have the form
\[
f(x,y)=\al x+(1-\al)y, \qquad m(x,y,z)=\al x+\beta y+\gm z,\ \ \al+\beta+\gm=1.
\]
Let $n$ be such that $\al^n$ is an idempotent of $\zK$. Then set
\[
f'(x,y)=\underbrace{f(f(\ldots f}_{\mbox{\footnotesize $n$ times}}(x,y) \ldots,y),y).
\]
Since $f$ and $f'$ are idempotent, $f'(x,y)=\al^nx+(1-\al^n)y$ on $\zC'$ and $f'(x,y)=f(x,y)$ on $B'$. If $\al^n\in\{0,1\}$, then $f'(x,y)$ or $f'(y,x)$ is the first projection on $\zC'$ and we are done. If $\al^n$ is a nontrivial idempotent, set
\[
f''(x,y)=f'(f'(x,y),x).
\]
Again, $f''(x,y)=f(x,y)$ on $B'$. On the other hand, on $\zC'$ we have
\[
f''(x,y)=((\al^n)^2+(1-\al^n))x+\al^n(1-\al^n)y=x.
\]
 
For the operation $m$ we need to perform several steps similar to the ones above. The goal is to make sure that $\al,\beta,\gm$ can be assumed to be idempotents of $\zK$ and such that $\al\beta=\al\gm=\beta\gm=0$. For such $m$ we can set 
\begin{align*}
m'(x,y,z) &= m(m(x,y,z),m(z,x,y),m(y,z,x))\\
&= (\al^2+\beta^2+\gm^2)x +(\al\beta+\beta\gm+\gm\al)y+(\al\gm+\beta\al+\gm\beta)z\\
&= x+\gm\al y+(\beta\al+\gm\beta)z.
\end{align*}
As $m'$ is idempotent, $p(x,y)=m'(x,y,y)=x$ on $\zC'$ and $p(x,y)=y$ on $B'$. Then let $m''(x,y,z)=p(x,m(x,y,z))$ which is equal to $m(x,y,z)$ on $B'$, and $m''(x,y,z)=x$ on $\zC'$.

It remains to prove that $m$ with the required properties exists. Again, assume that $\al^n$ is an idempotent of $\zK$ and set 
\[
m'(x,y,z)=\underbrace{m(m(\ldots m}_{\mbox{\footnotesize $n$ times}}(x,y,z) \ldots,y,z),y,z).
\]
We have $m'(x,y,z)=\al^nx+\beta'y+\gm'z$ on $\zC'$ for some $\beta',\gm'\in\zK$ with $\al^n+\beta'+\gm'=1$, and, as is easily seen, $m'$ is a majority operation on $B'$. Next, set 
\begin{align*}
m''(x,y,z) &=m'(x,m'(x,y,z),m'(x,y,z))\\
&=(\al^n+(\beta'+\gm')\al^n) x+(\beta'+\gm')\beta'y+(\beta'+\gm')\gm'z.
\end{align*}
Since $\beta'+\gm'=1-\al^n$ and so $\al^n(\beta'+\gm')=(\beta'+\gm')\al^n=0$, we get
\begin{align*}
\al^n+(\beta'+\gm')\al^n &= \al^n,\\
\al^n(\beta'+\gm')\beta' &= 0,\\
\al^n(\beta'+\gm')\gm' &= 0.
\end{align*}
Since $m''$ is still a majority operation on $B'$, we may assume that for $m$ it holds that $\al^2=\al$, $\al\beta=\al\gm=0$. 

Next, let $\beta^\ell$ be an idempotent of $\zK$. Repeat the steps above for $y$:
\[
m'(x,y,z)=\underbrace{m(x,m(x,\ldots m}_{\mbox{\footnotesize $\ell$ times}}(x,y,z) \ldots,z),z).
\]
As is easily seen, $m'$ is a majority operation on $B'$ and
\[
m'(x,y,z)=(1+\beta+\dots+\beta^{\ell-1})\al x+\beta^\ell y+\gm'z
\]
for some $\gm'\in\zK$. Since $\al\beta=0$, for the first coefficient we have 
\begin{align*}
((1+\beta+\dots+\beta^{\ell-1})\al)^2 &= (1+\beta+\dots+\beta^{\ell-1})\al+\sum_{i=0,j=1}^{\ell-1,\ell-1}\beta^i\al\beta^j\al\\
&= (1+\beta+\dots+\beta^{\ell-1})\al.
\end{align*}
Thus, the first and second coefficients are idempotent. Denote them $\al'$ and $\beta'$, respectively. Note also that $\al'\beta'=0$ and $\al'\gm'=\al'(1-\al'-\beta')=0$. Next, set 
\begin{align*}
m''(x,y,z) &=m'(m'(x,y,z),y,m'(x,y,z))\\
&=(\al'+\gm')\al' x+(\al'+1+\gm')\beta'y+(\al'+\gm')\gm'z.
\end{align*}
We have 
\begin{align*}
& ((\al'+\gm')\al')^2 = (\al'+\gm'\al')^2=\al'+\al'\gm'\al'+\gm'\al'^2+\gm'\al'\gm'\al'=\al'+\gm'\al',\\
& ((\al'+1+\gm')\beta')^2 = (\beta'+(1-\beta')\beta')^2=\beta'^2=\beta',\\
& (\al'+\gm')\al'(\al'+1+\gm')\beta' = (\al'+\gm')\al'\beta'=0,\\
& (\al'+\gm')\al'(\al'+\gm')\gm' = (\al'+\gm')\al'\gm'=0\\
& (\al'+1+\gm')\beta'(\al'+\gm')\gm' = (\al'+1+\gm')\beta'(1-\beta')\gm'=0.
\end{align*}
Moreover, it is straightforward that $m''$ is a majority operation on $B'$. Assuming that the original operation $m$ satisfies these conditions we have $\al^2=\al, \beta^2=\beta, \al\beta=\al\gm=\beta\gm=0$. 

Finally, we repeat the first step again for $z$ by assuming that $\gm^k$ is an idempotent of $\zK$ and setting
\[
m'(x,y,z)=\underbrace{m(x,y,m(x,y,\ldots m}_{\mbox{\footnotesize $k$ times}}(x,y,z) \ldots)).
\]
As before, it is straightforward to verify that $m'$ is a majority operation on $B'$ and that the coefficients $\al',\beta',\gm'$ of $m'$ on $\zC'$ satisfy the conditions $\al'^2=\al', \beta'^2=\beta', \gm'^2=\gm'$, $\al'\beta'=\al'\gm'=\beta'\gm'=0$. 
\end{proof}

Notice that if $\Sgg\zA{c,d}$ is a set or a module, then $\Sgg{\zA'}{c,d}$ is a subalgebra of a reduct of a module. Hence we obtain the following corollary.

\begin{corollary}\label{cor:module-projection}
If $cd$ is a unary or affine edge witnessed by congruence $\eta$
of $\Sgo{c,d}$, then $f$ or $m$ can be chosen to be the first 
projection on $\Sgo{c,d}\fac\eta$.
\end{corollary}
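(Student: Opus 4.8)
The plan is to obtain this from Lemma~\ref{lem:module-projection} together with the observation just made, plus one elementary remark about cyclic modules.

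If $\Sgo{c,d}\fac\eta$ is a set then, since every subset of a set is a subuniverse, $\Sgo{c,d}\fac\eta=\{c\fac\eta,d\fac\eta\}$ has at most two elements, and the set case is handled exactly as at the start of the proof of Lemma~\ref{lem:module-projection}: one permutes the arguments of $f$ (resp.\ $m$) so that it becomes the first projection on this two-element set, noting that this does not affect its being semilattice (resp.\ majority) on $B'$, since those identities are invariant under the relevant permutations of arguments. So suppose $cd$ is an affine edge. Then $\Sgo{c,d}\fac\eta$ is the full idempotent reduct of a module, and since it is generated by $c\fac\eta$ and $d\fac\eta$ it is in fact the full idempotent reduct of a cyclic module, generated as a module by $\delta=d\fac\eta-c\fac\eta$. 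By the observation, $\Sgn{c,d}\fac\eta$ is a subalgebra of $\Sgo{c,d}\fac\eta$ and hence again a module, so Lemma~\ref{lem:module-projection} applies (to the restriction of $\eta$ to $\Sgn{c,d}$) and yields a term operation $g$ of $\zA$ that agrees with $f$ (resp.\ $m$) on $B'$ --- so it still witnesses the type of $ab$ --- and is the first projection on $\Sgn{c,d}\fac\eta$.

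It remains to upgrade ``first projection on $\Sgn{c,d}\fac\eta$'' to ``first projection on $\Sgo{c,d}\fac\eta$''. Since $\Sgo{c,d}\fac\eta$ is a module, $g$ acts on it as an affine operation: $g(x,y)=\gamma x+(1-\gamma)y$ in the binary case and $g(x,y,z)=\alpha x+\beta y+(1-\alpha-\beta)z$ in the ternary case. Evaluating $g$ at the generators $c\fac\eta,d\fac\eta$, where $g$ is known to be the first projection, gives $(1-\gamma)\delta=0$, respectively $(1-\alpha)\delta=0$ and $\beta\delta=0$; as $\delta$ generates the module, $1-\gamma$ (resp.\ $1-\alpha$ and $\beta$) annihilates all of it, so $g$ is the first projection on $\Sgo{c,d}\fac\eta$ as well.

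The step I expect to need the most care is this last one --- that the affine behaviour of $g$ on $\Sgo{c,d}\fac\eta$ is pinned down by its values on the two generators --- and it is precisely there that one uses the cyclicity of the module, which in turn comes from its being a $2$-generated full idempotent reduct of a module. If one would rather not invoke Lemma~\ref{lem:module-projection} as a black box, an alternative is to rerun its proof with $\Sgo{c,d}\fac\eta$ in place of $\zC'$: every term operation of $\zA$ acts affinely on this module, and an affine term $x-y+z$ on it lifts to a term operation of $\zA$ (which need not belong to $F_{ab}$, but that is not used here), so each composition performed in that proof still makes sense and has the stated effect.
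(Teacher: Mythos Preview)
Your route is the paper's one-line argument (the observation just before the corollary, plus Lemma~\ref{lem:module-projection}), with an explicit ``upgrade'' step from $\Sgn{c,d}\fac\eta$ to $\Sgo{c,d}\fac\eta$ that the paper leaves implicit. The set case is fine. Two points in the affine case deserve attention.

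First, the upgrade step has a gap. You argue that $(1-\gamma)\delta=0$ (respectively $(1-\alpha)\delta=0$ and $\beta\delta=0$) forces $1-\gamma$ (respectively $1-\alpha$, $\beta$) to annihilate the whole cyclic module, but this needs $(1-\gamma)r\delta=0$ for every $r\in\zK$, which does not follow from $(1-\gamma)\delta=0$ unless $\zK$ is commutative. Nothing in the setup forces commutativity: Proposition~\ref{pro:simple-abelian} only says $\Sgo{c,d}\fac\eta$ is the full idempotent reduct of a simple module, and this could be, for instance, $\zF_q^{\,n}$ over $\zK=M_n(\zF_q)$ with $n>1$, where the left annihilator of a single vector is not a two-sided ideal and a matrix $1-\gamma$ can kill $\delta$ without killing $r\delta$. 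The clean way around this is not to upgrade after the fact but to read the \emph{proof} of the lemma: in the binary case the constructions $f',f''$ compute the ring coefficients explicitly and show them to be $1,0$, so $f''$ is the first projection on all of $\Sgo{c,d}\fac\eta$, with no upgrade needed.

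Second, your closing remark---that one may rerun the lemma's proof with $\Sgo{c,d}\fac\eta$ in place of $\zC'$ and that the Mal'tsev term $h$ ``need not belong to $F_{ab}$, but that is not used here''---is incorrect for the ternary (majority) case. The case split on how $h$ behaves on $B'$ (projection, majority, $2/3$-minority, or minority) only makes sense if $h$ preserves $B'$, i.e.\ $h\in F_{ab}$; otherwise $h\fac\th$ need not send $B'$ to $B'$ and the Post-style classification does not apply. So the membership $h\in F_{ab}$ \emph{is} used, and this is exactly what the paper's observation (that $\Sgn{c,d}\fac\eta$ is itself a module as an $\zA'$-algebra, hence has a Mal'tsev term among the $F_{ab}$-operations) is meant to supply.
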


\begin{lemma}\label{lem:subalgebra-old-new}
Let $c,d\in\zA$ be such that $cd$ is an edge (of any type including the
unary type) in $\zA'$. Let also $\eta$ be a congruence of 
$\Sgn{c,d}$ witnessing that, and assume that $\Sgo{c',d'}=\Sgo{c,d}$, 
for any $c',d'\in\Sgn{c,d}$ with $c'\in c\fac\eta,d'\in d\fac\eta$. Then one of the following holds.
\begin{itemize} 
\item[(i)]
$cd$ is a semilattice edge or there is $e\in\Sgn{c,d}$ such that $ce,de$ are semilattice edges
in $\zA'$, or
\item[(ii)]
$\cH(\Sgn{c,d})$ is connected, or
\item[(iii)]
$\Sgn{c,d}=\Sgo{c,d}$ as sets, or
\item[(iv)]
$ab$ is a majority edge in $\zA$ and $cd$ is a semilattice or majority edge in 
$\Sgn{c,d}$.
\end{itemize}
\end{lemma}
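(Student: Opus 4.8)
The plan is to pass to the simple factor $\zC\fac\eta$, where $\zC=\Sgn{c,d}$, classify it by Proposition~\ref{pro:simple}, and, in those cases the classification does not immediately dispose of, to rebuild the relevant term operations inside $\zA'$.

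\smallskip
\noindent\emph{Setup and the direct cases.} Since $\eta$ is a maximal congruence of $\zC$ witnessing that $cd$ is an edge, $\zC\fac\eta$ is a simple idempotent algebra generated by $\bar c=c\fac\eta$ and $\bar d=d\fac\eta$, and its term operations are factors of term operations of $\zA'$. In case (a) the edge type forces $\zC\fac\eta$ to be a $2$-element set (unary edge) or a module (affine edge); in case (b) it carries a term operation that is semilattice or majority on $\{\bar c,\bar d\}$, so it is neither. Now run Proposition~\ref{pro:simple} on $\zC\fac\eta$. If $\cH(\zC\fac\eta)$ is connected, then $\cH(\zC)$ is connected by Lemma~\ref{lem:edge-factor}(2) --- conclusion (ii). If we are in case (4) of Proposition~\ref{pro:simple} with $\bar c,\bar d$ already connected in $\cH(\zC\fac\eta)$, a connecting path pulls back --- preimages of proper subalgebras of $\zC\fac\eta$ are proper subalgebras of $\zC$, since $\eta$-blocks are subalgebras --- to connect $c$ and $d$ in $\cH(\zC)$, again (ii). If $\Sgn{c,d}=\Sgo{c,d}$ as sets, that is (iii). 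And if $ab$ is a majority edge then, by Lemma~\ref{lem:edge-subalgebra}(1), $cd$ is a semilattice or majority edge in $\Sgn{c,d}$ as soon as it is such in $\zA'$ --- conclusion (iv). So the one remaining situation is: $\Sgn{c,d}\subsetneq\Sgo{c,d}$; $ab$ is a semilattice edge, or ($ab$ is a majority edge and $cd$ has the unary or affine type); and $\zC\fac\eta$ is a $2$-element set, a module, or is of type (4) of Proposition~\ref{pro:simple} with $\bar c,\bar d$ not connected, carrying a semilattice or a majority-plus-swapping-automorphism term on $\{\bar c,\bar d\}$. In every such case one must establish conclusion (i).

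\smallskip
\noindent\emph{Producing the bridging element.} Here I would adapt the argument of the proof of Proposition~\ref{pro:simple} to the algebra $\zC\fac\eta$, but carried out inside $\zA'$, i.e.\ using only operations from $F_{ab}$. Let $g\in F_{ab}$ witness that $ab$ is a semilattice or majority edge ($g=f$ or $g=m$); when $\zC\fac\eta$ is a set or module, Lemma~\ref{lem:module-projection} lets me take $g$ to be the first projection on $\zC\fac\eta$, and Lemma~\ref{lem:good-functions} lets me impose the relevant idempotency identity on $g$. From $\Sgn{c,d}\subsetneq\Sgo{c,d}$ one gets a term operation of $\zA$ lying outside $F_{ab}$ --- equivalently, one that fails to preserve $\rel_{ab}=a\fac\th\cup b\fac\th$ --- that carries a tuple of elements of $\zC$ outside $\zC$; this is the source of the sought element $e$. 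Concretely, I would form inside $\zA'$ the binary relation generated by $(c,d)$ and $(d,c)$ --- together with the graph of a swapping automorphism of $\zC\fac\eta$ and the ternary relation generated by $(c,c,d),(c,d,c),(d,c,c)$ in the majority-plus-automorphism subcase --- analyze its link tolerances via Lemma~\ref{lem:binary-tol}, and extract an element $e\in\Sgn{c,d}$ with $(c,e)$ and $(d,e)$ in the relation. The hypothesis that $\Sgo{c',d'}=\Sgo{c,d}$ for all $c',d'\in\Sgn{c,d}$ with $c'\in c\fac\eta$, $d'\in d\fac\eta$ is precisely what prevents these relations from collapsing to the graph of a bijection or to a congruence equal to a coordinate projection. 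Chasing $e$ through the relation --- inserting $g$ to repair identities and using Corollary~\ref{cor:module-projection} to control the unary or affine subalgebras that appear --- yields binary term operations of $\zA'$ that are semilattice on $\{c,e\}$ and on $\{d,e\}$, so $ce$ and $de$ are semilattice edges of $\cG(\zA')$: this is conclusion (i). (Should one of the auxiliary subalgebras $\Sgn{c,e}$, $\Sgn{d,e}$ turn out proper, the same computation instead gives $c$ and $d$ connected in $\cH(\zC)$, i.e.\ (ii).)

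\smallskip
\noindent\emph{Main obstacle.} The genuine difficulty is not the combinatorics of these relations but verifying that every term operation produced when repairing identities with $g$ still preserves $\rel_{ab}$, hence lies in $F_{ab}$. This is where the hypothesis that $ab$ is a \emph{semilattice} rather than a \emph{majority} edge is decisive: a $g$ that is semilattice on $\{a\fac\th,b\fac\th\}$ is strong enough to keep the composites inside $F_{ab}$ and so deliver one of (i)--(iii), while a merely majority $g$ need not be, which is exactly why in case (b) the conclusion has to leave open the weaker alternative (iv). A secondary point of care is carrying the swapping automorphism of $\zC\fac\eta$ through the ternary construction while checking that the auxiliary ternary terms appearing there, as in Case~2 of the proof of Proposition~\ref{pro:simple}, can be chosen in $F_{ab}$.
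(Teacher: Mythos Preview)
Your plan misses the central mechanism of the proof and, as written, does not reach conclusion~(iii) at all.

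The paper does not run Proposition~\ref{pro:simple} on $\zC\fac\eta$ and then patch up the leftovers. Instead it introduces on $\zC=\Sgn{c,d}$ the \emph{defect congruence} $\chi$ generated by all pairs $(f(c',d'),f(d',c'))$ for $c',d'\in\zC$ when $ab$ is semilattice, and by the analogous triples $(m(c',d',d'),m(d',c',d'))$, etc., when $ab$ is majority. This $\chi$ has nothing to do with $\eta$; it measures exactly the failure of $f$ (resp.\ $m$) to be commutative (resp.\ to satisfy $m(x,y,y)=m(y,x,y)=m(y,y,x)$) on $\zC$. If $\chi$ is not total, $f$ becomes commutative on $\zC\fac\chi$, and with Lemma~\ref{lem:good-functions}(1) the element $e=f(c,d)$ gives the two semilattice edges of conclusion~(i). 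In the majority case one passes to a maximal congruence above $\chi$ and only \emph{then} invokes the classification of simple algebras---this is where Proposition~\ref{pro:simple} actually enters, and where~(iv) arises.

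If $\chi$ is total, the generating pairs either yield proper subalgebras connecting $c,d$ in $\cH(\zC)$, or one of their polynomial images $(g(c,d,c'),g(c,d,d'))$ satisfies $\Sgo{g(c,d,c'),g(c,d,d')}=\Sgo{c,d}$ (this is where the hypothesis on $\Sgo{c',d'}$ is used). In the latter case every $e\in\Sgo{c,d}$ is $h(g(c,d,c'),g(c,d,d'))$ for some $\zA$-term $h$, and the composite $h'(x,y,z,t)=h(g(x,y,f(z,t)),g(x,y,f(t,z)))$ lies in $F_{ab}$: on $B'=\{a\fac\th,b\fac\th\}$ the operation $f$ is commutative, so the two arguments of $h$ coincide and $h'$ collapses to $g(x,y,f(z,t))\in F_{ab}$. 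Thus $e=h'(c,d,c'',d'')\in\Sgn{c,d}$, establishing~(iii). The majority case uses that $m(z,t,t)=m(t,z,t)=t$ on $B'$ in the same way. This symmetrization---feeding $h$ two expressions that are equal on $B'$ but recover the needed pair on $\zC$---is the real content of the lemma, and your plan has no analogue for it.

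Your proposed route, adapting the link-tolerance argument from the relation generated by $(c,d),(d,c)$, aims at the wrong target: that argument manufactures semilattice or majority behaviour on a simple algebra, whereas here $\zC$ need not be simple and the dominant conclusion is~(iii), not~(i). Your sentence ``a term operation of $\zA$ lying outside $F_{ab}$ \ldots carries a tuple of elements of $\zC$ outside $\zC$; this is the source of the sought element $e$'' is self-contradictory, since such an element is outside $\Sgn{c,d}$ and hence cannot be the $e$ of conclusion~(i).
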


\begin{proof}
Let $\zC=\Sgn{c,d}$. 

\medskip
\noindent
{\sc Case 1.}
$ab$ is a semilattice edge.

Recall that $f$ is a term operation of $\zA$ semilattice on 
$\{a\fac\th,b\fac\th\}$, where $\th$ is a congruence of $\Sgo{a,b}$ witnessing
that $ab$ is a semilattice edge. By Lemma~\ref{lem:good-functions}(1), $f$ can be assumed to satisfy the equation $f(x,f(x,y))=f(x,y)$. We  consider the congruence $\chi$ of $\zC$ generated by the set\lb $D=\{(f(c',d'),f(d',c'))\mid c',d'\in\zC\}$. Note that 
if $\chi$ is not the total congruence then $f$ is commutative on $\zC\fac\chi$.

\medskip

\noindent
{\sc Case 1.1.}
$\chi$ is not the total congruence.\\[2mm]
As $f$ satisfies the equation $f(x,f(x,y))=f(x,y)$, it is a semilattice operation on $\{c\fac\chi,f(c\fac\chi,d\fac\chi)\}$ and
$\{f(c\fac\chi,d\fac\chi),d\fac\chi\}$, implying condition (i) of Lemma~\ref{lem:subalgebra-old-new}, where $e=f(c,d)$, unless $f(c\fac\chi,d\fac\chi)\in\{c\fac\chi,d\fac\chi\}$, in which case $cd$ is a semilattice edge.

\medskip

\noindent
{\sc Case 1.2.} 
$\chi$ is the total congruence\\[2mm]
The congruence $\chi$ 
is the transitive closure of the set 
$D'=\{(p(c'),p(d'))\mid (c',d')\in D, \text{ and $p$ is a unary polynomial of
$\zC$}\}$. For every such unary polynomial there is a term operation
$g$ of $\zC$ such that $p(x)=g(c,d,x)$.
We first show that if, for every pair $(g(c,d,c')$, $g(c,d,d'))$, where $(c',d')\in D$,
and $g$ is a term operation of $\zC$,
the subalgebra $\Sgo{g(c,d,c'),g(c,d,d')}$ of $\zA$ is a
proper subalgebra of $\Sgo{c,d}$, then $\cH(\zC)$ is connected implying 
condition (ii) or (ii).
Firstly, since $\chi$ is the total congruence, any two elements from $\zC$ are 
connected by a sequence of subalgebras of $\zC$ generated by 
pairs from $D'$. Secondly, all these subalgebras
are proper. Indeed, if $\Sgn{g(c,d,c'),g(c,d,d')}=\zC$ for some $c',d'$, then
$c,d\in\Sgn{g(c,d,c'),g(c,d,d')}\sse\Sgo{g(c,d,c'),g(c,d,d')}$. Therefore,
$\Sgo{g(c,d,c'),g(c,d,d')}=\Sgo{c,d}$, a contradiction. 

Suppose now that, for a certain $(c',d')\in D$ and a ternary term
operation $g$ of $\zC$, we have $\Sgo{g(c,d,c'),g(c,d,d')}=\Sgo{c,d}$.
Then for any $e\in\Sgo{c,d}$, there is a term operation $h$ of 
$\zA$ such that $h(g(c,d,c'),g(c,d,d'))=e$. Consider 
$h'(x,y,z,t)=h(g(x,y,f(z,t)),g(x,y,f(t,z)))$. We have\lb
$h'\red{B'}(x,y,z,t)=g(x,y,f(z,t))$, where $B'=\{a\fac\th,b\fac\th\}$.
Hence, $h'\in F_{ab}$. On the
other hand, $h'(c,d,c'',d'')=e$, where $c'',d''$ are such that $f(c'',d'')=c'$,
$f(d'',c'')=d'$. Thus, $\Sgn{c,d}=\Sgo{c,d}$, and we obtain item (iii).

\medskip

\noindent
{\sc Case 2.} 
$ab$ is a majority edge.

Let $\chi'$ be the congruence of $\zC$ generated by
\begin{align*}
D&=\{(m(c',d',d'),m(d',c',d')),
(m(c',d',d'),m(d',d',c')),(m(d',d',c'),\\
&  \ \ \ \ \ \ \ \ \ \ \ m(d',c',d'))\mid c',d'\in\zC\}.
\end{align*}
As in Case~1 if $\chi'$ is not the total congruence, then 
$m(x,y,y)=m(y,x,y)=m(y,y,x)$ in $\zC\fac{\chi'}$. We consider two
subcases.

\medskip

\noindent
{\sc Case 2.1.} 
$\chi'$ is not the total congruence.\\[2mm]
Let $\chi$ be a maximal congruence of $\zC$ containing $\chi'$.
If $\eta\not\le\chi$, then, since $\eta\join\chi$ is the total congruence, 
$\cH(\zC)$ is connected by the $\eta$- and $\chi$-blocks implying 
items (ii). So, suppose $\eta\le\chi$. By the assumption $cd$ is an edge and $\eta$ witnesses this. The case when $cd$ has the unary type is impossible, because $m$ is not a
projection on $\zC\fac\chi$, and therefore is not a projection on $\zC\fac\eta$. If $cd$ is of semilattice or majority type, we have case (iv) of the lemma. So, suppose that $cd$ has the affine type, and so $\zC\fac\eta$ is term equivalent to a module. By 
Lemma~\ref{lem:module-projection} there is a term operation 
$\ov m$ such that it is majority on $B'=\{a\fac\th,b\fac\th\}$ and 
$\ov m$ is the first projection on $\zC\fac\eta$. 
As is easily seen, $h(x,y)=\ov m(x,y,y)$ is the second projection on
$B'$ and the first projection on $\zC\fac\eta$. 
We show that $\Sgn{c,d}=\Sgo{c,d}$, thus obtaining item (iii). 
By the assumptions of the lemma $\Sgo{h(c,d),d}=\Sgo{c,d}$, and so it 
suffices to prove that $\Sgn{c,d}=\Sgo{h(c,d),d}$. Let $e\in\Sgo{h(c,d),d}$, 
that is, there is a term operation $g(x,y)$ of $\zA$ such that
$e=g(h(c,d),d)$. The operation $g'(x,y)=g(h(x,y),y)=y$ on $B'$ and
so $g'\in F_{ab}$. On the other hand,
$$
e=g'(c,d)=g(h(c,d),d)\in\Sgn{c,d}.
$$

\medskip

\noindent
{\sc Case 2.2.} 
$\chi'$ is the total congruence of $\zC$.\\[2mm]
Similar to Case~1.2 the congruence generated by $D$ is the transitive 
closure of the set $D'=\{(g(c,d,c'),g(c,d,d'))\mid (c',d')\in D$, 
and $g$ is a term operation of $\zC\}$.
As is shown in Case~1.2, if for every pair $(g(c,d,c'),g(c,d,d'))\in D'$
the subalgebra $\Sgo{g(c,d,c'),g(c,d,d')}$ of $\zA$ is a
proper subalgebra of $\Sgo{c,d}$, then $\cH(\zC)$ is connected, and 
condition (ii) holds.

Suppose that, for a certain $(c',d')\in D$ and a ternary term
operation $g$ of $\zC$, we have
$\Sgo{g(c,d,c'),g(c,d,d')}=\Sgo{c,d}$. Then, for any
$e\in\Sgo{c,d}$, there is a term operation $h$ of $\zA$ such that
$h(g(c,d,c'),g(c,d,d'))=e$ (here we slightly deviate from the argument in
Case~1.2). Without loss of generality we may assume
that $c'=m(c'',d'',d''), d'=m(d'',c'',d'')$ for certain
$c'',d''\in\zC$. Consider
$h'(x,y,z,t)=h(g(x,y,m(z,t,t)),g(x,y,m(t,z,t)))$. We have
$h'\red{B'}(x,y,z,t)=g(x,y,m(z,t,t))=g(x,y,t)$, hence, $h'\in F_{ab}$. On the
other hand, $h'(c,d,c'',d'')=e$. Thus, $\Sgn{c,d}=\Sgo{c,d}$.
\end{proof}

\begin{lemma}\label{lem:old-new-edge}
Let $c,d\in\zA$ and $\zC=\Sgn{c,d}$. Suppose that $cd$ is an edge in $\zA$,
$\eta$ is a congruence of $\Sgo{c,d}$ witnessing that, and for any 
$c'\in c\fac\eta, d'\in d\fac\eta$, it holds $\Sgo{c',d'}=\Sgo{c,d}$.
Then either $c,d$ are connected in $\cH(\zC)$, or $cd$ is a semilattice edge in $\zA'$, or there is $e\in\zC$ such 
that $ce,de$ are semilattice edges of $\zA'$, or 
\begin{itemize}
\item[(1)]
if $cd$ is a majority, affine or unary edge, then $cd$ is an edge of $\zA'$ of the same type as it is in $\zA$.
\item[(2)]
if $cd$ is a semilattice edge in $\zA$, then $cd$ is a semilattice or majority edge  in~$\zA'$.
\end{itemize}
\end{lemma}

\begin{proof}
We consider cases when $ab$ is a semilattice and majority edge separately
proving that either the conclusion of the items (1) or (2) is true, or that $c,d$ are connected in $\cH(\zC)$, or $cd$ is a semilattice edge of $\zA'$, or $ce,de$ are semilattice 
edges of $\zA'$ for some $e\in\zC$. First, we give the part of the argument common for both cases.

Let $f$ or $m$ be a term operation of $\zA$, semilattice or majority on
$B'=\{a\fac\th,b\fac\th\}$, respectively. Let $\eta$ be a congruence of $\Sgo{c,d}$ witnessing that $cd$ is an edge and
$\eta'$ a maximal congruence of $\Sgn{c,d}$ containing 
$\eta\red{\Sgn{c,d}}$; set $\zC'=\Sgn{c,d}\fac{\eta'}$.

By Proposition~\ref{pro:simple} there are four options for $\zC'$:
either $\zC'$ is a set or a module, or $c,d$ are connected in $\cH(\zC')$, or
there is an operation $h$ of $\zC'$ which is either semilattice or
majority on $\{c\fac{\eta'},d\fac{\eta'}\}$.
By Lemmas~\ref{lem:edge-factor}(2) and~\ref{lem:edge-subalgebra}(2) if $c,d$ are connected in $\cH(\zC')$, 
then they are also connected in $\cH(\zC)$ and there is nothing to prove. 
Otherwise $cd$ is an edge in $\zC'$, and hence, by 
Lemmas~\ref{lem:edge-factor}(1) and~\ref{lem:edge-subalgebra}(1)
it is also an edge in $\zA'$. Note that if there are $c'\in c\fac{\eta'}, d'\in d\fac{\eta'}$ such that $\Sgn{c',d'}\subsetneq\zC$, then $c$ and $d$ are connected in $\cH(\zC)$ by the $\eta'$-blocks $c\fac{\eta'},d\fac{\eta'}$ and the subalgebra $\Sgn{c',d'}$. Hence we assume that for any $c'\in c\fac{\eta'}, d'\in d\fac{\eta'}$ it holds that $\Sgn{c',d'}=\zC$. 

Suppose first that $\zC'$ has a term operation that is semilattice or 
majority on $\{c\fac{\eta'},d\fac{\eta'}\}$. Note that $cd$ can only be a semilattice or majority edge of $\zA$. Indeed, if $cd$ is a unary or affine edge of $\zA$, then $\Sgo{c,d}\fac\eta$ is a set or a module, and therefore so are $\zC$ and $\zC'$, implying that this case is impossible. If $cd$ is a semilattice edge of $\zA$, there is nothing to prove. Suppose that $cd$ is a majority
edge of $\zA$, which is witnessed by $\eta$. As we want to show that $cd$ remains a majority edge of $\zA'$, it suffices to show that there is no term operation that is semilattice on $\{c\fac{\eta'},d\fac{\eta'}\}$.  Suppose the contrary, and let $\ell$ be a binary operation that is semilattice on $\{c\fac{\eta'}, d\fac{\eta'}\}$, say, $\ell(c\fac{\eta'}, d\fac{\eta'})=d\fac{\eta'}$. Then $\ell(c,d)=d'$ for some $d'\in d\fac{\eta'}$. By the assumption $\Sgn{c,d'}=\zC$.  
Therefore there is a binary operation $\ell'$ of $\zC$ such that $\ell'(c,d')=d$. Set $\ell''(x,y)=\ell'(x,\ell(x,y))$, for this operation we have $\ell''(c,d)=d$, and $\ell''(d,c)=d''\in d\fac{\eta'}$. By the same argument there exists a binary operation $r$ such that $r(c,d'')=d$. Now consider
$n(x,y)=r(y,\ell''(x,y))$. We have $n(c,d)=r(d,\ell''(c,d))=r(d,d)=d$ and $n(d,c)=r(c,\ell''(d,c))=r(c,d'')=d$. Thus, $cd$ has to be a semilattice edge of $\zA$, a contradiction.

Next, suppose that $\zC'$ is either a set or a module. By
Lemma~\ref{lem:module-projection} $m$ can be chosen to be the first projection on $\zC'$. Also, by 
Lemmas~\ref{lem:good-functions} and~\ref{lem:module-projection} $f$ can be chosen to be the first projection on $\zC'$ and additionally to satisfy the identity $f(f(x,y),f(y,x))=f(x,y)$ on $\zA'$. Then by Lemma~\ref{lem:subalgebra-old-new} 
either $\cH(\zC)$ is connected, or $cd$ is a semilattice edge of $\zA'$, or there is $e\in\zC$ such that $ce,de$ are 
semilattice edges of $\zA'$, or $\Sgn{c,d}=\Sgo{c,d}$. There is nothing 
to prove in the first three cases, so suppose that $\Sgn{c,d}=\Sgo{c,d}$.

\smallskip
{\sc Case 1.} $ab$ is a semilattice edge. 

\smallskip

Let $c'=f(c,d), d'=f(d,c)$, and note that $f(c',d')=c', f(d',c')=d'$ and $c'\eqc{\eta'}c, d'\eqc{\eta'}d$. 
Choose a maximal congruence $\eta''$ of $\Sgo{c',d'}$ with $\eta\le\eta''$. By the assumption above $\zC=\Sgn{c,d}=\Sgo{c,d}$, which means that every subalgebra of $\Sgo{c,d}$ is a subalgebra of $\zC$, and $\eta''$ is a congruence of $\zC$. By Proposition~\ref{pro:simple} applied to $\Sgo{c',d'}\fac{\eta''}$ either $c',d'$, and therefore $c,d$ are connected in $\cH(\Sgo{c',d'})$ and hence in $\cH(\zC)$ or $c'd'$ is an edge in $\zA$ witnessed by $\eta''$. If $\eta''\not\le\eta'$, then $c,d$ are connected in $\cH(\zC)$ by $\eta'$- and $\eta''$-blocks. So, suppose that $c'd'$ is an edge in $\zA$ and $\eta''\le\eta'$. As by the assumptions of Proposition~\ref{pro:adding} $\zA$ has no unary edges, let $g$ be a semilattice, majority, or affine operation witnessing that $c'd'$ 
is an edge of $\zA$. Then the operation $g'(x,y)=g(f(x,y),f(y,x))$ in the first case
and the operation
$$
g'(x,y,z)=g(f(x,f(y,z)),f(y,f(z,x)),f(z,f(x,y)))
$$
in the two latter cases belongs to $F_{ab}$ and is a semilattice, majority or
affine operation on $\{c'\fac{\eta''},d'\fac{\eta''}\}$ (or on $\Sgn{c',d'}\fac{\eta''}$ if $g$ is affine), respectively, and hence on $\{c'\fac{\eta'},d'\fac{\eta'}\}$, as well. This shows that 
$c'd'$ cannot be a semilattice or majority edge of $\zA$ in this case, and
if it is an affine edge of $\zA$, then it is also an affine edge of $\zA'$. It remains to show that the same holds for $cd$. If some operation $h$ witnesses that $cd$ is a semilattice or majority edge, since $\eta\le\eta''$, the operation $h$ also witnesses that $\Sgo{c,d}\fac{\eta''}=\Sgo{c',d'}\fac{\eta''}$ is not a module or a set. Finally, if $cd$ is affine and $\Sgo{c,d}\fac\eta$ is a module, then so is $\Sgo{c',d'}\fac{\eta''}$.

\medskip

{\sc Case 2.} $ab$ is a majority edge.

\smallskip

We start with an auxiliary statement.

\medskip

{\sc Claim.}
For any $c_1,c_2\in c\fac{\eta'}$ and  $d_1,d_2,d_3,d_4\in d\fac{\eta'}$ there is a term operation $p$ of $\zC$ with $p(d_1,c_1,d_3)=p(d_2,d_4,c_2)=d_3$ and $p(c\fac{\eta'},d\fac{\eta'},d\fac{\eta'})=c\fac{\eta'}$ on $\zC'$.

\medskip

Let $\rel$ be the subalgebra of $\zC^2$ generated by $\{(d_1,d_2),(c_1,d_4),(d_3,c_2)\}$. Since $\Sgn{d_1,c_1}=\zC$, there is $d'\in\Sgn{d_2,d_4}\sse d\fac{\eta'}$ such that $(d_3,d')\in\rel$. Again, as $\Sgn{c_2,d'}=\zC$, we have $\{d_3\}\tm\zC\sse\rel$, which means in particular that an operation $p$ satisfying the first two equalities exists. 

To show the last condition we consider two cases. First, if $\zC'$ is a set then $p$ is a projection. Also, as we proved $p(d\fac{\eta'},d\fac{\eta'}, c\fac{\eta'})=p(d\fac{\eta'}, c\fac{\eta'},d\fac{\eta'})=d\fac{\eta'}$, it can only be the first projection. Second, if $\zC'$ is a module over a ring $\zK$ then $p(x,y,z)=\al x+\beta y+\gm z$ for some coefficients satisfying $\al+\beta+\gm=1$, where $1$ is the unity of $\zK$. We have 
\[
\al d\fac{\eta'}+\beta d\fac{\eta'}+\gm c\fac{\eta'}=d\fac{\eta'},\qquad \al d\fac{\eta'}+\beta c\fac{\eta'}+\gm d\fac{\eta'}=d\fac{\eta'}.
\]
These equalities imply $\gm c\fac{\eta'}=\gm d\fac{\eta'}$ and $\beta c\fac{\eta'}=\beta d\fac{\eta'}$. Therefore
\[
p( c\fac{\eta'},d\fac{\eta'},d\fac{\eta'})= \al c\fac{\eta'}+\beta d\fac{\eta'}+\gm d\fac{\eta'} = \al c\fac{\eta'}+\beta c\fac{\eta'}+\gm c\fac{\eta'}=c\fac{\eta'}.
\]
The Claim is proved.

\medskip

Let $c_1=m(c,d,d), d_2=m(d,d,c), d_3=m(d,c,d)$, where $c_1\in c\fac{\eta'}$ and $d_2,d_3\in d\fac{\eta'}$. Using the Claim we show that $m$ can be chosen such that $d_2=d_3$. There is a term operation $p$ of $\zC$ such that $p(d_2,d_3,c_1)=p(d_3,c_1,d_2)=d_2$ and $p(c_1,d_2,d_3)\eqc{\eta'}c$. Then for
\[
m'(x,y,z)=p(m(x,y,z),m(y,z,x),m(z,x,y))
\]
we obtain
\begin{align*}
& m'(c,d,d)=p(c_1,d_2,d_3)\eqc{\eta'}c,\quad m'(d,d,c)=p(d_2,d_3,c_1)=d_2,\\
& m'(d,c,d)=p(d_3,c_1,d_2)=d_2.
\end{align*}
Moreover, as is easily seen, $m'\in F_{ab}$. Thus, we assume $c_1=m(c,d,d), d_2=m(d,d,c)=m(d,c,d)$. Note that at this point we can no longer assume that $m$ is the first projection on $\zC'$, although it still satisfies the conditions $m(c,d,d)\eqc{\eta'}c,m(d,c,d)\eqc{\eta'}m(d,d,c)\eqc{\eta'}d$.

By our assumption $\Sgn{c_1,d_2}=\Sgn{c,d}$. Choose a maximal congruence $\eta''$ of $\Sgo{c,d}$ with $\eta\le\eta''$. By Proposition~\ref{pro:simple} applied to $\Sgo{c,d}\fac{\eta''}$ either $c_1,d_2$, and therefore $c,d$,  are connected in $\cH(\Sgo{c,d})$ and therefore in $\cH(\zC)$, or $c_1d_2$ is an edge in $\zA$ witnessed by $\eta''$. If $\eta''\not\le\eta'$, then $c,d$ are connected in $\cH(\zC)$ by $\eta'$- and $\eta''$-blocks. So, suppose that $c_1d_2$ is an edge in $\zA$ and $\eta''\le\eta'$. 

Let $g$ be a semilattice, majority, or affine operation witnessing that $c_1d_2$ is an edge of $\zA$. Then the operation 
\[
g'(x,y,z)=g(m(x,y,z),m(y,z,x))
\] 
in the first case and the operation
\[
g'(x,y,z)=g(m(x,y,z),m(y,z,x),m(z,x,y))
\]
in the two latter cases belongs to $F_{ab}$ and satisfies the following conditions. If $g$ is a semilattice operation and $g(c_1,d_2)\eqc{\eta''}d_2$, then
\begin{align*}
g'(c,d,d) &=g(m(c,d,d),m(d,d,c)) = g(c_1,d_2)\eqc{\eta''} d_2,\\
g'(d,d,c) &\eqc{\eta''} g'(d,c,d)\eqc{\eta''}d_2.
\end{align*}
If $g(c_1,d_2)\eqc{\eta''}c_1$ then
\begin{align*}
g'(c,d,d) &= g(m(c,d,d),m(d,d,c)) = g(c_1,d_2)\eqc{\eta''} c_1,\\
g'(d,d,c) &=  g(d_2,d_2)\eqc{\eta''} d_2, \\
g'(d,c,d) &= g(d_2,c_1)\eqc{\eta''} c_1.
\end{align*}
If $g$ is a majority or affine operation, then 
\begin{align*}
g'(c,d,d) &= g(m(c,d,d),m(d,d,c),m(d,c,d)) = g(c_1,d_2,d_2)\eqc{\eta''} d_2,\\
g'(d,d,c) &\eqc{\eta''} g'(d,c,d)\eqc{\eta''}d_2,
\end{align*}
and
\begin{align*}
g'(c,d,d) &= g(m(c,d,d),m(d,d,c),m(d,c,d)) = g(c_1,d_2,d_2)\eqc{\eta''} c_1,\\
g'(d,d,c) &\eqc{\eta''} c_1,
\end{align*}
respectively. Since $\eta''\le\eta'$, the same equalities hold modulo $\eta'$. If $g$ is semilattice or majority, the equalities above show that $g'$ is not a projection or affine operation on $\zC'$. Thus, $c_1d_2$ cannot be a semilattice or majority edge of $\zA$ in this case, and
if it is an affine edge of $\zA$, then it is also an affine edge of $\zA'$. Then we complete the proof in the same way as in Case~1 using $c_1,d_2$ instead of $c',d'$.
Case~2 and the proof of the lemma are completed.
\end{proof}

\begin{proof}[Proof of Proposition~\ref{pro:adding}.]
We prove items (1) and (2) simultaneously. We need to show that any $c,d\in A$
are connected by a path containing only edges of the correct types. 
We proceed by induction on the cardinality of subuniverses of $\zA'$ to show that for any $\zB\in\Sub(\zA')$, any $c,d\in\zB$ are connected.
For the base case of induction we use $|\zB|=1$.

Let $\zB\in\Sub(\zA')$ and let the statement be proved for all $\zB'\in\Sub(\zA')$ with $|\zB'|<|\zB|$. Observe that if all subalgebras of $\zB$ generated by 2 elements are connected, so is $\zB$. Therefore, it suffices to assume that $\zB$ is generated by two elements,
say, $c,d\in A$. Moreover, by Theorem~\ref{the:connectedness} 
we can assume that $cd$ is an edge of $\zA'$; let congruence 
$\eta$ of $\zC=\Sgn{c,d}$ witness that. Assume that all the proper 
subalgebras of $\zC$ satisfy the $X$-connectedness properties 
given in the proposition. We may also assume that for any 
$c'\in c\fac\eta, d'\in d\fac\eta$, it holds $\Sgo{c',d'}=\Sgo{c,d}$. 
Indeed, if $\Sgo{c',d'}\ne\Sgo{c,d}$, then choose $c',d'$ so that 
$\Sgo{c',d'}$ is minimal possible. If $\Sgn{c',d'}\ne\Sgn{c,d}$,
then by Lemma~\ref{lem:many-edges} $c'd'$ is an edge in $\zA'$ of 
the same type as $cd$, and by the induction
hypothesis satisfies the required conditions. Hence
$c$ and $d$ are connected by $\Sgn{c',d'}$ and the congruence blocks
$c\fac\eta$, $d\fac\eta$. Otherwise as is easily seen, $c'd'$ is an edge 
of $\zA'$ of the same type as $cd$. Therefore, we can replace $c,d$ with 
$c',d'$.

Let $\th$, and $f$ or $m$ be the congruence of $\Sgo{a,b}$, and 
a semilattice or majority operation witnessing the type of $ab$.
We now consider  the 
options given in Lemma~\ref{lem:subalgebra-old-new}. In case (i) elements $c$ and $d$ are connected by a semilattice path. In case 
(ii)  $\cH(\zC)$ is connected and we use the 
inductive hypothesis. In case (iv) there is nothing to prove, because 
$cd$ is an edge of one of the required types. Therefore, assume 
$\Sgn{c,d}=\Sgo{c,d}$. 

The elements $c$ and $d$ are connected by a path
$c=e_1,e_2\zd e_k=d$ in $\cG(\Sgo{c,d})$, where $e_ie_{i+1}$ 
is an edge of $\zA$. Therefore we need to show connectedness for every pair $e_ie_{i+1}$ which is an edge in $\zA$. Observe that the conditions of Lemma~\ref{lem:old-new-edge} can be assumed. Indeed, if $\eta$ is the congruence of $\Sgo{e_i,e_{i+1}}$ witnessing that $e_ie_{i+1}$ is an edge, and $c'\in e_i\fac\eta,d'\in e_{i+1}\fac\eta$ such that $\Sgo{c',d'}\subsetneq\Sgo{e_i,e_{i+1}}$, then $e_i,e_{i+1}$ are connected in $\cH(\Sgn{c,d})$ and the result follows by the induction hypothesis. Now, by Lemma~\ref{lem:old-new-edge} $e_i,e_{i+1}$ are either connected by semilattice edges in $\cG(\Sgn{c,d})$, or they are connected in $\cH(\Sgn{c,d})$, or $e_ie_{i+1}$ is an edge in $\zA'$ of the same type as $e_ie_{i+1}$ if it is unary, affine, or majority in $\zA$, and $e_ie_{i+1}$ is semilattice or majority, if it is semilattice in $\zA$. Therefore all the connectedness conditions are preserved. 
\end{proof}

%%%%%%%%%%%%%%%%%%%%%%%%%%%%%%%%%%%%
\subsection{Unified operations}\label{sec:unified}

% opertion unification
To conclude this section we prove that the term operations
certifying the types of edges of smooth algebras can be significantly unified (cf.\
Proposition~2 from \cite{Bulatov03:conservative}).
\begin{theorem}\label{the:uniform}
Let $\zA$ be an idempotent algebra and $\vc Es$ its thick edges such 
that every $E_i$ of the semilattice or majority type is a subuniverse of $\zA$. 
There are term operations $f,g,h$ of $\zA$ such that for any $E_i=\{a\fac\th,b\fac\th\}$, 
$i\in[s]$, where $\th$ is a congruence witnessing that $ab$ is an edge
\begin{itemize}
\item[(i)]
$f\red{E_i}$ is a semilattice operation if $ab$ is a semilattice edge; 
it is the first projection if $ab$ is a majority or affine edge;
\item[(ii)]
$g\red{E_i}$ is a majority operation if $ab$ is a majority edge;
it is the first projection if $ab$ is an affine edge, and
$g\red{E_i}(x,y,z)=f\red{E_i}(x,f\red{E_i}(y,z))$ if $ab$ is
semilattice;
\item[(iii)]
$h\red{\Sg{a,b}\fac\th}$ is an affine operation if $ab$ is an affine edge; 
it is the first projection if $ab$ is a majority edge, and
$h\red{E_i}(x,y,z)=f\red{E_i}(x,f\red{E_i}(y,z))$ if $ab$ is semilattice.
\end{itemize}
Operations $f,g,h$ are projections on every edge of the unary type.
\end{theorem}

\begin{proof}
First, we show that there is an operation $f$ that is semilattice on
each semilattice edge from $\vc Es$. Let $\vc Bn$ be a list of all 
thick semilattice edges from $\vc Es$. That is, each $B_i$ is a 2-element set.
Let also $\vc fn$ be a list of term operations of the algebra such that
$f_i\red{B_i}$ is a semilattice operation. Notice that every binary
idempotent operation on a 2-element set is either a projection or a
semilattice operation, and every binary operation of a module can be
represented in the form $\al x+(1-\al)y$. 
Since each $f_i$ is idempotent,
for any $i,j$, $f_i\red{B_j}$ is either a projection, or a semilattice
operation. We prove by induction, that the operation $f^i$ constructed
via the following rules is a semilattice operation on $\vc Bi$:
\begin{itemize}
\item
$f^1=f_1$;
\item
$f^i(x,y)=f_i(f^{i-1}(x,y),f^{i-1}(y,x))$.
\end{itemize}

The base case of induction, $i=1$ holds by the choice of
$f_1$. Since $f^{i-1}(x,y)\red{B_j}=f^{i-1}(y,x)\red{B_j}$, for $j\in[i-1]$,
we have 
\[
f^i(x,y)\red{B_j}=f^{i-1}(x,y)\red{B_j}.
\]
Suppose that $f^{i-1}$ satisfies the required conditions. If
$f^{i-1}\red{B_i}$ is a projection, say, $f^{i-1}\red{B_i}(x,y)=x$,
then
$$
f^i(x,y)\red{B_i}=f_i(f^{i-1}(x,y),f^{i-1}(y,x))\red{B_i}=f_i(x,y)\red{B_i},
$$
that is, a semilattice operation on $B_i$. Otherwise 
$f^{i-1}(x,y)=f^{i-1}(y,x)$ on $B_i$, and we have 
$f^i(x,y)\red{B_i}=f^{i-1}(y,x))\red{B_i}$, which is a semilattice operation.

Thus, for each thick edge $E_j$, $j\in[s]$, $f^n\red{E_j}$ is a 
semilattice operation if $E_j$ is semilattice and either a semilattice operation or 
a projection if $E_j$ is majority or unary, and $f^n$ on $\Sg{E_j}$ is $\al x+(1-\al)y$ 
if $E_j$ is affine. However, if $E_j$ is not semilattice, then the subalgebra with the
universe $E_j$ (or $\Sg{E_j}$) has no semilattice operation. Therefore, if $E_j$ is 
majority or unary edge, then $f^n\red{\Sg{E_j}}$ is a projection, and if $E_j$ is an affine edge, then $f^n\red{\Sg{E_j}}$ can be one of the two types, either a projection, or $\al x+(1-\al)y$ for some module $\zM$ over a ring $\zK$ and some $\al\in\zK$. 

Let $\vc D\ell$ be a list of all thick affine edges. Set $f^\dg_0=f^n$ and inductively define $\vc{f^\dg}\ell$ as follows. For $i\in[\ell]$ if $f^\dg_{i-1}\red{\Sg{D_i}}$ is a projection, set $f^\dg_i=f^\dg_{i-1}$. If $f^\dg_{i-1}\red{\Sg{D_i}}=\al x+(1-\al)y$ for some $\zM,\zK$, and $\al\in\zK$, then by Lemma~\ref{lem:module-projection} applied to the algebra $\zA'=(A,f^\dg_{i-1})$ one can transform $f^\dg_{i-1}$ into a projection on $\Sg{D_i}$ without changing its action 
on semilattice and majority edges. Let $f^\dg_i$ be this transformed operation. As is easily seen, $f^\dg_i$ is a projection on $\Sg{D_j}$ for each $j<i$. Therefore $f^n$ can be assumed to be
a projection on every non-semilattice edge $E_j$. Finally,
it is easy to check that $f(x,y)=f^n(f^n(x,y),x)$ satisfies the conditions of the
theorem, regardless on whether $f^n\red{\Sg{E_j}}$ is the first or second 
projection.

Now let $\vc Ck$, $\vc D\ell$ be lists of all thick majority and all thick affine
edges respectively, from $\vc Es$, and $\vc gk$, $\vc h\ell$ lists of
term operations of the algebra $\zA$ such that $g_i\red{C_i}$ is a
majority operation, and $h_i\red{D'_i}$ is the affine
operation, where $D'_i=\Sg{D_i}$. Let $F$ be the set of term operations 
of $\zA$. Notice first, that since neither
$\zC_i=(C_i;F\red{C_i})$ nor $\zD'_i=(D'_i;F\red{D'_i})$ has a
term semilattice operation, every one of their binary term operation is
either a projection or an operation of the form $\al x+(1-\al)y$.
Therefore, for any $i,j$,
$g_i\red{C_j}(x,y,y), g_i\red{C_j}(y,x,y), g_i\red{C_j}(y,y,x)$,
$h_i\red{C_j}(x,y,y), h_i\red{C_j}(y,x,y)$,
$h_i\red{C_j}(y,y,x)\in\{x,y\}$, and also
$g_i\red{D'_j}(x,y,y), g_i\red{D'_j}(y,x,y)$, $g_i\red{D'_j}(y,y,x)$,\lb%
$h_i\red{D'_j}(x,y,y)$, $h_i\red{D'_j}(y,x,y),h_i\red{D'_j}(y,y,x)
\in\{x,y, \al x+(1-\al)y\}$. This means in particular that each of the operations 
$g_i\red{C_j}, h_i\red{C_j}$ is of one of the
following types: a projection, the minority operation, the
majority operation, or a 2/3-minority operation.

First we show that for any $1\le i\le\ell$ there is $h^i$ such that
$h^i\red{D'_j}$ is an affine operation for $j\le i$. As before,
$h^1=h_1$ gives the base case of induction. If $h^{i-1}$ is constructed,
then if $h^{i-1}\red{D'_i}$ is an affine operation then set
$h^i=h^{i-1}$. Otherwise, $h^{i-1}\red{D'_i}=\al x+\beta y+\gm z$
with $\al+\beta+\gm=1$. Then set $h'(x,y)=h^{i-1}(x,y,y)$ and 
observe that $h'(x,y)\red{D'_i}=\al x+(1-\al)y$ and 
$h'(x,y)\red{D'_j}=x$ for $j<i$. Then let
\[
h''_1(x,y,z)=h'(x,h_i(x,y,z)).
\]
Then 
\begin{align*}
h''_1(x,y,z)\red{D'_i} &= \al x+(1-\al)z-(1-\al)y+(1-\al)x \\
& =  x-(1-\al)y+(1-\al)z,\quad \text{and}\\
h''_1(x,y,z)\red{D'_j} &= x, \qquad \text{for $j<i$}.
\end{align*}
Similarly, we can obtain $h''_3(x,y,z)$ with the property
that $h''_3(x,y,z)\red{D'_i}=x-(1-\gm)y+(1-\gm)z$ and 
$h''_3(x,y,z)\red{D'_j}=x$. Furthermore, set
\[
h''_2(x,y,z)=h''_1(h^{i-1}(x,y,z),z,x).
\]
As is easily seen, for this operation we have
\begin{align*}
h''_2(x,y,z)\red{D'_i} &= \al x+\beta y+\gm z-(1-\al)z+(1-\al)x =
x+\beta y-\beta z,\quad \text{and}\\
h''_2(x,y,z)\red{D'_j} &= x-y+z, \qquad \text{for $j<i$}.
\end{align*}
Finally, we set
\[
h^i(x,y,z)=h''_1(h''_2(h''_3(x,y,z),y,z),y,z).
\]
Again, we have 
\begin{align*}
h^i(x,y,z)\red{D'_i} &= x-(1-\gm)y+(1-\gm)z+\beta y-\beta z
-(1-\al)y+(1-\al)z\\
& =  x-y+z,\quad \text{and}\\
h^i(x,y,z)\red{D'_j} &= x-y+z, \qquad \text{for $j<i$}.
\end{align*}

Next, we prove that every $g_i$, $i\in[k]$, can be chosen such that 
$g_i(x,y,z)=x$ on $D'_j$ for all $j\in[\ell]$. 
As in the proof of Lemma~\ref{lem:module-projection}, 
it suffices to find a term operation $p(x,y)$ such that $p(x,y)=x$ on $C_i$ and 
$p(x,y)=y$ on $D'_j$ for $j\in[\ell]$. As $C_i$ is a majority but not a semilattice 
edge, the operation $h^\ell$ from the previous paragraph on $C_i$ can 
be one of the following operations:
a projection, a majority operation, a 2/3-minority operation, or a minority 
operation. If $h^\ell$ is a projection on $C_i$, say, $h^\ell(x,y,z)=x$,
or majority, then set $p(x,y)=h^\ell(x,x,y)$. If $h^\ell$ is 2/3 minority 
satisfying $h^\ell(x,y,y)=y$ 
or $h^\ell(y,y,x)=y$, then set $p(x,y)=h^\ell(y,x,x)$ in the former case, and 
$p(x,y)=h^\ell(x,x,y)$ in the latter case. If $h^\ell$ is 2/3-minority satisfying 
$h^\ell(x,y,x)=x$, then set $p(x,y)=h^\ell(x,h^\ell(x,y,x),x)$; again it is easy 
to check that $p$ satisfies the required conditions.
Finally, if $h^\ell$ on $C_i$ is the minority operation, suppose 
$g_i(x,y,z)=\al_jx+\beta_jy+\gm_jz$ on $D'_j$, $j\in[\ell]$. Then set 
$s_1(x,y)=h^\ell(g_i(x,y,y),y,g_i(y,y,x))$ and $s_2(x,y)=g_i(x,y,x)$. As is easily seen,
$s_1(x,y)=y,s_2(x,y)=x$ on $C_i$ and $s_1(x,y)=s_2(x,y)=(1-\beta_j)x+\beta_j y$
on $D'_j$. Then set $p(x,y)=h^\ell(s_1(x,y),s_2(x,y),y)$. We have 
$p(x,y)=h^\ell(y,x,y)=x$ on $C_i$ and $p(x,y)=y$  on each $D'_j$, $j\in[\ell]$, 
as required.

Now, we prove by induction that for every $1\le i\le k$ there is an
operation $g^i(x,y,z)$ which is majority on $C_j$ for $j\le i$ and is 
the first projection on $D'_r$, $r\in[\ell]$. The
operation $g^1=g_1$ gives the base case of induction. Let us assume
that $g^{i-1}$ is already found. If $g^{i-1}\red{C_j}$ is the majority
operation, set $g^i=g^{i-1}$. Otherwise, it is either a projection, or
a 2/3-minority operation, or the minority operation. In all these case
its variables can be permuted such that
$g^{i-1}\red{C_i}(x,y,y)=x$. Then the operation
$p(x,y)=g^{i-1}(x,y,y)$ satisfies the conditions $p\red{C_i}(x,y)=x$,
and $p\red{C_j}(x,y)=y$ for all $j\in[i-1]$. Therefore, the
operation
$$
g^i(x,y,z)=p(g_i(x,y,z),g^{i-1}(x,y,z))
$$
satisfies the required conditions. Note that $g^i$ is a projection (the same one) 
on every $D'_r$, $r\in[\ell]$, but not necessarily the first projection. So,
we may need to permute the variables of $g^i$ to obtain the desired result.

Operation $g^k$ acts correctly on the majority and affine edges. To make $g^k$ 
act correctly on the semilattice edges we set
$$
g(x,y,z)=g^k(f(x,f(y,z)),f(y,f(z,x)),f(z,f(x,y))).
$$

Finally, set 
\begin{align*}
p(x,y) &= g(x,y,y),\\
\ov h(x,y,z) &= p(h^\ell(x,y,z),x),
\end{align*}
and
$$
h(x,y,z)=\ov h(f(x,f(y,z)),f(y,f(z,x)),f(z,f(x,y))).
$$
Since all the operations are projections on thick edges of the unary type,
as is easily seen $h$ satisfies the conditions required.
\end{proof}

Theorem~\ref{the:uniform} implies that for a smooth algebra $\zA$ there are operations $f,g,h$ that act as the theorem prescribes on all thick edges of $\zA$.  Moreover, it can be easily extended to finite classes of
algebras.

\begin{corollary}\label{cor:uniform}
Let $\cK$ be a finite class of finite smooth algebras. Then there are term
operations $f,g,h$ of $\cK$ such that conditions (i)--(iii) of
Theorem~\ref{the:uniform} are true for any edge $ab$ of any
$\zB\in\cK$.
\end{corollary}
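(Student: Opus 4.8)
The plan is to reduce the statement for a finite class $\cK=\{\zB_1,\dots,\zB_r\}$ of smooth idempotent algebras to the single-algebra statement already proved in Theorem~\ref{the:uniform}, by passing to a single algebra whose term operations are exactly the ``common'' term operations of the members of $\cK$. Concretely, form the direct product $\zB=\zB_1\tm\dots\tm\zB_r$ and consider the diagonal-style subalgebra, or more directly simply note that a tuple $(t_1,\dots,t_r)$ of term operations (one on each $\zB_i$, all coming from a single term in the common signature obtained by taking disjoint copies and identifying nothing) corresponds to a term operation in the combined signature. Thus what we really want is a term $t$ in the language of $\cK$ — equivalently, a term that is simultaneously a term operation of each $\zB_i$ — whose interpretations $t^{\zB_i}$ all behave as required by (i)--(iii) on the respective edges.

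First I would record the reduction carefully: an ``operation of $\cK$'' means an $n$-ary term $t$ (in a common signature, or in the language of the variety generated by $\cK$, whichever convention the paper uses) such that for each $i$, $t^{\zB_i}$ is a term operation of $\zB_i$; the class of such terms is closed under composition, so it forms a clone, and it is exactly $\Term(\zB)$ restricted to the ``diagonal'' coordinates — more cleanly, it is $\Clo \zB'$ where $\zB'$ is the algebra on $\bigsqcup_i B_i$ whose basic operations are the disjoint unions of the basic operations of the $\zB_i$ (idempotency guarantees this is well-defined since basic operations do not mix coordinates from different $\zB_i$). Then I would observe that $\zB'$ is again a finite idempotent algebra, and that it is smooth: a thick edge of $\zB'$ lies entirely inside one block $B_i$ (because any subalgebra generated by two elements $a,b$ with $a\in B_i$, $b\in B_j$, $i\ne j$ is empty unless $i=j$, by idempotency and the block structure), and its edges and types coincide with those of $\zB_i$, which is smooth by hypothesis; hence every thick semilattice or majority edge of $\zB'$ is a subalgebra.

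Given this, I would simply apply Theorem~\ref{the:uniform} to $\zB'$ to obtain term operations $f,g,h$ of $\zB'$ — i.e.\ terms in the common language — satisfying (i)--(iii) for every edge $ab$ of $\zB'$. Since every edge $ab$ of every $\zB_i\in\cK$ is an edge of $\zB'$ (with the same witnessing congruence, now viewed as a congruence of the corresponding block-subalgebra $\zB_i$ of $\zB'$, using Lemma~\ref{lem:edge-subalgebra}(1) which says edges and their types are inherited between an algebra and its subalgebras), conditions (i)--(iii) hold for $f,g,h$ interpreted in each $\zB_i$. That is exactly the assertion of the corollary.

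The main obstacle — really the only place needing care — is the bookkeeping around what ``term operation of $\cK$'' means and the verification that $\zB'$ is smooth and that edges are correctly inherited: one must check that a maximal congruence $\th$ of $\Sgg{\zB_i}{a,b}$ is still maximal and still witnesses the same type when $\Sgg{\zB_i}{a,b}$ is viewed as a subalgebra of $\zB'$ (immediate, since as an algebra it is unchanged), and that no new, spurious edges or type-changes are introduced in $\zB'$ that could force $f,g,h$ to behave incompatibly on different $\zB_i$ (they cannot, since distinct blocks share no two-generated subalgebra). Everything else is a direct invocation of Theorem~\ref{the:uniform} together with Lemmas~\ref{lem:edge-subalgebra} and~\ref{lem:edge-factor}, so the proof is short.
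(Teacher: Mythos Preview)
Your instinct to reduce to a single algebra is exactly right, and your first sentence already names the paper's construction: the direct product $\zA=\zB_1\tm\cdots\tm\zB_r$. The paper applies Theorem~\ref{the:uniform} to this product and then observes that for any edge $ab$ of $\zB_i$, the pair $\ba,\bb\in\zA$ with $\ba[i]=a$, $\bb[i]=b$, and $\ba[j]=\bb[j]$ for $j\ne i$ is an edge of $\zA$ of the same type (the witnessing congruence is the pullback along $\pr_i$ of the one on $\Sgg{\zB_i}{a,b}$; note $\Sgg\zA{\ba,\bb}\cong\Sgg{\zB_i}{a,b}$ since all other coordinates are constant). Since term operations act coordinatewise, the $f,g,h$ produced for $\zA$ give the required behaviour on every $\zB_i$.

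The approach you actually settle on, however, does not work as written. The ``disjoint union algebra'' $\zB'$ on $\bigsqcup_i B_i$ is not an algebra: an $n$-ary basic operation must be defined on \emph{every} $n$-tuple over the underlying set, including tuples whose entries lie in different blocks $B_i$, and idempotency says nothing about such mixed tuples. So there is no algebra $\zB'$ to which Theorem~\ref{the:uniform} can be applied, and the subsequent claims (``$\zB'$ is smooth'', ``each $\zB_i$ is a subalgebra of $\zB'$'') are not meaningful. One can sometimes repair a disjoint-union construction by adjoining an absorbing element to receive all mixed outputs, but that introduces new $2$-element subalgebras and new semilattice edges you would then have to control, which defeats the purpose. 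Drop the disjoint union entirely and carry through the direct-product argument you started with; that is the paper's proof, and the only point requiring a line of justification is the one-sentence check that the embedded pair $\ba\bb$ is an edge of $\zA$ of the same type as $ab$.
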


\begin{proof}
Let $\cK=\{\vc\zA n\}$ and $\zA=\tms\zA n$. Observe that for any
$i\in[n]$ and any edge $ab$ of $\zA_i$, any pair of the form
$\ba,\bb\in\zA$, where $\ba[j]=\bb[j]$ for $j\ne i$ and $\ba[i]=a$,
$\bb[i]=b$, is an edge of $\zA$ of the same type as $ab$. We apply 
Theorem~\ref{the:uniform} to the list of thick edges of $\zA$ of the 
form $\{\ba\fac{\ov\th},\bb\fac{\ov\th}\}$ where $\ba,\bb$ are as above for some $i\in[n]$ and edge $ab$ of $\zA_i$, $\th$ is a congruence of $\Sgg{\zA_i}{a,b}$ and $\ov\th$ is the congruence of $\Sgg\zA{\ba,\bb}$ given by $(\bc,\bd)\in\ov\th$ iff $(\bc[i],\bd[i])\in\th$. Then by Theorem~\ref{the:uniform} there are term operations $f,g,h$ of $\zA$, that is, of $\cK$, satisfying the conditions (i)--(iii) of
the theorem on those thick edges. It remains to observe that $f,g,h$ satisfy the conditions (i)--(iii) for each $\zA_i$.
\end{proof}

%%%%%%%%%%%%%%%%%%%%%%%%%%%%%%%%%%%
%%%%%%%%%%%%%%%%%%%%%%%%%%%%%%%%%%%
\section{Thin edges}\label{sec:thin}

Although edges and thick edges as they have been introduced so far reflect 
some aspects of the structure of idempotent algebras, they are not very useful 
from the technical perspective, the way we are going to use them. To improve
the construction of (thick) edges we introduce thin edges, that are always just
a pair of elements, without any congruences or quotient algebras involved. Later
we show that the graph of an algebra $\zA$ based on these thin edges retains 
many of the useful properties of $\cG(\zA)$, most importantly, connectivity.
In a certain sense the connectivity of this new graph is even improved relative to
$\cG(\zA)$. This will come at a price: thin edges are inevitably directed even 
when it does not seem natural or necessary, and are not always subalgebras. Our first goal is to
introduce thin edges, prove their existence, and show that operations $f,g,h$ 
from Theorem~\ref{the:uniform} can be assumed to have a number of 
additional properties.

Fix a finite class $\cK$ of similar smooth algebras. The definitions of 
majority and affine thin edges depend on this class.

% semilattice, affine and majority identities
We start with an observation that operations $f,g,h$ identified in
Corollary~\ref{cor:uniform} can be assumed to satisfy certain identities.

\begin{lemma}\label{lem:fgh-identities}
Operations $f,g,h$ identified in Corollary~\ref{cor:uniform} can be chosen 
such that
\begin{itemize}
\item[(1)]
$f(x,f(x,y))=f(x,y)$ for all $x,y\in\zA$ and all $\zA\in\cK$;
\item[(2)]
$g(x,g(x,y,y),g(x,y,y))=g(x,y,y)$ for all $x,y\in\zA$ and all $\zA\in\cK$;
\item[(3)]
$h(h(x,y,y),y,y)=h(x,y,y)$ for all $x,y\in\zA$ and all $\zA\in\cK$.
\end{itemize}
\end{lemma}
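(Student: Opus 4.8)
The plan is to obtain each of the three identities by the same kind of iteration argument already used in Lemma~\ref{lem:good-functions} and Lemma~\ref{lem:module-projection}, applied now uniformly across the finite class $\cK$, while checking at each step that the modification does not disturb the behaviour of $f,g,h$ on the thick edges guaranteed by Theorem~\ref{the:uniform}. Since $\cK$ is finite, by passing to the product $\zA=\tms{\zA}{n}$ (as in Corollary~\ref{cor:uniform}) it suffices to establish each identity for a single finite algebra $\zA$ whose edges include, up to the identification of Corollary~\ref{cor:uniform}, all edges of all members of $\cK$; so throughout I work with one fixed finite smooth $\zA$ and its operations $f,g,h$ from Theorem~\ref{the:uniform}.

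For item (1): fix $f$ as in Theorem~\ref{the:uniform} and, exactly as in Lemma~\ref{lem:good-functions}(1), for each $x\in\zA$ let $g_x(y)=f(x,y)$, choose $n_x$ with $g_x^{n_x}$ idempotent, let $n=\lcm_x n_x$, and set $f^{\sf new}(x,y)=g_x^{\,n}(y)=f(x,f(x,\dots,f(x,y)\dots))$ ($n$ applications in the second coordinate). Then $f^{\sf new}(x,f^{\sf new}(x,y))=g_x^{\,n}(g_x^{\,n}(y))=g_x^{\,n}(y)=f^{\sf new}(x,y)$, giving the identity. The point to verify is that $f^{\sf new}$ still has the properties of Theorem~\ref{the:uniform}: on a semilattice thick edge $B'=\{a\fac\th,b\fac\th\}$ we have $f\red{B'}$ semilattice, hence $g_x\red{B'}$ is already idempotent for each of the two points $x\in B'$, so $f^{\sf new}\red{B'}=f\red{B'}$; on a majority or unary thick edge $f\red{B'}$ is a projection, and iterating a projection in one coordinate leaves it that same projection; on an affine thick edge $\Sg{B}$, $f$ is $\al x+(1-\al)y$ but after the projection-normalization in the proof of Theorem~\ref{the:uniform} it is the first projection $x$, and iterating $x$ in the second coordinate again gives $x$. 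Thus $f^{\sf new}$ can replace $f$, and I rename it $f$.

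For items (2) and (3) the same scheme applies. For (2), with $f$ now fixed satisfying (1), take $g$ from Theorem~\ref{the:uniform}, put $g_x(y)=g(x,y,y)$, choose $n$ so that $g_x^{\,n}$ is idempotent for all $x$, and replace $g$ by $g^{\sf new}(x,y,z)=g(x,g(x,\dots,g(x,y,z),\dots),\, \dots)$ iterated $n$ times in the first-and-second-pair pattern exactly as in Lemma~\ref{lem:good-functions}(2); then $g^{\sf new}(x,g^{\sf new}(x,y,y),g^{\sf new}(x,y,y))=g_x^{\,n}(g_x^{\,n}(y))=g_x^{\,n}(y)=g^{\sf new}(x,y,y)$. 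On a majority thick edge $g\red{B'}$ is majority so $g_x$ is already idempotent at each point and $g^{\sf new}\red{B'}=g\red{B'}$; on affine and unary thick edges $g$ is the first projection $x$ and iterating preserves this; on a semilattice edge $g\red{B'}(x,y,z)=f\red{B'}(x,f\red{B'}(y,z))$, and one checks the iteration reproduces the same composite (using that $f\red{B'}$ is semilattice, hence $g_x\red{B'}$ idempotent). For (3), with $f$ and $g$ now fixed, take $h$ and set $h_y(x)=h(x,y,y)$; choose $n$ with $h_y^{\,n}$ idempotent for all $y$ and replace $h$ by iterating $h$ in its first coordinate $n$ times, so that $h(h(x,y,y),y,y)=h_y^{\,n}(h_y^{\,n}(x))=h_y^{\,n}(x)=h(x,y,y)$. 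On an affine thick edge $\Sg{B}$, $h\red{\Sg B}$ is the affine operation $x-y+z$, so $h_y(x)=x-y+y=x$ is the identity map, already idempotent, and nothing changes; on majority and unary edges $h$ is the first projection, preserved by iteration; on semilattice edges $h\red{B'}(x,y,z)=f\red{B'}(x,f\red{B'}(y,z))$ and the iteration reproduces the composite since $f\red{B'}$ is semilattice.

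The only real subtlety, and the step I expect to need the most care, is the simultaneous bookkeeping: each of $f,g,h$ must be modified \emph{after} the previous ones are fixed and must continue to satisfy \emph{both} the Theorem~\ref{the:uniform} conditions (including the explicit composite formulas tying $g$ and $h$ to $f$ on semilattice edges) \emph{and} the already-established earlier identities of this lemma. Because the modifications only iterate an operation within a fixed subset of its coordinates, they do not increase the set of ``bad'' behaviours, and the composite identities $g\red{B'}(x,y,z)=f\red{B'}(x,f\red{B'}(y,z))$, $h\red{B'}(x,y,z)=f\red{B'}(x,f\red{B'}(y,z))$ on semilattice edges are stable under exactly this kind of iteration once $f$ satisfies (1); so the dependencies are compatible and can be resolved in the order $f$, then $g$, then $h$. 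This is the argument I would write out.
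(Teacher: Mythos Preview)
Your proposal is correct and takes essentially the same approach as the paper: items~(1) and~(2) are obtained exactly as in Lemma~\ref{lem:good-functions} (the paper simply cites that lemma), and for item~(3) you use the same iteration $h_{i+1}(x,y,z)=h_i(h(x,y,y),y,z)$ that the paper uses. Your write-up is in fact more thorough than the paper's, which for~(3) only verifies the affine-edge condition explicitly and leaves the semilattice, majority, and unary checks implicit.
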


\begin{proof}
Items (1) and (2) follow from Lemma~\ref{lem:good-functions}.

(3) Let $h_b(x)=h(x,b,b)$ for $b\in\zA$. The goal is to find $h$ such that
$h_b(h_b(x))=h_b(x)$ for all $b$ and all $x$. Let $h'_0(x,y,z)=h(x,y,z)$ and\lb $h'_{i+1}(x,y,z)= h'_i(h(x,y,y),y,z)$ for $i\ge0$. Then $h'_i(x,b,b)=h_b^i(x)$. Clearly, $h_b^{|A|!}$ is idempotent  for every $b\in\zA$, and thus $h'_{|A|!}(h'_{|A|!}(x,y,y),y,y)=h'_{|A|!}(x,y,y)$. It remains to show that every function $h_i(x,y,z)$ 
is a replacement for $h$. That is, for any affine edge $ab$,
$$
h_{i+1}(a,b,b)\eqc\th h_{i+1}(b,b,a)\eqc\th a,
$$
where $\th\in\Con(\Sg{a,b})$ witnesses that $ab$ is an affine edge.
By induction we have
\begin{align*}
h_{i+1}(a,b,b) &= h_i(h(a,b,b),b,b)\eqc\th h_i(a,b,b)\eqc\th a,\\
h_{i+1}(b,b,a) &= h_i(h(b,b,b),b,a) = h_i(b,b,a)\eqc\th a.
\end{align*}
As is easily seen, the resulting operation $h$ acts on semilattice and majority edges as prescribed by Theorem~\ref{the:uniform}.
\end{proof}

% thin semilattice
%%%%%%%%%%%%%%%%%%%%%%%%%%%%%%%%%%%%
\subsection{Semilattice edges}\label{sec:thin-semilattice}

In this section we focus on semilattice edges of the graph
$\cG(\zA)$. Note first that if one fixes a term operation $f$ such that
$f$ is a semilattice operation on every thick semilattice edge of
$\cG(\zA)$, then it is possible to define an orientation of every semilattice edge.
A semilattice edge $ab$ is oriented from $a$
to $b$ if $f(a\fac\th,b\fac\th)=f(b\fac\th,a\fac\th)=b\fac\th$, where $\th$ is a 
congruence witnessing that $ab$ is a semilattice edge.
Clearly, this orientation strongly depends on
the choice of the term operation~$f$. 

We shall now improve the choice of operation $f$ and restrict the kind of
semilattice edges we will use later. A semilattice edge $ab$ such that
the equality relation witnesses that it is a semilattice edge will be called 
a \emph{thin semilattice
edge}. A binary operation $\ell(x,y)$ is said to 
satisfy the \emph{Semilattice Shift Condition} (SLS condition for short), if
\begin{align}
& \text{for any $a,b\in\zA$, either $a=\ell(a,b)$ or the pair}
\tag{SLS}\label{cond:sls}\\
& \text{$(a,\ell(a,b))$ is a thin semilattice edge.}
\nonumber
\end{align}

\begin{prop}\label{pro:good-operation}
Let $\zA$ be a smooth idempotent algebra. There is an SLS binary
term operation $f$ of $\zA$ such that $f$ is a semilattice operation on
every thick semilattice edge of $\cG(\zA)$, and $f$ is the first projection on every majority or affine thick edge.
\end{prop}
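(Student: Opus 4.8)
The plan is to start from the operation $f$ provided by Theorem~\ref{the:uniform}, which is already a semilattice operation on every thick semilattice edge of $\cG(\zA)$ and a projection on every other edge, and then to iteratively compose it with itself so as to force the \eqref{cond:sls} condition while preserving its behavior on thick edges. The key observation is that by Lemma~\ref{lem:fgh-identities}(1) we may assume $f(x,f(x,y))=f(x,y)$, so that for each fixed $a\in\zA$ the unary operation $f_a(y)=f(a,y)$ is \emph{not} necessarily idempotent as a function of $a$ in the first coordinate, but the map $y\mapsto f(a,y)$ when iterated on the \emph{first} coordinate behaves controllably. More precisely, I would consider, for each $a,b$, the element $f(a,b)$; the pair $(a,f(a,b))$ need not be a thin semilattice edge yet, but it lies `above' $a$ with respect to the orientation induced by $f$, and I want to replace $f$ by an operation $\ell$ for which one step $a\mapsto \ell(a,b)$ lands on a genuine thin semilattice edge from $a$.

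First I would make precise the claim that $(a,f(a,b))$ is always an \emph{edge} of $\cG(\zA)$ of the semilattice type whenever $a\ne f(a,b)$: indeed $\Sg{a,f(a,b)}$ has the term operation $f$, which by the structure of $f$ and the idempotency identity $f(x,f(x,y))=f(x,y)$ acts as a semilattice operation on a suitable factor, so $(a,f(a,b))$ is a thick semilattice edge through some congruence $\th$, and then by Theorem~\ref{the:connectedness} and Proposition~\ref{pro:simple} applied inside $\Sg{a,f(a,b)}$ we can locate an actual thin semilattice edge. Second, I would set up the induction: enumerate the pairs $(a,b)$ for which $(a,f(a,b))$ fails to be a thin semilattice edge, and for each such pair replace $f$ by a composition of the form $f'(x,y)=f(f(x,y),z_1(x,y))$ or an iterate $f^{(k)}$, chosen so that the new operation, evaluated at $(a,b)$, lands strictly further along the semilattice order and eventually reaches a thin edge or reaches $a$ itself. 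The point is that the semilattice quasi-order on $\Sg{a,b}$ induced by the action of $f$ is a finite partial order, so iterating $y\mapsto$ (next element up) terminates, and the first element reached that is distinct from $a$ is a thin semilattice edge with $a$ by minimality. Throughout, because $f$ is a projection on every non-semilattice thick edge and idempotent, every composition I form remains a semilattice operation on each thick semilattice edge and a projection elsewhere, so the conclusion of Theorem~\ref{the:uniform}(i) is preserved.

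The main obstacle I expect is bookkeeping the interaction between different pairs: fixing the \eqref{cond:sls} condition for one pair $(a,b)$ by composing $f$ with itself might destroy it for another pair $(a',b')$, since the orientation and the set of thin semilattice edges both depend on $f$. The way around this is to argue that the relevant quantity --- the length of a longest $f$-increasing chain starting at a given element that has not yet `bottomed out' on a thin edge --- is monotone non-increasing under the compositions used, and strictly decreases for the pair being treated, so a carefully chosen order of treatment (or a single uniform composition iterated $|\zA|!$ times, as in the proof of Lemma~\ref{lem:fgh-identities}(3)) drives all the relevant quantities to the base case simultaneously. I would phrase the final operation as $f$ composed with itself a fixed number of times depending only on $|\zA|$, verify by the chain-length argument that \eqref{cond:sls} now holds for every pair, and note that the preservation of the action on thick semilattice edges is immediate from idempotency and the fact that on a two-element semilattice any idempotent binary term built from a semilattice operation is again that same semilattice operation.
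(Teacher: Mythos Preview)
Your proposal has the right overall shape --- start from the $f$ of Theorem~\ref{the:uniform} with the identity $f(x,f(x,y))=f(x,y)$, then iterate to force \eqref{cond:sls} --- but the iteration you describe is not the one that works, and the key step is missing.

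The concrete problem is this: you say you will ``iterate $y\mapsto$ (next element up)'' starting at $f(a,b)$, but with the identity $f(x,f(x,y))=f(x,y)$ the obvious map $y\mapsto f(a,y)$ is already idempotent, so iterating it goes nowhere past $b_0=f(a,b)$. The paper's iteration is instead $b_{i+1}=f(a,f(b_i,a))$, i.e.\ one swaps the arguments before reapplying $f$. This produces a descending chain of subalgebras $\Sg{a,b_0}\supseteq\Sg{a,b_1}\supseteq\cdots$, which must stabilize at some $k$. The heart of the proof (the step you do not have) is then to show that $\{a,b_k\}$ is a thin semilattice edge: since $b_k\in\Sg{a,b_{k+1}}$ one writes $b_k=t(a,b_{k+1})$ for some term $t$, and the operation $s(x,y)=t(x,f(x,f(y,x)))$ satisfies $s(a,b_k)=s(b_k,a)=b_k$. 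Smoothness then guarantees $f$ itself is semilattice on $\{a,b_k\}$. Your appeal to ``the first element reached is a thin semilattice edge by minimality'' and to Proposition~\ref{pro:simple} inside $\Sg{a,f(a,b)}$ does not supply this argument.

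Once the correct iteration is in hand, your bookkeeping worry evaporates: defining $f_0=f$, $f_{i+1}(x,y)=f(x,f(f_i(x,y),x))$, one has $f_i(a,b)=b_i$ uniformly, so taking $f'=f_\ell$ with $\ell$ the maximum stabilization index over all pairs handles every $(a,b)$ simultaneously; no pair-by-pair treatment or monotone-quantity argument is needed.
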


\begin{proof}
Let $f$ be a binary term operation such that $f$ is semilattice on every
semilattice edge and $f(x, f(x, y)) = f(x, y)$. Let $a, b\in \zA$ be such that
$f(a, b)\ne a$, and set $b_0 = f(a, b)$ and $b_{i+1} = f(a, f(b_i, a))$
for $i > 0$.

\medskip

{\sc Claim 1.}
For any $i$, $f(a, b_i) = b_i$.

\smallskip

Indeed, $f(a, b_0) = f(a, f(a, b)) = f(a, b) = b_0$, and for any $i > 0$
$$
f(a, b_i) = f(a, f(a, f(b_{i-1}, a))) = f(a, f(b_{i-1}, a)) = b_i.
$$

Let $\zB_i = \Sg{a, b_i}$. Then $\zB_0\supseteq \zB_1\supseteq\ldots$, 
and there is $k$ with $\zB_{k+1} = \zB_k$.

\medskip

{\sc Claim 2.}
$f(a, b_k) = f(b_k, a) = b_k$.

\smallskip

Since $b_k\in \zB_{k+1} = \Sg{a, b_{k+1}}$, there is a term operation
$t$ such that $b_k = t(a, b_{k+1})$. Let $s(x, y) =
t(x, f(x, f(y, x)))$. For this operation we have
\begin{align*}
s(a, b_k) &= t(a, f(a, f(b_k, a))) = t(a, b_{k+1}) = b_k\\
s(b_k, a) &= t(b_k, f(b_k, f(a, b_k))) = t(b_k, f(b_k, b_k)) = b_k.
\end{align*}

This means that $ab_k$ is a semilattice edge, and the congruence witnessing
it is the equality relation. Since $\zA$ is smooth, every one of its thick semilattice 
edge is a subalgebra, and by the choice of $f$, it is a 
semilattice operation on every such pair. Moreover, by Claim~1 
$f(a, b_k) = f(b_k, a) = b_k$.

Let $\ell$ be the maximal among the numbers chosen as before
Claim 2 for all pairs $a, b$ with $f(a, b)\ne a$. Let $f_0 = f$,
and $f_{i+1}(x, y) = f(x, f(f_i(x, y), x))$ for $i\ge 0$. Let also
$f' = f_\ell$.

\medskip

{\sc Claim 3.} For any $a, b\in A$, either $f'(a, b) = a$, or the
pair $ac$, where $c = f'(a, b)$ is a thin semilattice edge.

\smallskip

If $f(a, b) = a$ then it is straightforward that $f'(a, b) = a$.
Suppose $f(a, b)\ne a$. Note that in the
notation introduced before Claim~1 $b_i=f_i(a,b)$. Indeed, we 
have $b_0=f(a,b)=f_0(a,b)$, and then 
\[
f_{i+1}(a,b)=f(a,f(f_i(a,b),a))=f(a,f(b_i,a))=b_{i+1}.
\]
We prove by induction on $i$ that $f_i(a,c)=f_i(c,a)=c$. Since $\zB_\ell = \zB_{\ell+1}$, where the $\zB_i$ are constructed 
as before, by Claim 2 $f(a, c) = f(c, a) = c$. This gives the base case
of induction. Suppose $f_i(a, c) = f_i(c, a) = c$. Then
\begin{align*}
f_{i+1}(a, c) &= f(a, f(f_i(a, c), a) = f(a, f(c, a)) = c\\
f_{i+1}(c, a) &= f(c, f(f_i(c, a), c) = f(c, f(c, c)) = c.
\end{align*}
Claim 3 is proved.

\smallskip

To complete the proof it suffices to check that $f'$ is a
semilattice operation on every (thick) semilattice edge of $\cG(\zA)$.
However, this is straightforward from the construction of
$f'$. Also, as $f$ is the first projection on every thick edge of every
non-semilattice type, so is $f'$.
\end{proof}

It will be convenient for us to denote binary operation $f$ that satisfies the
conditions of Theorem~\ref{the:uniform}, 
Lemma~\ref{lem:fgh-identities}(1), and
Proposition~\ref{pro:good-operation} by $\cdot$, that is, to write
$x\cdot y$ or just $xy$ for $f(x,y)$, whenever it does not cause a confusion. 
The fact that $ab$ is a thin
semilattice edge we will also denote by $a\le b$. In other words,
$a\le b$ if and only if $a\cdot b=b\cdot a=b$.

Next we show that those semilattice edges $ab$, for which 
the equality relation does not witness that they are semilattice edges, 
can be thrown out of the graph $\cG(\zA)$ such
that the graph remains connected. Therefore, we can assume that every
semilattice edge is thin. Let $\zA\in\cK$ be an algebra, $a,b\in\zA$, $\zB=\Sg{a,b}$, 
and $\th$ a congruence of $\zB$. Pair $ab$ is said to be \emph{minimal} 
with respect to $\th$ if for any $b'\in b\fac\th$, $b\in\Sg{a,b'}$.

\begin{corollary}\label{cor:thick-thin}
Let $ab$ be a semilattice edge, $\th$ a congruence of
$\Sg{a,b}$ that witnesses this, and $c\in a\fac\th$. Then there is
$d\in b\fac\th$ such that $cd$ is a thin semilattice edge. Moreover, $cd$ 
is a thin semilattice edge for every $d\in b\fac\th$ such that $cd$ is minimal 
with respect to $d$.
\end{corollary}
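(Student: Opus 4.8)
The plan is to use the operation $f$ (written $\cdot$) produced by Proposition~\ref{pro:good-operation}: it is an SLS operation that is a semilattice operation on every thick semilattice edge. Since $ab$ is a semilattice edge witnessed by $\th$, the pair $\{a\fac\th, b\fac\th\}$ is a thick semilattice edge, so $f\fac\th$ is a semilattice operation on it. First I would normalize: replacing $b$ by $a\fac\th\cdot b\fac\th$-representative if necessary, I may assume $f(a\fac\th,b\fac\th)=f(b\fac\th,a\fac\th)=b\fac\th$, i.e.\ the thick edge is oriented from $a\fac\th$ to $b\fac\th$. Now take any $c\in a\fac\th$ and set $d=c\cdot b=f(c,b)$. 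Then $d\fac\th = f(c\fac\th, b\fac\th) = f(a\fac\th,b\fac\th) = b\fac\th$, so $d\in b\fac\th$ as required. Since $f(c,b)=d\ne c$ (as $d\fac\th=b\fac\th\ne a\fac\th = c\fac\th$), the SLS condition \eqref{cond:sls} immediately gives that the pair $(c, f(c,b)) = (c,d)$ is a thin semilattice edge. That is the whole argument.

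The only point that needs a little care is the initial normalization step, which is the main (minor) obstacle: a priori $f\fac\th$ is a semilattice operation on $\{a\fac\th, b\fac\th\}$, but I have not fixed which of the two elements is the ``absorbing'' one for $f\fac\th$. If $f(a\fac\th,b\fac\th) = a\fac\th$ rather than $b\fac\th$, then the edge is oriented from $b\fac\th$ to $a\fac\th$, and $f(c,b)$ would land back in $a\fac\th$, not $b\fac\th$. To handle this I would observe that in this case one can instead work with the element $b$ and the block $b\fac\th$ directly: pick any $b'\in b\fac\th$, then $f(a,b')\fac\th = f(a\fac\th,b\fac\th) = a\fac\th$, so actually it is $b\fac\th$ that sits ``above'' — wait, more cleanly: since $f\fac\th$ is semilattice on $\{a\fac\th,b\fac\th\}$, exactly one of the two blocks, call it $e\fac\th$, satisfies $f(x\fac\th, e\fac\th)=f(e\fac\th,x\fac\th)=e\fac\th$ for $x\fac\th$ the other block. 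If $e\fac\th = b\fac\th$ we are in the good case above. If $e\fac\th = a\fac\th$, then for the given $c\in a\fac\th$ I instead need to produce $d\in b\fac\th$ with $cd$ thin, and here I use that $f(b, c)\fac\th = f(b\fac\th, a\fac\th) = a\fac\th$ is still not in $b\fac\th$; so this does not directly work, and instead one should appeal to the fact (from the proof of Proposition~\ref{pro:good-operation} and Corollary~\ref{cor:thick-thin}'s intended use in the orientation discussion preceding it) that the orientation of thick semilattice edges under the chosen $f$ is consistent, so by renaming we may always assume $b\fac\th$ is the absorbing block. In the write-up I would simply fix the orientation of $ab$ to be from $a$ to $b$ with respect to $f$, which is exactly the orientation convention introduced right before Proposition~\ref{pro:good-operation}.

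With that convention in place, the proof is two lines: set $d = f(c,b)$; then $d\fac\th = f(c\fac\th,b\fac\th) = f(a\fac\th,b\fac\th) = b\fac\th$ so $d\in b\fac\th$, and since $d\fac\th \ne a\fac\th = c\fac\th$ we have $d\ne c$, whence \eqref{cond:sls} applied to the pair $(c,b)$ yields that $(c, f(c,b)) = (c,d)$ is a thin semilattice edge. I expect no further obstacles; the content is entirely carried by Proposition~\ref{pro:good-operation}, and the corollary is essentially a restatement of the SLS condition in the presence of the thick-edge hypothesis.
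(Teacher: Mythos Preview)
Your argument is correct and is exactly the paper's proof: set $d=c\cdot b$, observe $d\in b\fac\th$ so $d\ne c$, and invoke the SLS property to conclude $c\le d$. The paper's proof is literally these three steps, without any discussion of orientation; your worry about which of $a\fac\th,b\fac\th$ is absorbing is legitimate, and your resolution (adopt the orientation convention introduced just before Proposition~\ref{pro:good-operation}) is the intended reading---the paper silently assumes $f(a\fac\th,b\fac\th)=b\fac\th$ when it writes ``$d=cb\in b\fac\th$''.
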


\begin{proof}
By Proposition~\ref{pro:good-operation} $cb=c$ or $c\le cb$. Since
$d=cb\in b\fac\th$ the former option is impossible. Therefore $cd$ is a
thin semilattice edge. To prove the last claim let $d\in b\fac\th$ be such that 
$cd$ is a minimal pair with respect to $\th$. Let also $d'=c\cdot d$; we have 
$c\le d'$. Then there is a term operation $p$ such that $p(c,d')=d$. Let 
$f'(x,y)=p(x,x\cdot y)$ and $d''=f'(d,c)$. Then $f'(c,d)=d$ and $d''\in b\fac\th$.
Again, there is a term operation $r$ such that $r(c,d'')=d$. Then for 
$f''(x,y)=r(x,f'(y,x))$ we have $f''(c,d)=r(c,f'(d,c))=r(c,d'')=d$ and 
$f''(d,c)=r(d,f'(c,d))=r(d,d)=d$.
\end{proof}

%%%%%%%%%%%%%%%%%%%%%%%%%%%%%%%%%%%
\subsection{Thin majority edges}\label{sec:thin-majority}

Here we introduce thin majority edges in a way similar to thin 
semilattice edges, although in a weaker sense.

A ternary term operation $g'$ is said to satisfy the \emph{majority
condition} (with respect to $\cK$) if it satisfies the identity from 
Lemma~\ref{lem:fgh-identities}(2) and $g'$ is a majority operation on 
every thick majority edge of every algebra from $\cK$. By 
Corollary~\ref{cor:uniform} an operation satisfying the
majority condition exists.

A pair $ab$ is called a \emph{thin majority edge} (with respect to $\cK$) if 
\begin{itemize}
\item[(*)] 
for any term operation $g'$ satisfying the majority condition, 
the subalgebras $\Sg{a,g'(a,b,b)},\Sg{a,g'(b,a,b)},\Sg{a,g'(b,b,a)}$
contain $b$.
\end{itemize}
The operation $g$ from Corollary~\ref{cor:uniform} does not have to 
satisfy any specific conditions on a thin majority edge, except what follows 
from its definition. Also, thin majority edges are directed, since $a,b$ 
in the definition occur asymmetrically.

If in addition to the condition above $ab$ is also a majority edge, a 
congruence $\th$ witnesses that, and $ab$ is a minimal pair with respect to $\th$,
we say that $ab$ is a \emph{special majority edge}.
We now show that thin majority edges have term operations on them 
that are similar to majority operations. 

Next we prove a property, Lemma~\ref{lem:thin-majority-triple}, of thin 
majority edges that will be useful in the future. We start with an auxiliary lemma.

\begin{lemma}\label{lem:maj-condition}
Let $\zA$ be a smooth algebra and $g'$ a ternary term operation 
that is a majority operation on every thick majority edge. 
\begin{itemize}
\item[(1)]
For every binary term operation $t$, the operations $g'(t(x,g'(x,y,z)),y,z),\lb
g'(x,t(y,g'(x,y,z)),z), g'(x,y,t(z,g'(x,y,z)))$ are majority on 
every\lb thick majority edge.
\item[(2)]
If for some $a,b\in\zA$, $g'(a,b,b)=b$ [or $g'(b,a,b)=b$, or $g'(b,b,a)=b$], 
then there is a term operation $g''$ satisfying the majority condition and 
such that $g''(a,b,b)=b$ [or $g''(b,a,b)=b$, or $g''(b,b,a)=b$, respectively].
\item[(3)]
If for some $a,b\in\zA$, $g'(a,a,b)=b$,
then there is a term operation $g''$ satisfying the majority condition and 
such that $g''(a,a,b)=b$.
\end{itemize}
\end{lemma}

\begin{proof}
(1) Let $ab$ be a majority edge of $\zA$ and $\th$ a congruence
of $\Sg{a,b}$ that witnesses that. Then $\zB=\Sg{a,b}\fac\th$ is a
2-element algebra that has a majority operation but does not 
have a semilattice term operation. This means that $t$ is a 
projection on $\zB$. If $t(x,y)=x$ on $\zB$, then  
\[
g'(t(x,g'(x,y,z)),y,z)=g'(x,y,z)
\]
on $\zB$, as required. If $t(x,y)=y$ on $\zB$, then
\[
g''(x,y,z)=g'(t(x,g'(x,y,z)),y,z)=g'(g'(x,y,z),y,z).
\]
It is now straightforward to verify that $g''$ is majority on $\zB$.
A proof for the other two operations is quite similar.

(2) To show the existence of an operation $g''$ satisfying the equation 
$g''(x,g''(x,y,y), g''(x,y,y))=g''(x,y,y)$ we use the construction from the 
proof of Lemma~\ref{lem:good-functions}. We consider the
unary operation $g_x(y)=g'(x,y,y)$ and  
$g''(x,y,z)=g_x^{n-1}(g'(x,y,z))$. By Lemma~\ref{lem:good-functions} $g''$ 
is a majority operation on every thick majority edge. We show that for any $i\in[n-1]$
it holds that $g^\dagger(a,b,b)=b$ [or $g^\dagger(b,a,b)=b$,  or $g^\dagger(b,b,a)=b$],
where $g^\dagger(x,y,z)=g_x^i(g'(x,y,z))$. We proceed by induction on $i$.
Assuming $g_x^0(g'(x,y,z))=g'(x,y,z)$, the base case follows from the conditions 
of the lemma. Suppose $g^\dagger(x,y,z)=g_x^i(g'(x,y,z))$ and $g^\dagger(a,b,b)=b$.
We need to check that $g_a(g^\dagger(a,b,b))=b$ [$g_b(g^\dagger(b,a,b))=b$,$g_b(g^\dagger(b,b,a))=b$], which can be easily verified
\begin{align*}
g_a(g^\dagger(a,b,b)) &=g'(a,g^\dagger(a,b,b),g^\dagger(a,b,b))=g'(a,b,b)=b,\\
g_b(g^\dagger(b,a,b)) &=g'(b,g^\dagger(b,a,b),g^\dagger(b,a,b))=g'(b,b,b)=b,\\
g_b(g^\dagger(b,b,a)) &=g'(b,g^\dagger(b,b,a),g^\dagger(b,b,a))=g'(b,b,b)=b.
\end{align*}

(3) Here we use a slightly different construction. Define operations $g_i$ inductively by setting 
\begin{align*}
g_1(x,y,z) &= g'(x,y,z) \quad\text{and}\\
g_{i+1} &= g'(x,g_i(x,y,y),g_i(x,y,z)).
\end{align*}
If $g_i(a,a,b)=b$, then $g_{i+1}(a,a,b)=g'(a,g_i(a,a,a),g_i(a,a,b))=g'(a,a,b)=b$. Also, if $g_i$ is a majority operation on some thick majority edge, it is straightforward to verify that so is $g_{i+1}$ on that edge. Therefore by the assumptions on $g'$, for every $i$ the operation $g_i$ is a majority operation on every thick majority edge of any $\zB\in\cK$, and $g_i(a,a,b)=b$. We now find $n$ such that $g_n$ satisfies the identity from Lemma~\ref{lem:good-functions}(2). Let $g_x(y)=g'(x,y,y)$, $n_x$ the idempotent power of $g_x$, and $n$ the least common multiple of $n_x$ for all $x\in\zB\in\cK$. Then we have 
\[
g_n(x,y,y)=\underbrace{g_x\circ g_x\circ\dots\circ g_x(y)}_{n\text{ times}}=g_x^n(y),
\]
and so 
\[
g_n(x,g_n(x,y,y),g_n(x,y,y))=g_x^n(g_x^n(y))=g_x^n(y)=g_n(x,y,y),
\]
as required.
\end{proof}

\begin{lemma}\label{lem:thin-majority-triple}
Let $\zA_1,\zA_2,\zA_3\in\cK$, and let $a_1b_1$, $a_2b_2$, and $a_3b_3$ 
be thin majority edges in 
$\zA_1,\zA_2,\zA_3$. Then there is an operation $g'$ satisfying the majority 
condition such that $g'(a_1,b_1,b_1)=b_1$, 
$g'(b_2,a_2,b_2)=b_2$, $g'(b_3,b_3,a_3)=b_3$.
In particular, for any thin majority edge $ab$ there is an operation 
$g'$ satisfying the majority condition such that 
$g'(a,b,b)=g'(b,a,b)=g'(b,b,a)=b$.
\end{lemma}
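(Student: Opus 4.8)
The plan is to prove the three-algebra statement first; the ``in particular'' clause then follows by applying it with $\zA_1=\zA_2=\zA_3=\zA$ and $a_1b_1=a_2b_2=a_3b_3=ab$. Start with an operation $g_0$ satisfying the majority condition, which exists by Corollary~\ref{cor:uniform}. The key observation is the defining property (*) of a thin majority edge: since $a_1b_1$ is a thin majority edge in $\zA_1$, we have $b_1\in\Sg{a_1,g_0(a_1,b_1,b_1)}$, so there is a binary term operation $t_1$ with $b_1=t_1(a_1,g_0(a_1,b_1,b_1))$. By Lemma~\ref{lem:maj-condition}, the operation $g_1(x,y,z)=g_0(t_1(x,g_0(x,y,z)),y,z)$ again satisfies the majority condition, and by construction $g_1(a_1,b_1,b_1)=b_1$.

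The main issue is that fixing the first coordinate may destroy nothing, but we now need to fix the second and third coordinates \emph{without} ruining what we just achieved. So next apply (*) to the edge $a_2b_2$: $b_2\in\Sg{a_2,g_1(b_2,a_2,b_2)}$ gives $t_2$ with $b_2=t_2(a_2,g_1(b_2,a_2,b_2))$, and set $g_2(x,y,z)=g_1(x,t_2(y,g_1(x,y,z)),z)$; again by Lemma~\ref{lem:maj-condition} this satisfies the majority condition and $g_2(b_2,a_2,b_2)=b_2$. The point I must check is that $g_2(a_1,b_1,b_1)=b_1$ still holds: this is where idempotency and the ``thin'' hypothesis on $a_1b_1$ come in. Indeed $g_2(a_1,b_1,b_1)=g_1(a_1,t_2(b_1,g_1(a_1,b_1,b_1)),b_1)=g_1(a_1,t_2(b_1,b_1),b_1)=g_1(a_1,b_1,b_1)=b_1$, using $g_1(a_1,b_1,b_1)=b_1$ and idempotency of $t_2$. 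Similarly, a third application of (*) to $a_3b_3$ produces $t_3$ with $b_3=t_3(a_3,g_2(b_3,b_3,a_3))$ and $g'(x,y,z)=g_2(x,y,t_3(z,g_2(x,y,z)))$; the same idempotency computation shows $g'(a_1,b_1,b_1)=b_1$ and $g'(b_2,a_2,b_2)=b_2$ are preserved, while $g'(b_3,b_3,a_3)=b_3$ is new, and $g'$ still satisfies the majority condition by Lemma~\ref{lem:maj-condition}.

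So the structure of the argument is a three-step sequential substitution, each step invoking the defining property (*) once and Lemma~\ref{lem:maj-condition} once to stay inside the class of operations satisfying the majority condition, with a short idempotency check at each step to verify that earlier equalities survive. The \textbf{main obstacle}, and the step deserving the most care, is precisely this preservation check: one must confirm that substituting into coordinate $i$ does not disturb the equality already forced on coordinate $j<i$. The reason it works is that the substituted term $t_i$ is applied to an argument that already equals $b_i$ on the relevant tuple (because $g_{i-1}$ already sends that tuple to $b_i$ in the earlier-fixed coordinate), so idempotency of $t_i$ collapses it to the identity there. One small point to state carefully: the lemma's hypothesis requires the algebras to be similar and smooth, so Lemma~\ref{lem:maj-condition} applies in each $\zA_j$ simultaneously, and since the majority condition is a condition relative to the whole class $\cK$, an operation satisfying it on $\cK$ does so on each $\zA_j$; thus the three substitutions can be carried out with term operations of the common signature and the resulting $g'$ works for all three edges at once.
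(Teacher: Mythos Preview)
Your proof is correct and follows essentially the same three-step substitution as the paper: start from an operation satisfying the majority condition, then successively apply (*) and Lemma~\ref{lem:maj-condition} in each coordinate, with the idempotency check ensuring earlier equalities are preserved. The paper additionally frames the computation inside the subalgebra of $\zA_1\times\zA_2\times\zA_3$ generated by $(a_1,b_2,b_3),(b_1,a_2,b_3),(b_1,b_2,a_3)$, but this is only bookkeeping and your direct verification is equivalent.
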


\begin{proof}
Let $g$ be an operation satisfying the conditions of Corollary~\ref{cor:uniform};
in particular, it satisfies the majority condition. 
Let $\rel$ be the subalgebra of $\zA_1\tm\zA_2\tm\zA_3$ generated by
$\ba_1=(a_1,b_2,b_3),\ba_2=(b_1,a_2,b_3),\ba_3=(b_1,b_2,a_3)$. Let 
$b^1_1=g(a_1,b_1,b_1)$. By the definition of thin
majority edges there is a term operation $t$ such that $b_1=t(a_1,b^1_1)$.
Consider $g'_1(x,y,z)=g(t(x,g(x,y,z)),y,z)$,
by Lemma~\ref{lem:maj-condition}(1) $g'_1$ is a majority operation on every 
thick majority edge and 
\[
g'_1(a_1,b_1,b_1)=g(t(a_1,g(a_1,b_1,b_1)),b_1,b_1)
=g(b_1,b_1,b_1)=b_1.
\]
By Lemma~\ref{lem:maj-condition}(2) $g'_1$ can be converted into $g'$ 
satisfying the majority condition and such that $g'(a_1,b_1,b_1)=b_1$. 
Also let $g'(b_2,a_2,b_2)=b^2_2$ and $g'(b_3,b_3,a_3)=b^2_3$,
so that $g'(\ba_1,\ba_2,\ba_3)=(b_1,b^2_2,b^2_3)$.

Again, as $g'$ satisfies the majority condition, by the definition of thin
majority edges there is a term operation $s$ such that $b_2=s(a_2,b^2_2)$.
Consider $g''_1(x,y,z)=g'(x,s(y,g'(x,y,z)),z)$,
by Lemma~\ref{lem:maj-condition}(1) $g''_1$ is a majority operation on every 
thick majority edge and 
\[
g''_1(b_2,a_2,b_2)=g'(b_2,s(a_2,g'(b_2,a_2,b_2)),b_2)
=g'(b_2,b_2,b_2)=b_2.
\]
Also 
\[
g''_1(a_1,b_1,b_1)=g'(a_1,s(b_1,g'(a_1,b_1,b_1)),b_1)
=g'(a_1,s(b_1,b_1),b_1)=b_1.
\]
Operation $g''_1$ can be transformed into $g''$ satisfying the majority 
condition and such that $g''(a_1,b_1,b_1)=b_1, g''(b_2,a_2,b_2)=b_2$.
Let $g''(b_3,b_3,a_3)=b^3_3$,
so that $g''(\ba_1,\ba_2,\ba_3)=(b_1,b_2,b^3_3)$. 

Now we use the same construction once more. As $g''$ satisfies 
the majority condition, there is a binary term 
operation $q$ such that $q(a_3,b^3_3)=b_3$. Consider operation 
$g'''_1(x,y,z)=g''(x,y,q(z,g''(x,y,z)))$. 
As before, we obtain
\[
g'''_1(b_3,b_3,a_3)=g''(b_3,b_3,q(a_3,g''(b_3,b_3,a_3)))
=g''(b_3,b_3,b_3)=b_3.
\]
Also 
\[
g'''_1(a_1,b_1,b_1)=g''(a_1,b_1,q(b_1,g''(a_1,b_1,b_1)))
=g''(a_1,b_1,q(b_1,b_1))=b_1,
\]
and 
\[
g'''_1(b_2,a_2,b_2)=g''(b_2,a_2,q(b_2,g''(b_2,a_2,b_2)))
=g''(b_2,a_2,q(b_2,b_2))=b_2.
\]
Thus, $g'''_1(\ba_1,\ba_2,\ba_3)=(b_1,b_2,b_3)$. We now apply 
Lemma~\ref{lem:maj-condition}(2) to obtain $g'''$ that satisfies the 
conditions of the lemma.
\end{proof}

Next we show that every majority edge has a thin majority edge 
associated with~it.

% thin majority, combined majority
\begin{lemma}\label{lem:thin-majority}
Let $\zA\in\cK$ be a smooth idempotent algebra, $ab$ a majority 
edge in $\zA$,  
and $\th$ a congruence of $\Sg{a,b}$ witnessing that. Then for 
any $c\in a\fac\th$ and $d\in b\fac\th$ such that $cd$ is a minimal 
pair with respect to the restriction $\th'$ of $\th$ on $\Sg{c,d}$, 
the pair $cd$ is a special thin majority edge.
\end{lemma}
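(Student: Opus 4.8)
The plan is to verify directly that the pair $cd$ satisfies condition~$(*)$ in the definition of a thin majority edge, and that it is moreover a special majority edge. The latter part is essentially immediate: since $c\in a\fac\th$ and $d\in b\fac\th$, Lemma~\ref{lem:many-edges} tells us that $cd$ is again a majority edge of $\cG(\zA)$, witnessed by the restriction $\th'$ of $\th$ to $\Sg{c,d}$; and we have chosen $cd$ precisely so that it is a minimal pair with respect to $\th'$. So the entire content of the lemma is to establish condition~$(*)$, namely that for \emph{every} ternary term operation $g'$ satisfying the majority condition, each of $\Sg{c,g'(c,d,d)}$, $\Sg{c,g'(d,c,d)}$, $\Sg{c,g'(d,d,c)}$ contains $d$.

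First I would fix such a $g'$ and consider, say, $e=g'(c,d,d)$ (the other two coordinates being symmetric). Because $g'$ satisfies the majority condition it is a majority operation on the thick edge $\{c\fac{\th'},d\fac{\th'}\}$ (which is a subalgebra, $\zA$ being smooth, and carries a majority but no semilattice operation). Hence $e\fac{\th'}=d\fac{\th'}$, i.e.\ $e\in d\fac{\th'}\subseteq d\fac\th$. Now I would use minimality of $cd$ with respect to $\th'$: by definition, for any $d'\in d\fac{\th'}$ we have $d\in\Sg{c,d'}$. Applying this with $d'=e$ gives $d\in\Sg{c,e}=\Sg{c,g'(c,d,d)}$, which is exactly what~$(*)$ demands for the first subalgebra. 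The argument for $\Sg{c,g'(d,c,d)}$ and $\Sg{c,g'(d,d,c)}$ is identical, since $g'$ being a majority operation on the thick edge also forces $g'(d,c,d)$ and $g'(d,d,c)$ into $d\fac{\th'}$.

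The one point that needs a little care—and which I expect to be the main (mild) obstacle—is to confirm that $e=g'(c,d,d)$ really lies in the $\th'$-block of $d$ rather than merely in the $\th$-block: minimality of $cd$ was stated with respect to $\th'$, the congruence of $\Sg{c,d}$, so I need $e\in\Sg{c,d}$ and $e\ \th'\ d$. The containment $e\in\Sg{c,d}$ is clear since $e$ is a term applied to $c,d$. For $e\ \th'\ d$ one uses that $g'$ is a majority operation on $\{c\fac{\th'},d\fac{\th'}\}$, which is legitimate because $\{c\fac{\th'},d\fac{\th'}\}$ is a thick majority edge of $\zA$ (by Lemma~\ref{lem:many-edges} the pair $cd$ is a majority edge witnessed by $\th'$, so the factor $\Sg{c,d}\fac{\th'}$ is a two-element algebra with a majority operation), and by hypothesis $g'$ is majority on every thick majority edge of every algebra of $\cK$. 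Thus $g'(c\fac{\th'},d\fac{\th'},d\fac{\th'})=d\fac{\th'}$, giving $e\ \th'\ d$ as needed. Assembling these observations completes the verification of $(*)$, and together with minimality and the majority-edge property this shows $cd$ is a special thin majority edge.
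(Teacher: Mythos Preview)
Your proposal is correct and follows essentially the same approach as the paper: compute $g'(c,d,d)\eqc\th d$ (and similarly for the other two), note this places $g'(c,d,d)$ in $d\fac{\th'}$, and invoke minimality of $cd$ with respect to $\th'$ to get $d\in\Sg{c,g'(c,d,d)}$. The paper's proof is simply the one-line version of what you wrote; your extra care in checking that $e\in\Sg{c,d}$ and $e\eqc{\th'}d$ (rather than merely $e\eqc\th d$) is justified and makes explicit a point the paper leaves implicit.
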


\begin{proof}
Since $g'(c,d,d)\eqc\th g'(d,c,d)\eqc\th g'(d,d,c)\eqc\th d$ for 
any $g'$ satisfying the majority condition and $cd$ is a minimal 
pair with respect to $\th'$, we have $d\in\Sg{c,g'(c,d,d)},
\Sg{c,g'(d,c,d)},\Sg{c,g'(d,d,c)}$. 
\end{proof}

Since for any $c\in a\fac\th, d\in b\fac\th$ such that $\Sg{c,d}$ is minimal 
among the subalgebras of the form $\Sg{c,d'}$, $d'\in b\fac\th$, the pair 
$cd$ is minimal with respect to $\th$, we obtain the following

\begin{corollary}\label{cor:thin-majority}
For any majority edge $ab$, where $\th$ is a witnessing congruence, 
and any $c\in a\fac\th$ there is $d\in b\fac\th$ such that $cd$ is a special 
thin majority edge.
\end{corollary}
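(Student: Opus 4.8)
The plan is to deduce Corollary~\ref{cor:thin-majority} directly from Lemma~\ref{lem:thin-majority}, by exhibiting an element $d$ of the block $b\fac\th$ that forms a minimal pair with $a$. Since Lemma~\ref{lem:thin-majority} is stated for arbitrary $c\in a\fac\th$, it suffices to take $c=a$; the only work is to arrange minimality with respect to the restriction of $\th$ to $\Sg{a,d}$.

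First I would set $d_0=b$ and build a descending chain of two-generated subalgebras. Suppose $d_i\in b\fac\th$ has been defined and the pair $ad_i$ is \emph{not} minimal with respect to the restriction $\th'_i$ of $\th$ to $\Sg{a,d_i}$. By the definition of a minimal pair there is $d_{i+1}\in d_i\fac{\th'_i}$ with $d_i\notin\Sg{a,d_{i+1}}$. Since $(d_i,d_{i+1})\in\th'_i\sse\th$ and $d_i\in b\fac\th$, we get $d_{i+1}\in b\fac\th$; moreover $d_{i+1}\in d_i\fac{\th'_i}\sse\Sg{a,d_i}$, so $\Sg{a,d_{i+1}}\sse\Sg{a,d_i}$, and this inclusion is proper because $d_i\in\Sg{a,d_i}\setminus\Sg{a,d_{i+1}}$. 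As $\zA$ is finite, the chain $\Sg{a,d_0}\supsetneq\Sg{a,d_1}\supsetneq\cdots$ must terminate, say at $d=d_k$; then, by construction, the pair $ad$ is minimal with respect to the restriction $\th'$ of $\th$ to $\Sg{a,d}$, and still $d\in b\fac\th$.

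Finally I would apply Lemma~\ref{lem:thin-majority} with $c=a$ and this $d$: since $ab$ is a majority edge witnessed by $\th$, $a\in a\fac\th$, $d\in b\fac\th$, and $ad$ is a minimal pair with respect to $\th'$, the lemma yields that $ad$ is a special thin majority edge, which is exactly the claim. I do not expect a real obstacle here: the argument is a routine finiteness/descending-chain argument, and the one point to be careful about is that each replacement $d_i\leadsto d_{i+1}$ keeps the new element inside the original block $b\fac\th$ (so that the final $d$ is admissible) and strictly shrinks the generated subalgebra (so that the process stops), both of which follow immediately from the definitions.
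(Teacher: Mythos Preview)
Your proposal is correct and is precisely the argument the paper has in mind: the corollary is stated without proof immediately after Lemma~\ref{lem:thin-majority}, and your descending-chain construction of a minimal pair $ad$ inside $b\fac\th$, followed by a direct application of that lemma with $c=a$, is the intended (and essentially only) way to fill in the gap.
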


A natural question is, of course, whether it is also possible to find a 
`symmetric' thin majority edge in any thick majority edge. In other words, 
is it true that for any majority edge $ab$ (witnessed by congruence 
$\th$ of $\Sg{a,b}$) there are $c\in a\fac\th, d\in b\fac\th$ such that 
$cd$ is a majority edge witnessed by the equality relation? 
Unfortunately, the following example derived from one suggested by 
M.Kozik \cite{Kozik17:no-symmetry} shows that this is not true in 
general.

\begin{example}\label{exa:no-majority-symmetry}\rm
Let $A=\{0,1,2,3\}$, and let $\th$ be the equivalence relation on $A$ 
with blocks $\{0,2\}$ and $\{1,3\}$. Define two ternary operations 
$\maj$ and $\min$ on $A$ as follows: $\maj$ is majority and 
$\min$ is the first projection on $A\fac\th$. On each of the 
$\th$-blocks $\{0,2\},\{1,3\}$, the operation $\maj$ is the third 
projection, and $\min$ is the minority operation. Finally, for any 
$a,b,c\in A$ such that $(b,c)\in\th$, but $(a,b)\not\in\th$ we set 
$\maj(a,b,c)=\maj(b,a,c)=c$, $\maj(b,c,a)=\maj(b,c,a-1)$, 
$\min(a,b,c)=\min(a+2,b-1,c-1)$, $\min(b,a,c)=\min(b,a-1,c)$, $\min(b,c,a)=\min(b,c,a-1)$. 
Here $+$ and $-$ are modulo 4. Also, note that in the given conditions 
$\maj(b,c,a-1),\min(a+2,b-1,c-1),\min(b,a-1,c),\min(b,c,a-1)$ are
defined by the action of $\min,\maj$ on $\th$-blocks.
Let $\zA=(A,\maj,\min)$. As is easily seen, $\zA$ omits type \one, and 
any pair $ab$, $a\in\{0,2\},b\in\{1,3\}$, is a majority edge, as 
witnessed by the congruence $\th$. It can be 
verified by straightforward computation (use Universal Algebra 
Calculator \cite{Freese:ucalc} for that) that for no such pair there 
is a term operation of $\zA$ that is majority on $\{a,b\}$. 
\end{example}

%%%%%%%%%%%%%%%%%%%%%%%%%%%%%%%%%
\subsection{Thin affine edges}\label{sec:thin-affine}

In this section we introduce thin affine edges in a similar fashion as
thin majority and semilattice edges.

We say that a term operation $h'$ of a smooth algebra $\zA\in\cK$ 
satisfies the \emph{minority condition} (with respect to $\cK$) if 
it satisfies the identity from Lemma~\ref{lem:fgh-identities}(3), and
for any $\zB\in\cK$ and every affine edge $ab$ of $\zB$ witnessed 
by a congruence $\th$ of $\Sgg\zB{a,b}$, the operation $h'$ is a Mal'tsev 
operation on $\Sgg\zB{a,b}\fac\th$. By Corollary~\ref{cor:uniform} and 
Lemma~\ref{lem:fgh-identities}(3)
an operation satisfying the minority condition exists.

A pair $ab$ is called a \emph{thin affine edge} (with respect to $\cK$) 
if 
\begin{itemize}
\item[(**)] 
$h(b,a,a)=b$, where $h$ is the operation identified in Corollary~\ref{cor:uniform}, and $b\in\Sg{a,h'(a,a,b)}$ for every term operation $h'$ satisfying the minority condition.
\end{itemize}
The operation $h$ from
Corollary~\ref{cor:uniform} does not have to satisfy any specific 
conditions on thin minority edges, except what follows from its 
definition. Also, thin affine edges are directed, since $a,b$ in 
the definition occur asymmetrically. 

\begin{lemma}\label{lem:thin-affine}
Let $\zA_1,\zA_2\in\cK$, and
let $ab$ and $cd$ be thin affine edges in $\zA_1,\zA_2$. Then 
there is an operation $h'$ and such that $h'(b,a,a)=b$ and 
$h'(c,c,d)=d$. In particular, for any thin affine edge $ab$ there 
is an operation $h'$ such that $h'(b,a,a)=h'(a,a,b)=b$.
\end{lemma}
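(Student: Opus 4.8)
The plan is to follow the scheme of Lemma~\ref{lem:thin-majority-triple}, but the affine case collapses to something much shorter: because the fixed operation $h$ already satisfies $h(b,a,a)=b$ for the thin affine edge $ab$, only one ``anchor'' has to be installed, and a single substitution does it. So there is really no analogue of the elaborate three-step chain needed for the majority triple, and correspondingly no need for an analogue of Lemma~\ref{lem:maj-condition}.

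First I would take $h$ to be the fixed term operation satisfying the minority condition. Since $cd$ is a thin affine edge and $h$ itself satisfies the minority condition, the defining property of a thin affine edge applied to $h'=h$ gives $d\in\Sg{c,h(c,c,d)}$ in $\zA_2$, so there is a binary term operation $t$ of $\cK$ with $t(c,h(c,c,d))=d$. I would then set
\[
h'(x,y,z)=t(x,h(x,y,z)).
\]
Evaluating, $h'(c,c,d)=t(c,h(c,c,d))=d$ by the choice of $t$. On the other side, since $ab$ is a thin affine edge we have $h(b,a,a)=b$, hence $h'(b,a,a)=t(b,h(b,a,a))=t(b,b)=b$, the last step by idempotence of $t$. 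This is the first assertion. For the ``in particular'' part I would simply apply the first assertion with $\zA_1=\zA_2$ and the edge $cd$ taken to be $ab$ itself, which yields $h'(b,a,a)=b$ and $h'(a,a,b)=b$.

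The only place needing attention is the invocation of the defining property of a thin affine edge: it quantifies over all term operations satisfying the minority condition, so it is legitimate to instantiate it at $h$ precisely because $h$ was fixed to satisfy that condition. I do not see a genuine obstacle here; the point worth flagging is that, unlike in Lemma~\ref{lem:thin-majority-triple}, the statement does not ask $h'$ to satisfy the minority condition, and indeed the composition $t(x,h(x,y,z))$ need not be Mal'tsev on every affine edge of $\cK$ when $t$ fails to be a projection there. Were one to insist on preserving the minority condition as well, the hard step would be to replace $t$ by a term operation that is a projection on every affine edge of $\cK$ while keeping its value at $(c,h(c,c,d))$ equal to $d$ — but for the present lemma this refinement is unnecessary.
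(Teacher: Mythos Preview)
Your proof is correct and is essentially the paper's own argument. The paper phrases it via the subalgebra $\rel$ of $\zA_1\times\zA_2$ generated by $(b,c),(a,c),(a,d)$, computes $h((b,c),(a,c),(a,d))=(b,d')$ with $d'=h(c,c,d)$, then picks $r$ with $r(c,d')=d$ and applies $r$ to $(b,c),(b,d')$ to get $(b,d)\in\rel$; unwinding this gives precisely your $h'(x,y,z)=t(x,h(x,y,z))$ with $t=r$.
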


\begin{proof}
Let $\rel$ be the subalgebra of $\zA_1\tm\zA_2$ generated by 
$(b,c),(a,c),(a,d)$. By the definition of thin affine edges,
$$
\cl b{d'}=h\left(\cl bc,\cl ac,\cl ad\right)\in\rel,
$$
where $h$ is an operation satisfying the minority condition that exists by 
Corollary~\ref{cor:uniform} and Lemma~\ref{lem:fgh-identities}(3). 
Then as $d\in\Sgg{\zA_2}{c,h(c,c,d)}$, there is a term operation 
$r(x,y)$ such that $d=r(c,d')$. Therefore 
\[
\cl bd=r\left(\cl bc,\cl b{d'}\right)\in\rel.
\] 
The result follows.
\end{proof}

% thin affine, combined Mal'tsev operation
\begin{lemma}\label{lem:thin-affine2}
Let $\zA\in\cK$ be an algebra, $ab$ an affine edge in it, and $\th$ the 
congruence of $\Sg{a,b}$ witnessing that. Then there exists  
$b'\in b\fac\th$ such that $ab'$ is a minimal pair and $h(b',a,a)=b'$. Moreover, 
for any such $b'$, the pair $ab'$ is a thin affine edge.
\end{lemma}

\begin{proof}
Suppose $a,b$ satisfy the conditions of the lemma. Choose $b''$
such that $ab''$ is a minimal pair.
Let $b'=h(b'',a,a)\in\Sg{a,b''}$, since $h$ satisfies the conditions of 
Lemma~\ref{lem:fgh-identities}(3) 
$h(b',a,a)=b'$. Note that by the choice of $b''$ the pair $ab'$ is 
minimal with respect to the restriction $\th'$ of
$\th$ on $\Sg{a,b'}$. Since $h'(a,a,b')\in b\fac\th$ for any $h'$ 
satisfying the minority condition, this means
that $b'\in\Sg{a,h'(a,a,b')}$. Condition (**) follows.
\end{proof}

\begin{corollary}\label{cor:thin-affine}
For any affine edge $ab$, where $\th$ is a witnessing 
congruence, there is $b'\in b\fac\th$ such that $ab'$ is a thin 
affine edge. 
\end{corollary}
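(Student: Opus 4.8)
The plan is to deduce Corollary~\ref{cor:thin-affine} directly from Lemma~\ref{lem:thick-thin-affine}, so the only real work is arranging for the two hypotheses of that lemma to be met simultaneously: we need an element $b'\in b\fac\th$ such that $ab'$ is a minimal pair (with respect to the restriction of $\th$) and $h(b',a,a)=b'$. First I would pick any $b''\in b\fac\th$ for which $ab''$ is a minimal pair; such a $b''$ exists since among all elements of $b\fac\th$ one can choose one whose generated subalgebra $\Sg{a,b''}$ is inclusion-minimal, and minimality of the pair is exactly the statement that $b''\in\Sg{a,b'''}$ for every $b'''\in b''\fac{\th}$ lying in that subalgebra. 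Then I would set $b'=h(b'',a,a)$. Since $h$ is idempotent and $a\in\Sg{a,b''}$, we have $b'\in\Sg{a,b''}$, and since $h\fac\th$ is a Mal'tsev (minority) operation on $\Sg{a,b}\fac\th$ and $\th$ is a congruence, $b'=h(b'',a,a)\eqc\th b''$, so $b'\in b\fac\th$ as well.

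Next I would check the two required properties of $b'$. That $h(b',a,a)=b'$ follows immediately from the identity $h(h(x,y,y),y,y)=h(x,y,y)$ guaranteed by Lemma~\ref{lem:fgh-identities}(3): substituting $x:=b''$, $y:=a$ gives $h(h(b'',a,a),a,a)=h(b'',a,a)$, i.e.\ $h(b',a,a)=b'$. For minimality of $ab'$ with respect to the restriction $\th'$ of $\th$ to $\Sg{a,b'}$, note that $\Sg{a,b'}\sse\Sg{a,b''}$ and $ab''$ is minimal; since $b'\fac{\th'}=b'\fac\th\cap\Sg{a,b'}\sse b''\fac\th\cap\Sg{a,b''}$, any $c\in b'\fac{\th'}$ also lies in $b''\fac\th$, hence $b''\in\Sg{a,c}$ by minimality of $ab''$, and then $b'=h(b'',a,a)\in\Sg{a,c}$ as $h$ is a term operation and $a\in\Sg{a,c}$. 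Thus $ab'$ is a minimal pair. Now Lemma~\ref{lem:thick-thin-affine}, applied with this $b'$ in place of its $b'$, yields that $ab'$ is a thin affine edge, which is exactly the claim.

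I do not expect a serious obstacle here; the statement is essentially a packaging of Lemma~\ref{lem:thick-thin-affine} together with the observation that the operator $b''\mapsto h(b'',a,a)$ is a retraction (by the identity in Lemma~\ref{lem:fgh-identities}(3)) onto the set of fixed points of $x\mapsto h(x,a,a)$ that preserves membership in $b\fac\th$. The only point requiring a little care is the interplay between ``minimal pair with respect to $\th$'' in $\Sg{a,b}$ and ``minimal pair with respect to $\th'$'' in the smaller subalgebra $\Sg{a,b'}$ — one must verify that passing to $b'=h(b'',a,a)$ does not destroy minimality, which is why I start from a pair $ab''$ that is already minimal and use that $\Sg{a,b'}$ is contained in $\Sg{a,b''}$. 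The rest is a direct citation of the two preceding results.

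\begin{proof}[Proof of Corollary~\ref{cor:thin-affine}]
Choose $b''\in b\fac\th$ so that $\Sg{a,b''}$ is minimal with respect to inclusion among the subalgebras $\Sg{a,b'''}$, $b'''\in b\fac\th$; then $ab''$ is a minimal pair with respect to the restriction of $\th$ to $\Sg{a,b''}$. Put $b'=h(b'',a,a)$, where $h$ is the operation satisfying the minority condition fixed before the statement. Since $h$ is idempotent and $a\in\Sg{a,b''}$, we have $b'\in\Sg{a,b''}$, and since $h$ restricted modulo $\th$ is a Mal'tsev operation on $\Sg{a,b}\fac\th$, $b'\eqc\th h(b'',a,a)\eqc\th b''$, so $b'\in b\fac\th$. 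By Lemma~\ref{lem:fgh-identities}(3), $h(b',a,a)=h(h(b'',a,a),a,a)=h(b'',a,a)=b'$. Finally, $\Sg{a,b'}\sse\Sg{a,b''}$, and for any $c\in b'\fac{\th}\cap\Sg{a,b'}$ we have $c\in b''\fac\th\cap\Sg{a,b''}$, whence $b''\in\Sg{a,c}$ by minimality of $ab''$, and so $b'=h(b'',a,a)\in\Sg{a,c}$; thus $ab'$ is a minimal pair. Now Lemma~\ref{lem:thick-thin-affine} applies to $b'$ and shows that $ab'$ is a thin affine edge.
\end{proof}
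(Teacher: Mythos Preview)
Your proof is correct and follows essentially the same approach as the paper. In fact the paper's proof of Lemma~\ref{lem:thick-thin-affine} already carries out this very construction (pick $b''$ with $ab''$ minimal, set $b'=h(b'',a,a)$, use Lemma~\ref{lem:fgh-identities}(3) to get $h(b',a,a)=b'$, and note that $ab'$ remains minimal), so the corollary is left without a separate proof there; you have simply spelled out the details of that step, in particular the verification that minimality is preserved when passing from $b''$ to $b'$.
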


%%%%%%%%%%%%%%%%%%%%%%%%%%%%%%%%%%
\subsection{Some useful terms}

First, we make a useful observation concerning the presence of thick and
thin edges of certain types.

\begin{prop}\label{pro:thin-thick-colors}
Let $\cK$ be a finite class of smooth algebras without edges of the 
unary type. Then there exists an algebra from $\cK$ containing a 
thin semilattice (majority, affine) edge if and only if there is an
algebra in $\cK$ containing a thick edge of the same type.
\end{prop}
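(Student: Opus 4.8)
The statement is an ``iff''. The direction from thin to thick is essentially immediate from the definitions made in the preceding subsections: each of Corollaries~\ref{cor:thick-thin}, \ref{cor:thin-majority}, and~\ref{cor:thin-affine} was obtained by exhibiting, inside a thick edge of a given type, a thin edge of the same type; so if some $\zA\in\cK$ contains a thin semilattice (respectively majority, affine) edge $ab$, then $ab$ is in particular a semilattice (majority, affine) edge of $\zA$, and the witnessing congruence $\th$ of $\Sg{a,b}$ together with $a\fac\th,b\fac\th$ gives a thick edge of the same type. (In the semilattice and affine cases a thin edge is by definition already a thick edge witnessed by the equality relation; in the majority case the definition of a thin majority edge does not force $ab$ to be an edge at all, but a \emph{special} thin majority edge is, and the content we really need is packaged in Corollary~\ref{cor:thin-majority} read contrapositively: no thick majority edge means no special thin majority edge; one then notes that the pure combinatorial condition (*) alone, without a witnessing congruence, cannot be met by a two-element-generated algebra with a majority term absent unless there is a genuine majority edge somewhere — this is the point to be careful about, see below.)

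For the substantive direction, assume $\cK$ has no edges of the unary type and suppose some $\zB\in\cK$ contains a \emph{thick} edge of type $\star\in\{\text{semilattice},\text{majority},\text{affine}\}$; I want a thin edge of type $\star$ in some algebra of $\cK$. The plan is: first, by Corollary~\ref{cor:uniform} fix uniform operations $f,g,h$ for $\cK$ satisfying the conclusions of Theorem~\ref{the:uniform}, Lemma~\ref{lem:fgh-identities}, Proposition~\ref{pro:good-operation} (for $f$), and the majority/minority conditions (for $g,h$). Now if $\star$ is semilattice, Corollary~\ref{cor:thick-thin} directly produces inside the thick semilattice edge a thin semilattice edge (take $c=a\fac\th$'s representative $a$ itself and $d=f(a,b)$; since $\cK$ has no unary edges every $2$-generated factor that fails to be affine/majority/semilattice is impossible, so the dichotomy in Proposition~\ref{pro:good-operation} applies). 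If $\star$ is affine, Corollary~\ref{cor:thin-affine} produces a thin affine edge inside the given thick affine edge. If $\star$ is majority, Corollary~\ref{cor:thin-majority} produces a \emph{special} thin majority edge, hence a thin majority edge, inside the given thick majority edge. In each case the resulting thin edge lives in the same algebra $\zB\in\cK$, so $\cK$ contains a thin edge of type $\star$.

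Thus the proof is really just an assembly of Corollaries~\ref{cor:thick-thin}, \ref{cor:thin-majority}, \ref{cor:thin-affine} in one direction and the definitional observations in the other; the only place deserving genuine care is the converse implication for majority edges, where ``thin majority edge'' is defined purely by the absorption-type condition (*) and a priori need not be a majority edge. Here the hypothesis that $\cK$ has no unary edges is essential: one argues that if $ab$ is a thin majority edge of $\zA\in\cK$ then $\Sg{a,b}$ has a maximal congruence $\th$ with $\Sg{a,b}\fac\th$ simple, and by Theorem~\ref{the:connectedness}(2) applied to $\cK$ this factor is not a set; running through the cases of Proposition~\ref{pro:simple} for $\Sg{a,b}\fac\th$, the absorption condition (*) together with the identities forced by Lemma~\ref{lem:fgh-identities}(2) rules out the abelian/module case (there $g'$ would be a projection and (*) would fail for at least one of the three subalgebras) and the ``no skew congruence / semilattice'' subcases (which would give a semilattice edge, hence a semilattice term on the factor, contradicting that $g'\fac\th$ is a majority on a two-element set with no semilattice term — or else would already exhibit a connection we can reroute), leaving precisely the majority case, so $ab$ is a genuine thick majority edge. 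I expect \emph{this case analysis} to be the main obstacle: matching the purely combinatorial condition (*) against the type-theoretic classification of two-generated simple idempotent algebras, and confirming that no unary type can intervene because of the standing hypothesis on $\cK$.
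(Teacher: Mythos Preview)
Your thick $\Rightarrow$ thin direction is fine and matches the paper: Corollaries~\ref{cor:thick-thin}, \ref{cor:thin-majority}, \ref{cor:thin-affine} do exactly that job.

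The thin $\Rightarrow$ thick direction, however, has a real gap. Your claim that ``in the semilattice and affine cases a thin edge is by definition already a thick edge witnessed by the equality relation'' is correct for semilattice but \emph{false} for affine: the definition of a thin affine edge only says $h(b,a,a)=b$ and $b\in\Sg{a,h'(a,a,b)}$ for every $h'$ satisfying the minority condition; nothing in that forces $\Sg{a,b}$ (or any quotient of it) to be a module. So the affine case is just as much in need of an argument as the majority case, and you have not given one. For the majority case, your proposed analysis of $\Sg{a,b}\fac\th$ via Proposition~\ref{pro:simple} is aiming at the stronger (and unnecessary) conclusion that the particular pair $ab$ is itself a majority edge; the sketch you give does not convincingly rule out, e.g., the situation where every maximal quotient of $\Sg{a,b}$ is a module while condition~(*) still happens to hold for all $g'$ that are genuinely constrained.

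The paper avoids all of this with a one-line contrapositive that you missed: if no algebra in $\cK$ has a thick majority (respectively affine) edge, then the majority (minority) condition is \emph{vacuous} on its nontrivial clause, so the first projection $p(x,y,z)=x$ satisfies it (the identity in Lemma~\ref{lem:fgh-identities}(2), respectively~(3), holds trivially for $p$). Then for any $a\ne b$ we have $p(a,b,b)=a$ (respectively $p(a,a,b)=a$), so $\Sg{a,p(a,b,b)}=\{a\}\not\ni b$ (respectively $\Sg{a,p(a,a,b)}=\{a\}\not\ni b$), and condition~(*) (respectively~(**)) fails for $g'=p$ (respectively $h'=p$). Hence no thin majority (affine) edge exists. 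This handles both cases uniformly and makes no appeal to Proposition~\ref{pro:simple} or to the structure of any particular quotient.
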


\begin{proof}
Corollaries~\ref{cor:thick-thin},~\ref{cor:thin-majority},
and~\ref{cor:thin-affine} imply that if an algebra $\zA$ contains a
thick edge of a certain type, it also contains a thin edge of the same
type. Conversely, every thin semilattice edge is also a thick one by 
definition. Suppose that $\cK$ does not contain algebras with 
majority (affine) edges. Then any operation satisfying the conditions of 
Lemma~\ref{lem:fgh-identities} also satisfies the majority 
(minority) condition, including the first projection $p(x,y,z)$.
Let $a,b\in\zA\in\cK$, $a\ne b$. Then $p(a,b,b)=a$ ($p(a,a,b)=a$),
and so $\Sgg\zA{a,p(a,b,b)}=\{a\}$ ($\Sgg\zA{a,p(a,a,b)}=\{a\}$).
Therefore, $ab$ cannot be a thin edge of the majority (affine) type.
\end{proof}

The next two lemmas show that smooth algebras always have a range of
term operations that behave in a predictable way on thin or thick edges. 
These properties will be very helpful later when proving various statements 
about the structure of smooth algebras.

\begin{lemma}\label{lem:thin-combinations}
Let $\zA_1,\zA_2\in\cK$, and let $ab$ and $cd$ be thin edges in $\zA_1,\zA_2$.
If they have different types, then there is a binary term operation $p$ such that
$p(b,a)=b$, $p(c,d)=d$. 
\end{lemma}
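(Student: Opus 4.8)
The plan is to combine the three thin edges $ab$, $cd$ into a single free algebra and then use the defining properties of thin edges of each type to ``absorb'' the unwanted coordinates. First I would set up the product relation: let $\rel$ be the subalgebra of $\zA_1^2\tm\zA_2^2$ generated by the tuples $\bu_1=(b,a,c,d)$, $\bu_2=(a,b,c,d)$ (if $ab$ is majority or affine we may need a third generator such as $(a,b,d,c)$ or $(a,b,c,c)$, chosen to match the relevant thin-edge definition). The goal is to show $(b,b,d,d)\in\rel$, since then a term operation $w$ witnessing this, read back as $p(x,y)=w(x,y,\cdot,\cdot)$ suitably, yields a binary term with $p(b,a)=b$ and $p(c,d)=d$. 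The main work is to push the first two coordinates to $(b,b)$ using whichever of $f$, $g$, $h$ is appropriate to the type of $ab$, and independently push the last two coordinates to $(d,d)$ using the type of $cd$; since the two edges have \emph{different} types, the operation used for one pair acts as a projection (or otherwise trivially) on the other pair, so the two reductions do not interfere.

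Concretely, I would handle the cases by type. Suppose $ab$ is a semilattice edge. Then by Proposition~\ref{pro:good-operation} and the convention $x\cdot y=f(x,y)$, we have $a\cdot b=b\cdot a=b$, so applying $f=\cdot$ coordinatewise to $\bu_1,\bu_2$ already gives $b$ in the first two coordinates; on the last two coordinates $f$ is the first projection (as $cd$ is not semilattice), so we are left with a tuple of the form $(b,b,*,*)$ with known entries, and then we iterate using the thin-edge condition for $cd$'s type exactly as in the proofs of Lemmas~\ref{lem:thin-majority-triple} and~\ref{lem:thin-affine}. If instead $ab$ is majority and $cd$ affine (or vice versa), I would use $g$ satisfying the majority condition to collapse the $ab$-coordinates — here the crucial point is that $g$ restricted to the affine edge is the \emph{first projection} by Theorem~\ref{the:uniform}(ii), so it does nothing harmful to the $cd$-coordinates — and then, because $g(a,b,b)=g(b,a,b)=g(b,b,a)=b$ can be arranged on the thin majority edge (Lemma~\ref{lem:thin-majority-triple}), repeatedly composing with the binary operations supplied by the thin-edge condition (**) for $cd$ lets us replace the residual affine-coordinate value by $d$ while keeping $b$ in the first two coordinates, just as in Lemma~\ref{lem:thin-affine}. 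The symmetric argument with $h$ (which is the first projection on majority edges) handles the remaining subcase.

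The hard part will be bookkeeping: each time we apply a term operation coming from the thin-edge definition of one of the two edges, we must verify that it still behaves as a projection on the \emph{other} edge, so that the coordinates already fixed are not disturbed. This is exactly where ``different types'' is used, via Theorem~\ref{the:uniform}: $f$ is first projection on non-semilattice thick edges, $g$ is first projection on affine thick edges, $h$ is first projection on majority thick edges, and every auxiliary binary operation $t$ occurring in conditions $(*)$ and $(**)$ is, by the argument in Lemma~\ref{lem:maj-condition}, a projection on any thick majority or affine edge (since those 2-element factors have no semilattice term). So each composition step either preserves the relevant coordinate or moves it one step along a chain of nested generated subalgebras that must stabilize; a finite iteration then lands on $(b,b,d,d)$. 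I expect no genuinely new idea is needed beyond what is already in the proofs of Lemmas~\ref{lem:thin-majority-triple}, \ref{lem:thin-majority}, \ref{lem:thin-affine}, and~\ref{lem:thick-thin-affine}; the statement is essentially a ``two-sided'' version of those, and the proof is a careful interleaving of the one-sided arguments.
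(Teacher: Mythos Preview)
Your overall strategy---use the type-specific operations $f,g,h$ together with the fact that each acts as a projection on edges of the other types---is the right one and matches the paper. But your encoding of the problem as a relation is wrong, and this is not a bookkeeping slip: with generators $\bu_1=(b,a,c,d)$ and $\bu_2=(a,b,c,d)$ in $\zA_1^2\times\zA_2^2$, the third and fourth coordinates are constant ($c$ and $d$, by idempotence), so $(b,b,d,d)$ is never in $\rel$ and the relation carries no information whatsoever about $p(c,d)$. Adding a third generator, as you suggest for the majority/affine cases, only makes the witnessing term ternary, and you have not explained how to specialize it back to a binary $p$.

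The paper's setup is both simpler and correct: take $\rel\subseteq\zA_1\times\zA_2$ generated by $(b,c)$ and $(a,d)$. A binary term $p$ applied to these generators gives $(p(b,a),p(c,d))$, so the goal is exactly $(b,d)\in\rel$. If $ab$ is affine and $c\le d$ semilattice, then $h((b,c),(a,d),(a,d))=(h(b,a,a),h(c,d,d))=(b,d)$ immediately, since $h(b,a,a)=b$ by the thin-affine definition and $h(x,y,z)=x\cdot y\cdot z$ on semilattice edges. The majority/semilattice case is handled the same way using $g$ together with the witness $r$ from condition~$(*)$. The only case needing real work is $ab$ affine and $cd$ majority: there one first builds a ternary $g'$ that still satisfies the majority condition but also has $g'(a,a,b)=b$, and then invokes the thin-majority witness for $cd$ with respect to this $g'$. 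Your claim that the auxiliary binary terms coming from $(*)$ and $(**)$ are automatically projections on the other edge is also not quite right---on an affine factor a binary term has the form $\alpha x+(1-\alpha)y$ and need not be a projection---which is precisely why the affine/majority case requires constructing $g'$ rather than a direct composition.
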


\begin{proof}
Let $\rel$ be the subalgebra of $\zA_1\tm\zA_2$ generated by 
$(b,c),(a,d)$. Let also $g,h$ be the operations from 
Corollary~\ref{cor:uniform}.

If $ab$ is majority and $c\le d$, then by the definition of thin 
majority edges there is a binary term operation $r$ such that 
$b=r(a,g(a,b,b))$. Then
$$
\cl bd=r\left(\cl ad,g\left(\cl ad,\cl bc,\cl bc\right)\right)\in\rel,
$$
as $g(x,y,z)=xyz$ on semilattice edges. Therefore $p(x,y)=r(y,g(y,x,x))$ 
satisfies the conditions.

If $ab$ is affine and $c\le d$, then by the definition of thin 
affine edges,
$$
\cl bd=h\left(\cl bc,\cl ad,\cl ad\right)\in\rel,
$$
as $h(x,y,z)=xyz$ on semilattice edges. Therefore $p(x,y)=h(x,y,y)$ 
satisfies the conditions.

If $ab$ is affine and $cd$ is majority, then set 
$$
\cl {b'}{d'}=h\left(\cl ad,\cl ad,\cl bc\right)\in\rel.
$$
By the definition of thin affine edges there is a binary term operation $r$ 
such that $b=r(a,h(a,a,b))$. Consider operation
\[
g'(x,y,z)=g(r(x,h(x,y,z)),r(y,h(y,x,z)),z).
\]
It satisfies the following condition.
\begin{align*}
g'(a,a,b) & =g(r(a,h(a,a,b)),r(a,h(a,a,b)),b)\\
& =g(r(a,b'),r(a,b'),b)=g(b,b,b)=b.
\end{align*}
Also, on any thick majority edge $\{c'\fac\th,d'\fac\th\}$ of an algebra from $\cK$, 
where $\th$ witnesses that $c'd'$ is a majority edge, $h(x,y,z)=x$, therefore
\begin{align*}
g'(x,y,z) &= g(r(x,h(x,y,z)),r(y,h(y,x,z)),z)\\
&= g(r(x,x),r(y,y),z)=g(x,y,z)
\end{align*}
on $\Sgg{\zA_2}{c',d'}\fac\th$. 

By Lemma~\ref{lem:maj-condition}(3) there is an the operation $g''$ that satisfies the majority condition and such that $g''(a,a,b)=b$. As $cd$ is a thin majority edge, there is a binary term operation $s$ such that $s(c,g''(d,d,c))=d$.
Thus,
\[
\cl bd=s\left(\cl bc,\cl b{g''(d,d,c)}\right)=
s\left(\cl bc,g''\left(\cl ad,\cl ad,\cl bc\right)\right)\in\rel.
\]
The result follows.
\end{proof}

\begin{lemma}\label{lem:op-s-on-affine}
(1) Let $ab$ be a thin majority edge of an algebra $\zA\in\cK$. There is a term
operation $t_{ab}$\label{not:1-ab} such that $t_{ab}(a,b)=b$ and $t_{ab}
(c,d)\eqc\eta c$ for any affine edge $cd$ of any $\zA'\in\cK$, where the
type of $cd$ is witnessed by congruence $\eta$.\\[1mm]
(2) The operation $h'(x,y,z)=h(z,y,x)$ satisfies the following conditions: $h'(a,a,b)=b$ for any thin affine edge $ab$ of any $\zA'\in\cK$, and $h'(d,c,c)\eqc\eta d$ for any affine edge $cd$ of any $\zA'\in\cK$, where the type of $cd$ is witnessed by congruence $\eta$.
Moreover, $h'(x,c',d')$ is a permutation of $\Sg{c,d}\fac\eta$
for any $c',d'\in\Sg{c,d}$.
\end{lemma}

\begin{proof}
Let $g,h$ be operations satisfying the conditions of Corollary~\ref{cor:uniform}.

(1) Let $c_1d_1\zd c_\ell d_\ell$ be a list of all affine edges of algebras in
$\cK$, $c_i,d_i\in\zA_i$ and $\th_i$ the corresponding congruences. 
Let $b'=g(a,b,b)$. By the definition of thin majority edges 
$b\in\Sgg\zA{a,b'}$ and there is a binary term operation $r$ such that
$b=r(a,b')$. By Corollary~\ref{cor:uniform} $g$ is the first projection on 
$\Sg{c_i,d_i}\fac{\th_i}$. Let $t_{ab}(x,y)=r(x,g(x,y,y))$. Then
\begin{align*}
t_{ab}(a,b) &= r(a,g(a,b,b))=b,\\
t_{ab}(c_i,d_i) &= r(c_i,g(c_i,d_i,d_i))\eqc{\th_i}c_i, \quad\text{for $i\in[\ell]$}.
\end{align*}
This means that $t_{ab}$ satisfies the required conditions.

(2) The first result follows from the definition of thin affine edges and the fact that $h$ is a Mal'tsev operation on $\Sg{c,d}\fac{\eta}$ for every affine edge $cd$ and congruence $\eta$ witnessing that. 

Let $c',d'\in\Sg{c,d}$. Since $\zB=\Sg{c,d}\fac{\eta}$ is a module, in particular, it is an Abelian algebra and $h'(x,c^*,c^*)=x$ for all $c^*\in\zB$, the second result follows.
\end{proof}

%%%%%%%%%%%%%%%%%%%%%%%%%%%%%%%%%%
%%%%%%%%%%%%%%%%%%%%%%%%%%%%%%%%%%
\section*{Acknowledgements}

The author would like to thank Marcin Kozik, Libor Barto, Keith Kearnes, and the anonymous referees for a number of helpful hints and suggestions.

%%%%%%%%%%%%%%%%%%%%%%%%%%%%%%%%%%
%%%%%%%%%%%%%%%%%%%%%%%%%%%%%%%%%%
\section*{Declarations}

\subsection*{Data availability}
Data sharing not applicable to this article as datasets were neither generated nor analyzed.

\subsection*{Compliance with ethical standards}
The author is a member of the Editorial Board of Algebra Universalis. Apart from this the author declares that he has no conflict of interest.

\bibliographystyle{spmpsci}
%% \bibliography{one}

\end{document}